\theoremstyle{plain} 
\newtheorem{theorem}{Theorem}[section] 
\newtheorem{lemma}[theorem]{Lemma} 
\newtheorem{corollary}[theorem]{Corollary}
\newtheorem{claim}[theorem]{Claim}
\theoremstyle{definition} 
\newtheorem{definition}[theorem]{Definition}
\newtheorem{remark}[theorem]{Remark} 
\newtheorem{assumption}[theorem]{Assumption}
\newcommand{\thunderstorm}{\ensuremath{%
		\mathchoice{\raisebox{-2pt}{\includegraphics[height=2ex]{}}}
		{\raisebox{-2pt}{\includegraphics[height=2ex]{thunderstorm.png}}}
		{\includegraphics[height=1.5ex]{thunderstorm.png}}
		{\includegraphics[height=1ex]{thunderstorm.png}}
}}
\newcommand{\badthunderstorm}{\ensuremath{%
		\mathchoice{\raisebox{-2pt}{\includegraphics[height=2ex]{}}}
		{\raisebox{-2pt}{\includegraphics[height=2ex]{badthunderstorm.png}}}
		{\includegraphics[height=1.5ex]{badthunderstorm.png}}
		{\includegraphics[height=1ex]{badthunderstorm.png}}
}}
\newcommand{\F}{{\mathbb F}} 
\newcommand{\Z}{\mathbb{Z}}
\newcommand{\C}{\mathbb{C}}
\newcommand{\vv}{{\mathbf{v}}}
\newcommand{\Am}{{\mathbf{A}}}
\newcommand{\Bm}{{\mathbf{B}}}
\newcommand{\Cm}{{\mathbf{C}}}
\newcommand{\Em}{{\mathbf{E}}}
\newcommand{\Mm}{{\mathbf{M}}}
\newcommand{\Tr}{{\sf Tr}}
\newcommand{\As}{{\mathcal{A}}}
\newcommand{\Cs}{{\mathcal{C}}}
\newcommand{\Ds}{{\mathcal{D}}}
\newcommand{\Fs}{{\mathcal{F}}}
\newcommand{\Ms}{{\mathcal{M}}}
\newcommand{\Ss}{{\mathcal{S}}}
\newcommand{\adv}{{\As}}
\newcommand{\advlightning}{\text{\resizebox{1em}{1em}{\badthunderstorm}}}
\newcommand{\genbolt}{\text{\raisebox{0pt}{\thunderstorm}}}
\newcommand{\verbolt}{{\sf Ver}}
\newcommand{\serialnumber}{{s}}
\newcommand{\qsetup}{{\sf SetupQL}}
\newcommand{\qlightning}[1][{}]{{|\text{\Lightning}_{#1}\rangle}}
\newcommand{\qlightningp}[1][{}]{{|\text{\Lightning}'_{#1}\rangle}}
\newcommand{\genmoney}{{\sf Gen}}
\newcommand{\vermoney}{{\sf Ver}}
\newcommand{\owf}{{\sf OWF}}
\newcommand{\iO}{{\sf iO}}
\newcommand{\shO}{{\sf shO}}
\newcommand{\Samp}{{\sf Samp}}
\newcommand{\poly}{{\sf poly}}
\newcommand{\negl}{{\sf negl}}
\newcommand{\sk}{{\sf sk}}
\newcommand{\pk}{{\sf pk}}
\newcommand{\aux}{{\sf aux}}
\newcommand{\gen}{{\sf Gen}}
\newcommand{\ver}{{\sf Ver}}
\newcommand{\sign}{{\sf Sign}}
\newcommand{\comm}{{\sf Commit}}
\newcommand{\reveal}{{\sf Reveal}}
\newcommand{\state}{{\sf State}}
\def\E{\mathop{\mathbb E}}
\newcommand{\wbzexp}{{\tt{W\text{-}BZ\text{-}Exp}}}
\newcommand{\sbzexp}{{\tt{S\text{-}BZ\text{-}Exp}}}
\newcommand{\gyzexp}{{\tt{GYZ\text{-}Exp}}}
\newcommand{\collapseexp}{{\tt{Collapse\text{-}Exp}}}
\newcommand{\ignore}[1]{}
\begin{document}

\title{Quantum Lightning Never Strikes the Same State Twice \\\vspace{0.3em} \large{Or: Quantum Money from Cryptographic Assumptions}}
\author{{\sc Mark Zhandry} \\
Princeton University\\
    	{\tt mzhandry@princeton.edu} }

\date{}
\maketitle

\begin{abstract} Public key quantum money can be seen as a version of the quantum no-cloning theorem that holds even when the quantum states can be verified by the adversary.  In this work, investigate \emph{quantum lightning}, a formalization of ``collision-free quantum money'' defined by Lutomirski et al. [ICS'10], where no-cloning holds \emph{even when the adversary herself generates the quantum state to be cloned}.  We then study quantum money and quantum lightning, showing the following results:

\begin{itemize}
	\item We demonstrate the usefulness of quantum lightning beyond quantum money by showing several potential applications, such as generating random strings with a proof of entropy, to completely decentralized cryptocurrency without a block-chain, where transactions is instant and local.
	
	\item We give win-win results for quantum money/lightning, showing that either signatures/hash functions/commitment schemes meet very strong recently proposed notions of security, or they yield quantum money or lightning.  Given the difficulty in constructing public key quantum money, this gives some indication that natural schemes do attain strong security guarantees.
	
	\item We construct quantum lightning under the assumed multi-collision resistance of random degree-2 systems of polynomials.  Our construction is inspired by our win-win result for hash functions, and yields the first plausible standard model instantiation of a \emph{non-collapsing} collision resistant hash function.  This improves on a result of Unruh [Eurocrypt'16] that requires a quantum oracle.  
	
	\item We show that instantiating the quantum money scheme of Aaronson and Christiano [STOC'12] with \emph{indistinguishability obfuscation} that is secure against quantum computers yields a secure quantum money scheme.  This construction can be seen as an instance of our win-win result for signatures, giving the first separation between two security notions for signatures from the literature.  
\end{itemize}

Thus, we provide the first constructions of public key quantum money from several cryptographic assumptions.  Along the way, we develop several new techniques including a new precise variant of the no-cloning theorem.
\end{abstract}

\clearpage
\setcounter{page}{1}

\section{Introduction}

\label{sec:intro}

Unlike classical bits, which can be copied ad nauseum, quantum bits --- called qubits --- cannot in general be copied, as a result of the Quantum No-Cloning Theorem.  No-cloning has various negative implications to the handling of quantum information; for example it implies that classical error correction cannot be applied to quantum states, and that it is impossible to transmit a quantum state over a classical channel.  On the flip side, no-cloning has tremendous potential for cryptographic purposes, where the adversary is prevented from various strategies that involve copying.  For example, Wiesner~\cite{Wiesner83} shows that if a quantum state is used as a banknote, no-cloning means that an adversary cannot duplicate the note.  This is clearly impossible with classical bits.  Wiesner's idea can also be seen as the starting point for quantum key distribution~\cite{BenBra84}, which can be used to securely exchange keys over a public channel, even against computationally unbounded eavesdropping adversaries.

In this work, we investigate no-cloning in the presence of computationally bounded adversaries, and it's implications to cryptography.  To motivate this discussion, consider the following two important applications:
\begin{itemize}
	\item A public key quantum money scheme allows anyone to verify banknotes.  This remedies a key limitation of Wiesner's scheme, which requires sending the banknote back to the mint for verification.  The mint has a \emph{secret} classical description of the banknote which it can use to verify; if this description is made public, then the scheme is completely broken.  Requiring the mint for verification represents an obvious logistical hurdle.  In contrast, a public key quantum money scheme can be verified locally without the mint's involvement.  Yet, even with the ability to verify a banknote, it is impossible for anyone (save the mint) to create new notes.
	\item Many cryptographic settings such as multiparty computation require a random string to be created by a trusted party during a set up phase.  But what if the randomness creator is not trusted?  One would still hope for some way to verify that the strings it produces are still random, or at least have some amount of (min-)entropy.  At a minimum, one would hope for a guarantee that their string is different from any previous or future string that will be generated for anyone else.  Classically, these goals are impossible.  But quantumly, one may hope to create proofs that are unclonable, so that only a single user can possibly ever receive a valid proof for a particular string.
\end{itemize}

Notice that in both settings above, a computationally unbounded adversary can always break the scheme.  For public key quantum money, the following attack produces a valid banknote from scratch in exponential-time: generate a random candidate quantum money state and apply the verification procedure.  If it accepts, output the state; otherwise try again.  Similarly, in the verifiable randomness setting, an exponential-time adversary can always run the randomness generating procedure until it gets two copies of the same random string, along with two valid proofs for that string.  Then it can give the same string (but different valid proofs) to two different users.  With the current state of knowledge of complexity theory, achieving security against a computationally bounded adversary means computational assumptions are required; in particular, both scenarios imply at a minimum one-way functions secure against quantum adversaries.  

\medskip

Combining no-cloning with computational assumptions is a subtle task.  A typical computational assumption posits that a \emph{classical} problem (that is, the inputs and outputs of the problem are classical) is computationally hard, such as solving general NP problems or finding short vectors in lattices.  In the quantum regime, we ask that the classical problem is hard \emph{even for quantum computers.}  However, it is still desirable for the assumption to involve a classical problem, rather than a quantum problem (quantum states as inputs and outputs).  This is for several reasons.  For one, we want a violation of the assumption to lead to a mathematically interesting result, and this seems much more likely for classical problems.  Furthermore, it is much harder for the research community to study and analyze a quantum assumption, since it will be hard to isolate the features of the problem that make it hard.  Therefore, an important goal in quantum cryptography is 

\begin{center}{\it Combine no-cloning and computational assumptions about\\ classical problems to obtain no-cloning-with-verification.}
\end{center}

In addition to the underlying assumption being classical, it would ideally also be one that has been previously studied by cryptographers, and ideally used in other cryptographic contexts.  This would give the strongest possible evidence that the assumption, and hence application, are secure.

\medskip

For now, we focus on the setting of public key quantum money.  Constructing such quantum money from a classical hardness assumption is a surprisingly difficult task.  One barrier is the following.  Security would be proved by reduction, and algorithm that interacts with a supposed quantum money adversary and acts as an adversary for the the underlying \emph{classical} computational assumption.  Note that the adversary expects as input a valid banknote, which the reduction must supply.  Then it appears the reduction should somehow use the adversary's forgery to break the computational assumption.  But if the reduction can generate a single valid banknote, there is nothing preventing it from generating a second --- recall that the underlying assumption is classical, so we cannot rely on the assumption to provide us with an un-clonable state.  Therefore, if the reduction works, it would appear that the reduction can create two banknotes for itself, in which case it can break the underlying assumption without the aid of the adversary.  This would imply that the underlying assumption is in fact false.

The above difficulties become even more apparent when considering the known public key quantum money schemes.  The first proposed scheme by Aaronson~\cite{CCC:Aaronson09} had no security proof, and was subsequently broken by Lutomirski et al.~\cite{ITCS:LAFGKH10}.  The next proposed scheme by Farhi et al.~\cite{ITCS:FGHLS12} also has no security proof, though this scheme still remains unbroken.  However, the scheme is complicated, and it is unclear which quantum states are accepted by the verification procedure; it might be that there are dishonest banknotes that are both easy to construct, but are still accepted by the verification procedure.

Finally, the third candidate by Aaronson and Christiano~\cite{STOC:AarChr12} actually \emph{does} prove security using a classical computational problem.  However, in order to circumvent the barrier discussed above, the classical problem has a highly non-standard format.  They observe that a polynomial-time algorithm can, by random guessing, produce a valid banknote with some exponentially-small probability $p$, while random guessing can only produce two valid banknotes with probability $p^2$.  Therefore, their reduction first generates a valid banknote with probability $p$, runs the adversary on the banknote, and then uses the adversary's forgery to increase its success probability for some task.  This reduction strategy requires a very carefully crafted assumption, where it is assumed hard to solve a particular problem in polynomial time with exponentially-small probability $p$, even though it can easily be solved with probability $p^2$.  

In contrast, typical assumptions in cryptography involve polynomial-time algorithms and \emph{inverse-polynomial} success probabilities, rather than exponential.  (Sub)exponential hardness assumptions are sometimes made, but even then the assumptions are usually closed under polynomial changes in adversary running times or success probabilities, and therefore make no distinction between $p$ and $p^2$.  In addition to the flavor of assumption being highly non-standard, Aaronson and Christiano's assumption --- as well as their scheme --- have been subsequently broken~\cite{PKC:PenFauPer15,Aaronson16}. 

\medskip

Turning to the verifiable randomness setting, things appear even more difficult.  Indeed, our requirements for verifiable randomness imply an even stronger version of computational no-cloning: an adversary should not be able to copy a state, even if it can verify the state, and \emph{even if it devised the original state itself}.  Indeed, without such a restriction, an adversary may be able to come up with a dishonest proof of randomness, perhaps by deviating from the proper proof generating procedure, that it \emph{can} clone arbitrarily many times.  Therefore, a fascinating objective is to

\begin{center}{\it Obtain a no-cloning theorem, even for settings where the adversary\\ controls the entire process for generating the original state.}
\end{center}

\subsection{This Work: Strong Variants of No-Cloning}

In this work, we study strong variants of quantum no-cloning, in particular public key quantum money, and uncover relationships between no-cloning and various cryptographic applications.

\subsubsection{Quantum Lightning Never Strikes the Same State Twice}

The old adage about lightning is of course false, but the idea nonetheless captures some of the features we would like for the verified randomness setting discussed above.  Suppose a magical randomness generator could go out into a thunderstorm, and ``freeze'' and ``capture'' lightning bolts as they strike.  Every lightning bolt will be different.  The randomness generator then somehow extracts a fingerprint or serial number from the frozen lightning bolt (say, hashing the image of the bolt from a particular direction).  The serial number will serve as the random string, and the frozen lightning bolt will be the proof of randomness; since every bolt is different, this ensures that the bolts, and hence serial numbers, have some amount of entropy.

Of course, it may be that there are other ways to create lightning other than walking out into a thunderstorm (Tesla coils come to mind).  We therefore would like that, no matter how the lightning is generated, be it from thunderstorms or in a carefully controlled laboratory environment, every bolt has a unique fingerprint/serial number. 

We seek a complexity-theoretic version of this magical frozen lightning object, namely a phenomenon which guarantees different outcomes every time, no matter how the phenomenon is generated.  We will necessarily rely on quantum no-cloning --- since in principle a classical phenomenon can be replicated by starting with the same initial conditions --- and hence we call our notion \emph{quantum lightning}.  Quantum lightning, roughly, is a strengthening of public key quantum money where the procedure to generate new banknotes itself is public, allowing anyone to generate banknotes.  Nevertheless, it is impossible for an adversary to construct two notes with the same serial number.  This is a surprising and counter-intuitive property, as the adversary knows how to generate banknotes, and moreover has full control over how it does so; in particular it can deviate from the generation procedure any way it wants, as long as it is computationally efficient.  Nonetheless, it cannot devise a malicious note generation procedure that allows it to construct the same note twice.  This concept of quantum money can be seen as a formalization of the concept of ``collision-free'' public key quantum money due to Lutomirski et al.~\cite{ITCS:LAFGKH10}.  

\medskip

Slightly more precisely, a quantum lightning protocol consists of two efficient (quantum) algorithms.  The first is a bolt generation procedure, or storm, $\genbolt$, which generates a quantum state $\qlightning$ on each invocation.  The second algorithm, $\verbolt$, meanwhile verifies bolts as valid and also extracts a fingerprint/serial number of the bolt.  For correctness, we require that (1) $\verbolt$ always accepts bolts produced by $\genbolt$, (2) it does not perturb valid bolts, and (3) that it will always output the same serial number on a given bolt.

For security, we require the following: it is computationally infeasible to produce two bolts $\qlightning[0]$ and $\qlightning[1]$ such that $\verbolt$ accepts both and outputs identical serial numbers.  This is true for even for \emph{adversarial} storms $\advlightning$, even those that depart from $\genbolt$ or produce entangled bolts, so long as $\advlightning$ is computationally efficient.

\paragraph{Applications.} Quantum lightning as described has several interesting applications:
\begin{itemize}
	\item {\bf Quantum money.}  Quantum lightning easily gives quantum money.  A banknote is just a bolt, with the associated serial number signed by the bank using an arbitrary classical signature scheme.  Any banknote forgery must either forge the bank's signature, or must produce two bolts with the same serial number, violating quantum lightning security.
	\item {\bf Verifiable min-entropy}.  Quantum lightning also gives a way to generate random strings along with a proof that the string is random, or at least has min-entropy.  To see this, consider an adversarial bolt generation procedure that produces bolts such that the associated serial number has low min-entropy.  Then by running this procedure several times, one will eventually obtain in polynomial time two bolts with the same serial number, violating security.
	
	Therefore, to generate a verifiable random string, generate a new bolt using $\genbolt$.  The string is the bolt's serial number, and $\genbolt$ serves as a proof of min-entropy, which is verified using $\verbolt$.
	
	\item {\bf Decentralized Currency.} Finally, quantum lightning yields a simple new construction of totally decentralized digital currency.  Coins are just bolts, except the serial number must hash to a string that begins with a certain number of 0's.  Anyone can produce coins by generating bolts until the hash begins with enough 0's.  Moreover, verification is just $\verbolt$, and does not require any interaction or coordination with other users of the system.  This is an advantage over classical cryptocurrencies such as BitCoin, which require a large public and dynamic ledger, and requires a pool of miners to verify transactions.  Our protocol does have significant limitations relative to classical cryptocurrencies, which likely make it only a toy object.  We hope that further developments will yield a scheme that overcomes these limitations.
\end{itemize}

\subsubsection{Connections to Post-quantum Security}

One simple folklore way to construct a state that can only be constructed once but never a second time is to use a collision-resistant hash function $H$.  First, generate a uniform superposition of inputs.  Then apply the $H$ in superposition, and measure the result $y$.  The state collapses to the superposition $|\psi_y\rangle$ of all pre-images $x$ of $y$.  

Notice that, while it is easy to sample states $|\psi_y\rangle$, it is impossible to sample two copies of the same $|\psi_y\rangle$.  Indeed, given two copies of $|\psi_y\rangle$, simply measure both copies.  Since these are superpositions over many inputs, each state will likely yield a different $x$.  The two $x$'s obtained are both pre-images of the same $y$, and therefore constitute a collision for $H$.

The above idea does not yet yield quantum lightning.  For verification, one can hash the state to get the serial number $y$, but this alone is insufficient.  For example, an adversarial storm can simply choose a random string $x$, and output $|x\rangle$ twice as its two copies of the same state.  Of course, $|x\rangle$ is not equal to $|\psi_y\rangle$ for any $y$.  However, the verification procedure just described does not distinguish between these two states.  

What one needs therefore is mechanism to distinguish a random $|x\rangle$ from a random $|\psi_y\rangle$.  Interestingly, as observed by Unruh~\cite{EC:Unruh16}, this is exactly \emph{the opposite} what one would normally want from a hash function.  Consider the usual way of building a computationally binding commitment from a collision resistant hash function: to commit to a message $m$, choose a random $r$ and output $H(m,r)$.  Classically, this is computationally binding by the collision resistance of $H$: if an adversary can open the commitment to two different values, this immediately yields a collision for $H$.  Unruh~\cite{EC:Unruh16} shows in the quantum setting, collision resistance --- even against quantum adversaries --- is not enough.  Indeed, he shows that for certain hash functions $H$ it may be possible for the adversary to produce a commitment, and only afterward decide on the committed value.  Essentially, the adversary constructs a superposition of pre-images $|\psi_y\rangle$ as above, and then uses particular properties of $H$ to perturb $|\psi_y\rangle$ so that it becomes a different superposition of pre-images of $y$.   Then one simply de-commits to any message by first modifying the superposition and then measuring.  This does not violate the collision-resistance of $H$: since the adversary cannot copy $|\psi_y\rangle$, the adversary can only ever perform this procedure once and obtain only a single de-commitment.  

To overcome this potential limitation, Unruh defines a notion of \emph{collapsing} hash functions.  Roughly, these are hash functions for which $|x\rangle$ and $|\psi_y\rangle$ are \emph{indistinguishable}.  Using such hash functions to build commitments, one obtains \emph{collapse-binding} commitments, for which the attack above is impossible.  Finally, he shows that a random oracle is collapse binding.

More generally, an implicit assumption in many classical settings is that, if an adversary can modify one value into another, then it can produce both the original and modified value simultaneously.  For example, in a commitment scheme, if a classical adversary can de-commit to both 0 or 1, it can then also simultaneously de-commit to both 0 and 1 by first de-committing to 0, and then re-winding and de-committing to 1.  Thus it is natural classically to require that it is impossible to simultaneously produce de-commitments to both 0 and 1.  Similarly, for signatures, if an adversary can modify a signed message $m_0$ into a signed message $m_1$, then it can simultaneously produce two signed messages $m_0,m_1$.  This inspires the Boneh-Zhandry~\cite{EC:BonZha13,C:BonZha13} definition of security for signatures in the presence of quantum adversaries, which says that after seeing a (superposition of) signed messages, the adversary cannot produce two signed messages.

However, a true quantum adversary may be able, for some schemes, to set things up so that it can modify a (superposition) of values into one of many possibilities, but still only be able to ever produce a single value.  For example, it many be that an adversary sees a superposition of signed messages that always begin with 0, but somehow modifies the superposition to obtain a signed message that begins with a 1.  This limitation for signatures was observed by Garg, Yuen, and Zhandry~\cite{C:GarYueZha17}, who then give a much stronger notion to fix this issue\footnote{Garg et al. only actually discuss message authentication codes, but the same idea applies to signatures}.

\medskip

Inspired by the above observations, we formulate a series of win-win results for quantum lightning and quantum money.  In particular, we show, roughly,

\begin{theorem}[informal] If $H$ is a hash function that is collision resistant against quantum adversaries, then either (1) $H$ is collapsing or (2) it can be used to build quantum lightning without any additional computational assumptions.\footnote{Technically, there is a slight gap due to the difference between \emph{non-negligible} and \emph{inverse polynomial}.  Essentially what we show is that the theorem holds for fixed values of the security parameter, but whether (1) or (2) happens may vary across different security parameters.}
\end{theorem}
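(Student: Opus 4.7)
My plan is to prove the contrapositive: assuming $H$ is collision-resistant but not collapsing, I will construct a quantum lightning scheme from $H$ alone. The natural candidate is the one motivated in the preceding text. The storm $\genbolt$ prepares the uniform superposition $\sum_x |x\rangle$, computes $H$ coherently into an ancilla, measures the ancilla to obtain a serial number $y$, and keeps the residual state $|\psi_y\rangle \propto \sum_{x \in H^{-1}(y)} |x\rangle$ as the bolt. The verifier $\verbolt$ first coherently re-computes $H$ and confirms via a projective measurement that the result equals the claimed $y$, and then invokes the non-collapsing distinguisher $D$ (promised by the failure of collapsing) to test that the state is a genuine superposition rather than a classical $|x\rangle$.

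First I would unpack the negation of collapsing into an efficient predicate $D$ on the input register with the property that, averaged over $y$ drawn by $\genbolt$, $D$ accepts $|\psi_y\rangle$ with probability $p^\ast$ and accepts the post-measurement classical state with probability at most $p^\ast - \epsilon$ for some non-negligible $\epsilon$. I would implement $D$ non-destructively by running it coherently into a flag qubit, measuring the flag, and uncomputing the workspace, so the bolt survives acceptance up to the usual gentle-measurement slack. Correctness of $(\genbolt,\verbolt)$ thus holds in a weak quantitative sense: honest bolts pass both the hash check and the $D$-check with noticeable probability.

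For security, suppose an adversary produces a joint state on two bolt registers such that $\verbolt$ accepts both with a common serial number $y$ with non-negligible probability. Conditioned on this event, I would measure both input registers to obtain classical preimages $x_1, x_2 \in H^{-1}(y)$. If $\Pr[x_1 \neq x_2]$ is non-negligible, $(x_1,x_2)$ is a collision contradicting collision resistance. Otherwise $x_1 = x_2$ almost always, so each accepted bolt is, conditional on its hash being $y$, essentially the classical state $|x_1\rangle$; but then the last step of $\verbolt$ is running $D$ on a classical preimage, which by construction accepts with probability strictly below that on $|\psi_y\rangle$. Quantifying this gap against the assumed acceptance probability yields the desired contradiction, completing the reduction.

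The main obstacle is the quantitative mismatch between the weak non-negligible advantage of $D$ and the standard completeness/soundness thresholds expected of a cryptographic primitive. I cannot amplify $D$ by naive repetition because bolts are unclonable, so I would instead use a projective-implementation or Marriott--Watrous-style estimator applied once to the bolt in order to sharpen its binary decision. Even then, amplification only works when $\epsilon$ is at least inverse-polynomial, which is the source of the slight gap flagged in the theorem's footnote: for each fixed security parameter either $H$ is collapsing or it yields lightning, but which of the two conclusions holds may oscillate across parameters, so one does not obtain a single uniform polynomial-time scheme. I expect most of the technical effort to lie in making this quantitative bridge precise and in handling the non-uniform distribution of $|H^{-1}(y)|$ across serial numbers $y$.
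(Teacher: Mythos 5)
Your high-level skeleton matches the paper's: prove the contrapositive, take the bolt to be the residual superposition $|\psi_y\rangle$ of preimages, verify with a hash check plus the non-collapsing distinguisher, and extract a collision from two accepted bolts with the same serial number. However, there are two genuine gaps, and both are exactly the points where the paper's proof does its real work.

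First, the amplification. You dismiss repetition ``because bolts are unclonable,'' but the paper amplifies at \emph{generation} time, not by cloning: it replaces $H$ by $H^{\otimes r}$ and defines a single bolt to be $r$ independent runs of the distinguisher's first phase, so the second phase $A_1^{\otimes r}$ can take a majority/threshold vote over $r$ fresh samples and achieve error $2^{-\lambda}$ by Hoeffding. A Marriott--Watrous-style projective implementation applied once to a single bolt cannot substitute for this: even a perfect estimator of the acceptance probability cannot resolve a $1/p(\lambda)$ gap from one sample, since the two acceptance probabilities $q_0,q_1$ differ only by $1/p(\lambda)$ and a single binary outcome carries too little information. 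Without the parallel-repetition step your verifier has only an inverse-polynomial completeness/soundness gap and never becomes a lightning scheme.

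Second, and more seriously, your soundness argument assumes the distinguisher behaves on adversarial states as it does on honest ones. You argue that if the accepted bolt is essentially classical then ``$D$ accepts with probability strictly below that on $|\psi_y\rangle$'' --- but $D$'s guarantee is only about the two distributions arising in the honest collapsing experiment (including $A_0$'s own auxiliary registers). An adversarial storm controls the auxiliary registers and could craft a classical-input state that $D$ accepts with probability $1$. The paper explicitly flags this (``we do not know how $A_1^{\otimes r}$ will behave on non-valid bolts'') and resolves it with a randomized \emph{two-test} verifier: ${\sf Test}_0$ runs the distinguisher directly and accepts on output $0$; ${\sf Test}_1$ first copies out the input register and accepts on output $1$; the verifier picks one at random. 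If the input register is concentrated ($1-\alpha$ weight on one preimage with $\alpha\le 1/200$), the extra measurement in ${\sf Test}_1$ barely disturbs the state, so the two tests are nearly the same physical process with \emph{opposite} acceptance conditions, forcing average acceptance $\le 3/4$ \emph{regardless} of what the distinguisher does. This yields the dichotomy (collision with probability $\ge 1/200$, or acceptance $\le 3/4$), and a final $\lambda$-fold repetition drives the soundness error to $(3/4)^\lambda$. Your proposal needs some device playing this role; as written, the soundness step does not go through.
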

The construction of quantum lightning is inspired by the outline above.  One difficulty is that above we needed a perfect distinguisher, whereas a collapsing adversary may only have a non-negligible advantage.  To obtain an actual quantum lightning scheme, we need to repeat the scheme in parallel many times to boost the distinguish advantage to essentially perfect.  Still, defining verification so that we can prove security is a non-trivial task.  Indeed, while it is possible to define a verification procedure that accepts valid bolts, it is much harder to analyze what sorts of invalid bolts might be accepted by the verification procedure.  For example, it may be that a certain sparse distribution on $|x\rangle$ is actually accepted by the verification procedure, even though a random $|x\rangle$ is almost always rejected.  Using a careful argument, we show nonetheless how to verify and prove security.  We also show that

\begin{theorem}[informal] Any \emph{non-interactive} commitment scheme that is computationally binding against quantum adversaries is either collapse-binding, or it can be used to build quantum lightning without any additional computational assumptions.
\end{theorem}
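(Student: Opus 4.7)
The plan is to mimic the hash-function win-win result, treating the opening $(m,r)$ of a commitment as analogous to a preimage of a hash value, and treating the message register $m$ as analogous to the $x$-register in the collapsing definition. Fix a non-interactive commitment $\comm$ that is computationally binding against quantum adversaries but not collapse-binding. Then by definition there is a QPT adversary $\As$ that outputs a commitment $c$ together with a quantum state $|\phi\rangle$ on registers $(M,R)$ supported on valid openings $(m,r)$ of $c$, and a QPT distinguisher $D$ with non-negligible advantage $\varepsilon = \varepsilon(\lambda)$ between (i) $(c,|\phi\rangle)$ as is and (ii) $(c,|\phi\rangle)$ after measuring the $M$ register. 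I will take $(\As,D)$ as the starting point.

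The construction of $\genbolt$ is to run $\As$ in parallel $n = \poly(\lambda)$ times, producing commitments $(c_1,\ldots,c_n)$ and a joint state $|\phi_1\rangle\otimes\cdots\otimes|\phi_n\rangle$; the serial number is $(c_1,\ldots,c_n)$, and the bolt is the joint state. The verifier $\verbolt$ proceeds in two reversible stages. First, on each slot $i$ it coherently computes $\comm(m_i;r_i)$ into an ancilla, checks equality with $c_i$, and uncomputes; this accepts exactly when each $|\phi_i\rangle$ is supported on valid openings of $c_i$. Second, it runs a threshold-amplified version of $D$ on each slot to test that the state is genuinely coherent over $M$ rather than (even partially) measured, using the standard trick of boosting distinguishing advantage $\varepsilon$ to overwhelming by majority vote on independent invocations and rejecting accordingly.

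For security, suppose an adversary produces two bolts that both pass $\verbolt$ and share the serial number $(c_1,\ldots,c_n)$. Because verification passed the coherence test on every slot, each slot $i$ in each bolt must carry a state whose reduced distribution on $M$ places mass on at least two distinct messages with inverse-polynomial weight (otherwise $D$ could not have non-negligible advantage there). Measuring the $M$ registers of both bolts at every slot therefore yields, per slot, two independent samples from a distribution that is not a point mass, so with inverse-polynomial probability in that slot they disagree; boosting via $n$ independent slots makes a disagreement appear with overwhelming probability in some coordinate $i^{*}$. Together with the simultaneous measurement of the $R$ register, this yields two valid openings of $c_{i^{*}}$ to distinct messages, contradicting computational binding.

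The main obstacle, exactly as in the hash-function theorem, is analyzing the verifier on \emph{dishonest} bolts: $D$ is only guaranteed to work on the specific distribution produced by $\As$, whereas an adversarial storm could prepare correlated, sparsely supported, or otherwise pathological states that sneak past $D$ while being essentially classical on $M$. Overcoming this requires the same ``collapse via projection onto the accept subspace'' style argument used for hash functions: show via a hybrid that any state the amplified verifier accepts with non-negligible probability can, after projection, be assumed to have near-full coherence across $M$, so that the two-measurement extraction argument above goes through. The rest --- correctness of honest bolts, non-perturbation by $\verbolt$, and the gap between non-negligible and inverse-polynomial advantage of $D$ (which forces the statement to hold per security parameter rather than uniformly) --- is analogous to the hash-function case.
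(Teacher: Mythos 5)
Your high-level architecture matches the paper's (parallel repetition of the non-collapsing adversary, serial number equal to the commitment string(s), a $D$-based coherence check, and extraction of two openings from two accepted bolts with the same serial number), and you correctly flag the central obstacle: $D$ is only guaranteed to distinguish on the honest distribution, so an adversarial storm can feed it pathological states. But your proposed resolution does not close that gap, and the step ``because verification passed the coherence test on every slot, each slot must place inverse-polynomial mass on at least two messages (otherwise $D$ could not have non-negligible advantage there)'' is exactly the non sequitur the paper warns against: $D$'s advantage is a statement comparing two specific input distributions, not a pointwise promise, so nothing prevents $D$ from outputting ``unmeasured'' on a completely classical state $|m,r\rangle$. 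With your one-sided verifier (run amplified $D$, accept iff it reports coherence), an adversary could submit two copies of such a classical opening, pass verification with the same serial number, and yield no binding violation when measured. There is no ``projection onto the accept subspace'' hybrid that rescues this, because the accept set of an arbitrary $D$ can simply contain classical states.

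The paper's actual fix is different and is the crux of the proof: the verifier flips a coin and applies one of two \emph{complementary} tests --- ${\sf Test}_0$ runs the (amplified) distinguisher on the state as-is and accepts iff it reports ``unmeasured,'' while ${\sf Test}_1$ first (reversibly) measures the opening register and accepts iff the distinguisher reports ``measured.'' A geometric lemma (Claim~\ref{claim:measurement}) shows that if the opening register has weight $1-\alpha$ on a single outcome, the measurement in ${\sf Test}_1$ moves the state by at most $\sqrt{2\alpha}$, so for near-classical states the two tests are nearly the same experiment with opposite acceptance conditions; their acceptance probabilities then sum to at most roughly $3/2$, and the random choice caps the overall acceptance of a near-classical bolt at $3/4$. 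This yields the dichotomy ``either measuring two accepted bolts gives distinct openings with probability at least $1/200$, or each bolt passes with probability at most $3/4$,'' and a further $\lambda$-fold repetition of whole bolts (a second layer of repetition, separate from the Hoeffding amplification of $D$ across the $r$ parallel commitments) drives $3/4$ down to negligible. Without the randomized complementary test and the $\sqrt{2\alpha}$ perturbation bound, your security argument does not go through; the remaining differences from the paper (using the crs as the output of $\qsetup$, and noting that non-interactivity is what lets the adversarial storm play the receiver's role harmlessly) are minor.
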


The above theorem crucially relies on the commitment scheme being non-interactive: the serial number of the bolt is the sender's single message, along with his private quantum state.  If the commitment scheme is not collapse-binding, the sender's private state can be verified to be in superposition.  If a adversary produces two identical bolts, these bolts can be measured to obtain two openings, violating computational binding.  In contrast, in the case of interactive commitments, the bolt should be expanded to the transcript of the interaction between the sender and receiver.  Unfortunately, for quantum lightning security, the transcript is generated by an adversary, who can deviate from the honest receiver's protocol.  Since the commitment scheme is only binding when the receiver is run honestly, we cannot prove security in this setting.

Instead, we consider the weaker goal of constructing public key quantum money.  Here, since the mint produces bolts, the original bolt is honestly generated.  The mint then signs the transcript using a standard signature scheme (which can be built from one-way functions, and hence implied by commitments).  If the adversary duplicates this banknote, it is duplicating an honest commitment transcript, but the note can be measured to obtain two different openings, breaking computational binding.  This gives us the following:

\begin{theorem}[informal] Any interactive commitment scheme that is computationally binding against quantum adversaries is either collapse-binding, or it can be used to build public key quantum money without any additional computational assumptions.
\end{theorem}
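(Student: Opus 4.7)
The plan is to prove the contrapositive: assume the interactive commitment scheme is computationally binding but not collapse-binding, and construct public key quantum money. The non-collapsing assumption supplies an efficient sender strategy $A$ and a distinguisher $D$ such that, after $A$ interacts honestly with the receiver to produce a transcript $T$ and residual sender state $|\psi\rangle$, $D$ distinguishes $|\psi\rangle$ from its post-opening-measurement counterpart with some non-negligible advantage $\epsilon$. A short calculation shows the opening distribution extracted from $|\psi\rangle$ has no point mass above $1-\Omega(\epsilon)$: otherwise the measured and unmeasured states would be trace-close and $D$ could not distinguish them. Since binding commitments imply one-way functions, we also have a standard EUF-CMA signature scheme $(\gen,\sign,\ver)$. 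To mint a banknote, the mint plays $A$ against an honest internal receiver in $n=\poly(\lambda)$ parallel copies, obtaining transcripts $T_1,\dots,T_n$ and joint sender state $|\Psi\rangle$; it signs $T=(T_1,\dots,T_n)$ as $\sigma$ and outputs $(T,\sigma,|\Psi\rangle)$. Verification checks $\sigma$ and coherently runs $D$ on each coordinate, accepting iff the fraction of $D$-accepting coordinates exceeds a threshold strictly between $D$'s measured and unmeasured acceptance probabilities; Chernoff then yields correctness together with an amplified soundness gap.

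\textbf{Security reduction.} Given a forger that, from one valid note $(T,\sigma,|\Psi\rangle)$, outputs two valid notes $(T^{(b)},\sigma^{(b)},|\Psi^{(b)}\rangle)$, $b\in\{0,1\}$, split into cases. If either $T^{(b)}\neq T$, then $\sigma^{(b)}$ signs a never-queried message, immediately breaking EUF-CMA. Otherwise $T^{(0)}=T^{(1)}=T$, and the amplified verification guarantees, with non-negligible probability, a coordinate $i^\ast\in[n]$ at which both $|\Psi^{(b)}\rangle$ locally pass $D$. The binding adversary $B$ plays the sender in a single external challenge commitment, guesses $i^\ast$ (a $1/n$ loss), embeds the external challenge into coordinate $i^\ast$ while internally simulating the remaining $n-1$ commitments and the signing key, runs the forger, measures the opening registers of both output copies at coordinate $i^\ast$, and outputs the resulting pair as two openings of the $i^\ast$-th transcript whenever they differ.

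\textbf{Main obstacle.} The crux is proving that the two measured openings at $i^\ast$ differ with non-negligible probability. Intuitively, both forged states passing $D$'s test at $i^\ast$ means they resemble genuine superpositions, and the entropy bound rules out collapsing to a single opening. Formalizing this requires a hybrid: if the joint post-verification measurement at $i^\ast$ always collapsed to a single opening $op^\ast$, the entire forger-plus-measure pipeline could be repackaged as a distinguisher for the non-collapsing experiment with non-negligible advantage, contradicting the entropy lower bound established at the outset. Producing a tight reduction---in particular, carefully tracking the entanglement between the two output copies of the sender state through the post-selection induced by verification and signature simulation---is the main technical hurdle.
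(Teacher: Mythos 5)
Your overall architecture matches the paper's: mint runs the non-collapsing sender honestly, signs the transcript, repeats in parallel, and a forger either breaks the signature (new transcript) or hands you two states that should yield two openings of the same commitment. The fatal gap is in your verification procedure and, consequently, in the step you yourself flag as the ``main obstacle.'' Your verifier only runs $D$ coherently on the (unmeasured) state and accepts when the fraction of $D$-accepting coordinates clears a threshold. But $D$'s guarantee is only about the two states arising in the honest collapsing experiment; on adversarially crafted states its behavior is unconstrained. A forger can therefore discard the mint's state entirely and output two identical \emph{classical} banknotes --- each coordinate supported on a single opening, with auxiliary registers tuned so that $D$ outputs ``unmeasured'' --- which pass your test on every coordinate, are trivially duplicable, and yield the same opening twice when measured, so no binding violation is extracted. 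Your proposed rescue (repackage the forger as a collapse distinguisher to contradict the entropy bound ``established at the outset'') does not close this: that entropy bound concerns the \emph{honest} sender's residual state, and nothing ties the forger's output states to it.

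The paper's verifier instead flips a coin and applies one of two tests: ${\sf Test}_0$ runs $D$ directly and demands the ``unmeasured'' answer; ${\sf Test}_1$ first (reversibly) measures the opening register, then runs $D$ and demands the ``measured'' answer. The point (Claim~\ref{claim:measurement} plus Lemma~\ref{lemma:distance}) is that if a candidate state puts weight $1-\alpha$ on a single opening with $\alpha\leq 1/200$, the extra measurement in ${\sf Test}_1$ barely disturbs it, so the two tests act on essentially the same state with \emph{opposite} acceptance criteria and the average acceptance probability is at most $3/4$. This yields the unconditional dichotomy --- for \emph{any} candidate pair, either measuring produces a collision with probability $\geq 1/200$ or acceptance is at most $3/4$ --- which parallel repetition then amplifies; this is precisely the missing ingredient. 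Two smaller points: your banknote must also carry $\state_R$ so that verifiers can run $\reveal_R$ and $D$, which is why the paper's formal statement requires the scheme to be \emph{publicly verifiable} (binding must survive publishing $\state_R$); and your guess-$i^\ast$-and-embed reduction to single-instance binding is fine and slightly more explicit than the paper's sketch.
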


\noindent Finally, we extend these ideas to a win-win result for quantum money and digital signatures:

\begin{theorem}[informal] Any one-time signature scheme that is Boneh-Zhandry secure is either Garg-Yuen-Zhandry secure, or it can be used to build public key quantum money without any additional computational assumptions.
\end{theorem}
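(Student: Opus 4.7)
The plan is to mirror, for signatures, the reasoning used in the commitment and hash-function win-win results. Suppose the one-time signature scheme $\Sigma=(\gen,\sign,\ver)$ is Boneh--Zhandry secure but \emph{not} Garg--Yuen--Zhandry secure. Then there is an efficient quantum adversary $\As$ with oracle access to $\sign(sk,\cdot)$ that, after making a single superposition query supported on a set of messages $S_0$, outputs on its final measurement a valid signed pair $(m^{*},\sigma^{*})$ with $m^{*}\notin S_0$ with non-negligible probability $\epsilon$. (Following the example in the introduction, one may imagine $S_0 = \{$messages beginning with $0\}$ and $m^{*}$ beginning with $1$.) The idea is to intercept $\As$'s final measurement and use the resulting pre-measurement quantum state as a banknote, so that cloning the note and then measuring independently would yield two distinct valid signed messages, contradicting BZ security.

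Concretely, to mint a banknote, sample $(vk,sk)\leftarrow\gen(1^\lambda)$, run $\As$ internally while using $sk$ to honestly answer $\As$'s single signing query, and record the pre-measurement state $|\phi\rangle$ on $\As$'s output (message, signature) registers; the banknote is $(vk,|\phi\rangle)$, with serial number $vk$ (optionally extended by a classical signature over additional metadata, using a standard quantum-secure scheme built from the one-way function underlying $\Sigma$). Verification applies $\ver(vk,\cdot,\cdot)$ coherently into an ancilla qubit, applies a coherent projection onto messages outside $S_0$, and measures only the accept bit of these checks; measuring only the accept bit is essential, as the goal is to certify that $|\phi\rangle$ is in superposition over \emph{many} valid forbidden signed messages without collapsing to any one of them.

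A single instance gives only non-negligible distinguishing advantage between a genuine superposition and a classical collapsed state, so I would use parallel repetition, minting $t=\poly(\lambda)$ independent key pairs $(vk_i,sk_i)$, running $\As$ in parallel on each, and declaring a banknote valid only when all $t$ coherent checks accept. Completeness on honest notes is routine; the delicate point, exactly as in the collision-resistance case, is showing that no pathological state (for example a sparse distribution concentrated on a single $(m,\sigma)\notin S_0$ that nonetheless slips through the coherent $S_0$-projection) can pass the repeated check, so the accepted subspace is essentially the honest one.

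For security, suppose a polynomial-time cloner $\Bs$ produces from one honest banknote two notes $(vk,|\phi_0\rangle),(vk,|\phi_1\rangle)$ that both verify; measure the signed-message registers of each to obtain $(m_0,\sigma_0)$ and $(m_1,\sigma_1)$. Because verification confines each $|\phi_b\rangle$ to the span of valid signed messages outside $S_0$, and the parallel-repetition argument forces genuine amplitude spread, the two measurements yield $m_0\neq m_1$ with non-negligible probability. Since the mint's simulation of $\As$ used only one query to $\sign(sk,\cdot)$, this produces two forgeries from one query, contradicting BZ one-time security. The main obstacle is exactly the verification design in the previous paragraph: defining $\ver$ so that one can rigorously prove that any state accepting on $t$ parallel repetitions must, upon independent measurement of two clones, yield distinct signed messages with non-negligible probability. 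This is the same analytic challenge as in the quantum-lightning-from-hashing theorem, and I would attack it by the same parallel-repetition-plus-projection technique, substituting the non-GYZ attacker $\As$ for the non-collapsing distinguisher used there.
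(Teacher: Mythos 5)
There is a genuine gap, and it sits exactly where you flagged "the delicate point": your verification procedure has no mechanism that actually certifies superposition. You start from the wrong characterization of a GYZ violation. GYZ security is a simulation-based definition: a violation means there is a distinguisher $D$ telling the adversary's output state apart from that of every $(m,\sigma)$-respecting simulator; it does \emph{not} mean the adversary outputs a valid $(m^*,\sigma^*)$ with $m^*$ outside the query support $S_0$ (that is only the motivating example, and moreover an adversary can violate GYZ while staying entirely inside $S_0$ by maintaining a detectable coherence). Because you discard $D$, your verifier --- coherent $\ver$ plus a coherent projection onto messages outside $S_0$, measuring only the accept bit --- is passed with probability $1$ by the classical state $|m^*,\sigma^*\rangle$ for any single valid forbidden pair, and parallel repetition does nothing to change this: each of the $t$ instances independently accepts such a classical state. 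So a quantum money adversary can simply measure the honest note, obtain $(m^*_i,\sigma^*_i)$ for each instance, and output two identical classical copies; both verify, and measuring them yields the \emph{same} signed message twice, so no BZ forgery results. The claim that "the parallel-repetition argument forces genuine amplitude spread" is asserted but nothing in your verifier enforces it.

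The paper's proof supplies precisely the missing ingredient. First it shows (using BZ security) that the GYZ distinguisher $D$ must already distinguish the adversary's final state from the state obtained by measuring the $(m,\sigma)$ registers --- i.e., the scheme is "non-collapsing" in the hash-function sense. After amplifying $D$ by parallel repetition, verification runs one of two tests chosen at random: ${\sf Test}_0$ runs $D$ on the state as-is and expects output $0$; ${\sf Test}_1$ first measures $(m,\sigma)$ and expects output $1$. A state concentrated on a single $(m,\sigma)$ behaves identically in both tests up to the flipped acceptance condition, so it passes the random test with probability at most $3/4$ (via the $\sqrt{2\alpha}$ measurement-disturbance claim), which parallel repetition drives to negligible; any state that does pass must, upon measurement, yield two distinct valid signed messages with probability at least $1/200$, which is what feeds the BZ reduction. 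Your proposal would need to reintroduce $D$ into verification in this way; without it the construction is insecure even against the trivial measure-and-copy attack.
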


Given the difficulty of constructing public key quantum money (let alone quantum lightning), the above results suggest that most natural constructions of collision resistant hash functions are likely already collapsing, with analogous statements for commitment schemes and signatures.  If they surprisingly turn out to not meet the stronger quantum notions, then we would immediately obtain a construction of public key quantum money from simple tools.

\medskip

Notice that using our win-win results give a potential route toward proving the security of quantum money/lightning in a way that avoids the barrier discussed above.  Consider building quantum money from quantum lightning, and in turn building quantum lightning from a collision-resistant non-collapsing hash function.  Recall that a banknote is a bolt, together with the mint's signature on the bolt's serial number.  A quantum money adversary either (1) duplicates a bolt to yield two bolts with the same serial number (and hence same signature), or (2) produces a second bolt with a different serial number, as well as a forged signature on that serial number.  Notice that (2) is impossible simply by the unforgeability of the mint's signature.  Meanwhile, in proving that (1) is impossible, our reduction actually \emph{can} produce arbitrary quantum money states (for this step, we assume the reduction is given the signing key).  The key is that the reduction on its own \emph{cannot} produce the same quantum money state twice, but it \emph{can} do so using the adversary's cloning abilities, allowing it to break the underlying hard problem.

\subsubsection{Constructing Quantum Lightning}

We now turn to actually building quantum lightning, and hence quantum money.  Following our win-win results, we would like a \emph{non-collapsing} collision-resistant hash function.  Unfortunately, Unruh's counterexample does not yield an explicit construction.  Instead, he builds on techniques of~\cite{FOCS:AmbRosUnr14} to give a hash function relative to a \emph{quantum} oracle\footnote{that is, the oracle itself performs quantum operations}.  As it is currently unknown how to obfuscate quantum oracles with a meaningful notion of security, this does not give even a candidate construction of quantum lightning.  Instead, we focus on specific standard-model constructions of hash functions.  Finding suitable hash functions is surprisingly challenging; we were only able to find a single candidate, and leave finding additional candidates as a challenging open problem.

Our construction is based on low-degree hash functions, which were previously studied by Ding and Yang~\cite{DingYang08} and Applebaum et al.~\cite{ITCS:AHIKV17} with the goal of constructing very efficient hash functions.  Specifically, we will use hash functions defined by a random set of degree-2 polynomials over $\F_2$.  That is, the hash function is specified by a set of polynomials $P_1,\dots,P_n$ in $m$ variables over $\F_2$, where each $P_i$ is of degree 2.  The output of $H(x)$ is just the set of $i$ outputs $P_i(x)$.  

As we will see below, such hash functions are not collision resistant, and therefore are not immediately applicable to our win-win results.  Nonetheless, these hash functions will serve as a useful starting point.  In particular, we show that if $m\approx n^2$, then $H$ is not collapsing.  That is, it is possible to distinguish $|\psi_y\rangle$ (the uniform superposition over pre-images of $y$) from $|x\rangle$, for a random $x,y$.  If the $P_i$'s were linear even for $m=\Omega(n)$, this would be trivial by taking the quantum Fourier transform (which over $\F_2$ is just the Hadamard gate).  Since the $P_i$'s have degree 2, our distinguisher is more complicated.  Instead, we observe that if we perform the Hadamard gate to just one input qubit and measure, the result is a derivative of some unknown linear combination of the $P_i$s; the derivative therefore effectively reduces the problem to a linear one, which we can solve more easily.  From this measurement, we obtain a single linear equation in $n$ unknowns, whose coefficients are determined by the remaining $m-1$ inputs bits.  

We cannot simply repeat the process to get more linear equations, since the Hadamard gate introduced a phase term.  Moreover, if we record the coefficients of the linear equation, this amounts to a measurement on the remaining $m-1$ inputs bits, further perturbing the state.  Nevertheless, we show how to correct the phase term and the measurement by performing a linear transformation to the $m-1$ remaining inputs.  This process consumes $n$ inputs bits, resulting in a system on $m-n-1$ qubits.  However, now the state is ready to generate another linear equation in the same $n$ unknowns.  We then repeat approximately $n$ times, and the resulting system of equations has a solution if and only if the original state was $|\psi_y\rangle$ for some $y$.  Overall, this distinguisher requires $m\approx n^2$ input bits.  

Unfortunately, degree-2 hash functions are not collision resistant, as shown by Ding and Yang~\cite{DingYang08} and Applebaum et al.~\cite{ITCS:AHIKV17}.  This is because the equations $P_i(x_0)=P_i(x_1)$ can be linearized and then solved using linear algebra.  This attack exists even if $m=n+1$, and therefore certainly applies to our setting where $m\approx n^2$.  Moreover, we show how to extend the attack so that if $m\approx rn$, it is possible to construct a dimension-$r$ affine space of colliding inputs, thus giving $2^r$ colliding inputs.  If we insist on the colliding inputs having no affine relations, then we show that it is still possible to construct $r+1$ colliding inputs.  This means that, for our non-collapsing hash function above, not only is it non-collision resistant, but it is possible to find many colliding inputs.  Even worse, we can use our attack to generate $r+1$ identical copies of $|\psi_y\rangle$ for the same $y$, meaning our scheme so far is certainly not a quantum lightning scheme.

\smallskip

Despite the above attacks, Applebaum et al. conjecture that random degree-2 polynomials are still one-way, and even second-pre-image resistant.  This is because the above attacks crucially have little control over the output $y$ or the particular colliding inputs that are generated.  We conjecture moreover that it is impossible to devise $2(r+1)$ colliding inputs that have no affine relations.  This conjecture seems plausible in light of known attacks; note that the best-known attacks cannot even generate $r+2$ non-affine colliding inputs.  

Using this conjecture, we obtain a quantum lightning scheme as follows.  Our bolt is $\qlightning=|\psi_y\rangle^{\otimes (r+1)}$ for some $y$, which we generate using the attack above.  For verification, if we set $r\approx n$, we can use our distinguisher on each copy of $|\psi_y\rangle$ to verify that it has the proper form.  We also verify that each of the $r$ states have the same hash $y$ under $H$.  The serial number for $\qlightning$ is $y$.  

If an adversary constructs two states $\qlightning[0],\qlightning[1]$ which both pass verification and have the same serial number $y$, we show that the joint state \emph{after} verification must be precisely $|\psi_y\rangle^{\otimes 2(r+1)}$.  By measuring this state, we obtain $2(r+1)$ random pre-images.  With overwhelming probability, these colliding inputs will have no affine relationships.  Therefore, any quantum lightning adversary violates our conjecture.

\medskip

We note that our quantum lightning scheme is not an instance of a collision-resistant non-collapsing hash function.  Nonetheless, we show that it can be modified to obtain such a function.  Security will still be based on the same exact computational assumption, though the derived hash function will no longer be a degree-2 hash function.

Thus we obtain the first construction of quantum lightning, and hence public key quantum money, based on a plausible conjecture about the quantum hardness of a \emph{classical} computational problem.  Moreover our problem has to some extent been previously studied in the literature.  We also give the first plausible \emph{standard model} construction of a hash function that is not collapsing.  In the process, we establish a general approach to constructing quantum lightning/money, by constructing collision resistant hash functions that are not collapsing.  We leave as an interesting open question whether such hash functions can be constructed from more mainstream assumptions, such as the hardness of lattice problems or on generic assumptions.

\subsection{Quantum Money From Obfuscation}

Finally, we consider the simpler task of constructing public key quantum money.  One possibility is based on Aaronson and Christiano's broken scheme~\cite{STOC:AarChr12}.  In their scheme, a quantum banknote $|\$\rangle$ is a uniform superposition over some subspace $S$, that is known only to the bank.  The quantum Fourier transform of such a state is the uniform superposition over the dual subspace $S^\bot$.  This gives a simple way to check the banknote: test if $|\$\rangle$ lies in $S$, and whether it's Fourier transform lies in $S^\bot$.  Aaronson and Christiano show that the only state which can pass verification is $|\$\rangle$.  

To make this scheme public key, one gives out a mechanism to test for membership in $S$ and $S^\bot$, without actually revealing $S,S^\bot$.  This essentially means obfuscating the functions that decide membership.  Aaronson and Christiano's scheme can be seen as a candidate obfuscator for subspaces.  While unfortunately their obfuscator has since been broken, one may hope to instantiate their scheme using recent advances in general-purpose program obfuscation, specifically indistinguishability obfuscation (iO)~\cite{C:BGIRSVY01,FOCS:GGHRSW13}.  

On the positive side, Aaronson and Christiano show that their scheme is secure if the subspaces are provided as quantum-accessible black boxes, giving hope that some obfuscation of the subspaces will work.  Unfortunately, proving security relative to iO appears a difficult task.  One limitation is the barrier discussed above, that any reduction must be able to produce a valid banknote, which means it can also produce two banknotes.  Yet at the same time, it somehow has to use the adversary's forgery (a second banknote) to break the iO scheme.  Note that this situation is different from the quantum lightning setting, where there were many valid states, and no process could generate the same state twice.  Here, there is a single valid state (the state $|\$\rangle$), and it would appear the reduction must be able to construct this precise state exactly once, but not twice.  Such a reduction would clearly be impossible.  As discussed above Aaronson and Christiano circumvent this issue by using a non-standard type of assumption; their technique is not relevant for standard definitions of iO.

\medskip

Our solution is to separate the proof into two phases.  In the first, we change the spaces obfuscated from $S,S^\bot$ to $T_0,T_1$, where $T_0$ is a random unknown subspace containing $S$, and $T_1$ is a unknown random subspace containing $S^\bot$.  This modification can be proved undetectable using a weak form of obfuscation we define, called subspace-hiding obfuscation, which in turn is implied by iO.  Note that in this step, we even allow the reduction to know $S$ (but not $T_0,T_1$), so it can produce as many copies of $|\$\rangle$ as it would like to feed to the adversary.  The reduction does not care about the adversary's forgery directly, only whether or not the adversary successfully forges.  If the adversary forges when given obfuscations of $S,S^\bot$, it must also forge under $T_0,T_1$, else it can distinguish the two cases and hence break the obfuscation.  By using the adversary in this way, we avoid the apparent difficulties above.

In the next step, we notice that, conditioned on $T_0,T_1$, the space $S$ is a random subspace between $T_1^\bot$ and $T_0$.  Thus conditioned on $T_0,T_1$, the adversary clones a state $|\$\rangle$ defined by a random subspace $S$ between $T_1^\bot$ and $T_0$.  The number of possible $S$ is much larger than the dimension of the state $|\$\rangle$, so in particular the states cannot be orthogonal.  Thus, by no-cloning, duplication is impossible.  We need to be careful however, since we want to rule out adversaries that forge with even very low success probabilities.  To do so, we need to precisely quantify the no-cloning theorem, which we do.  We believe our new no-cloning theorem may be of independent interest.  We note that when applying no-cloning, we do not rely on the secrecy of $T_0,T_1$, but only that $S$ is hidden.  Intuitively, there are exponentially many more $S$'s between $T_0,T_1$ than the dimension of the space $|\$\rangle$ belongs to, so no-cloning implies that a forger has negligible success probability. Thus we reach a contradiction, showing that the original adversary could not exist.

\medskip

We also show how to view Aaronson and Christiano's scheme as a signature scheme; we show that the signature scheme satisfies the Boneh-Zhandry definition, but not the strong Garg-Yuen-Zhandry notion.  Thus, we can view Aaronson and Christiano's scheme as an instance of our win-win results, and moreover provide the first separation between the two security notions for signatures.

We note that our result potentially relies on a much weaker notion of obfuscation that full iO, giving hope that security can be based on weaker assumptions.  For example, an intriguing open question is whether or not recent constructions of obfuscation for certain evasive functions~\cite{EPRINT:WicZir17,EPRINT:GoyKopWat17} based on LWE can be used to instantiate our notion of subspace hiding obfuscation.  This gives another route toward building quantum money from hard lattice problems.  This is particularly important at the present time, where new quantum attacks have called into question the security of full-fledged iO in the quantum setting (see below for a discussion).  

Finally, we note that independently of whether iO exists in the quantum setting, black box oracles do provide subspace hiding, a fact which can be based easily on the quantum lower bounds for unstructured search~\cite{BBBV97}.  With this insight, our proof strategy can be used to give a simplified analysis of Aaronson and Christiano's black-box scheme.  Their proof relied on developing a new type of adversary method, called the inner-product adversary method.  Instead, we rely simply on the lower bound for unstructured search plus our quantitative no cloning theorem.

\subsection{Related Works}

\paragraph{Quantum Money.} Lutomirski~\cite{Lutomirski10} shows another weakness of Wiesner's scheme: a merchant, who is allowed to interact with the mint for verification, can use the verification oracle to break the scheme and forge new currency. Public key quantum money is necessarily secure against adversaries with a verification oracle, since the adversary can implement the verification oracle for itself.  Several alternative solutions to the limitations of Wiesner's scheme have been proposed~\cite{MosSte10,Gavinsky11}, though the ``ideal'' solution still remains public key quantum money.

\paragraph{Randomness Expansion.}  Colbeck~\cite{Colbeck09} proposed the idea of a classical experimenter, interacting with several potentially untrustworthy quantum devices, can expand a small random seed into a certifiably random longer seed.  Here, a crucial assumption is that the devices cannot communicate and must obey the laws of quantum mechanics; no other assumption about the devices is made.  The application of quantum lightning to verified randomness has a similar spirit, though the requirements are quite different.  Randomness expansion requires multiple non-communicating devices, but the experimenter can be classical and the devices can have unbounded computational power; in contrast quantum lightning involves only a single device, but the device must be computationally bounded, and the experimenter must perform quantum operations.  We note that a quantum experimenter can generate a random string for free; the purpose of verifiable entropy in this case is simply to prove to another individual that the coins you generated were indeed random.

\paragraph{Obfuscation and Multilinear Maps.}  There is a vast body of literature on strong notions of obfuscation, starting with the definitional work of Barak et al.~\cite{C:BGIRSVY01}.  Garg et al.~\cite{FOCS:GGHRSW13} propose the first obfuscator plausibly meeting the strong notion of iO, based on cryptographic multilinear maps~\cite{EC:GarGenHal13,C:CorLepTib13,TCC:GenGorHal15}.  Unfortunately, there have been numerous attacks on multilinear maps, which we do not fully elaborate on here.  Importantly, all current obfuscators are subject to very strong quantum attacks~\cite{EC:CDPR16,C:AlbBaiDuc16,CJL16,EC:CheGenHal17}, casting doubt on their quantum security.  However, there has been some success in transforming applications of obfuscation to be secure under assumptions on lattices~\cite{ITCS:BVWW16,EPRINT:WicZir17,EPRINT:GoyKopWat17}, which are widely believed to be quantum hard.   We therefore think it plausible that subspace-hiding obfuscation, which is all we need for this work, can be based on similar lattice problems.  Nonetheless, obfuscation is a very active area of research, and we believe that obfuscation secure against quantum attacks will likely be discovered in the near future.

\paragraph{Computational No-cloning.} We note that computational assumptions and no-cloning have been combined in other contexts, such as Unruh's revocable time-released encryption~\cite{EC:Unruh14}.  We note however, that these settings do not involve verification, the central theme of this work.

\section{Preliminaries}

\label{sec:prelim}

\subsection{Notations}

Throughout this paper, we will let $\lambda$ be a security parameter.  When inputted into an algorithm, $\lambda$ will be represented in unary.

A function $\epsilon(\lambda)$ is \emph{negligible} if for any inverse polynomial $1/p(\lambda)$, $\epsilon(\lambda)<1/p(\lambda)$ for sufficiently large $\lambda$.  A function is \emph{non-negligible} if it is not negligible, that is there exists an inverse polynomial $1/p(\lambda)$ such that $\epsilon(\lambda)\geq 1/p(\lambda)$ infinitely often.

\subsection{Quantum Computation}

A quantum system $Q$ is defined over a finite set $B$ of classical states.  We will generally consider $B=\{0,1\}^n$.  A {\bf pure} state over $Q$ is an $L_2$-normalized vector in $\C^{|B|}$, which assigns a (complex) weight to each element in $B$.  Thus the set of pure states forms a complex Hilbert space.  A {\bf qubit} is a quantum system defined over $B=\{0,1\}$.  Given a quantum system $Q_0$ over $B_0$ and a quantum system $Q_1$ over $B_1$, we can define the product system $Q=Q_0\times Q_1$ over $B=B_0\times B_1=\{(b_0,b_1):b_0\in B_0,b_1\in B_1\}$.  Given a state $v_0\in Q_0$ and $v_1\in Q_1$, we define the product state $v_0\otimes v_1$ in the natural way.  An $n$-qubit system is then $Q=Q_0^{\oplus n}$ where $Q_0$ is a single qubit.
\paragraph{Bra-ket notation.} We will think of pure states as column vectors.  The pure state that assigns weight 1 to $x$ and weight 0 to each $y\neq x$ is denoted $|x\rangle$.  The set $\{|x\rangle\}$ therefore gives an orthonormal basis for the Hilbert space of pure states.  We will call this basis the ``computational basis.''  If a state $|\phi\rangle$ is a linear combination of several $|x\rangle$, we say that $|\phi\rangle$ is in ``superposition.''  For a pure state $|\phi\rangle$, we will denote the conjugate transpose as the row vector $\langle\phi |$. 
\paragraph{Entanglement.} In general, a pure state $|\phi\rangle$ over $Q_0\times Q_1$ cannot be expressed as a product state $|\phi_0\rangle\otimes|\phi_1\rangle$ where $|\phi_b\rangle\in Q_b$.  If $|\phi\rangle$ is not a product state, we say that the systems $Q_0,Q_1$ are {\bf entangled}.  If $|\phi\rangle$ is a product state, we say the systems are {\bf un-entangled}.
\paragraph{Evolution of quantum systems.} A pure state $|\phi\rangle$ can be manipulated by performing a unitary transformation $U$ to the state $|\phi\rangle$.  We will denote the resulting state as $|\phi'\rangle=U|\phi\rangle$.
\paragraph{Basic Measurements.} A pure state $|\phi\rangle$ can be measured; the measurement outputs the value $x$ with probability $|\langle x|\phi\rangle|^2$.  The normalization of $|\phi\rangle$ ensures that the distribution over $x$ is indeed a probability distribution.  After measurement, the state ``collapses'' to the state $|x\rangle$.  Notice that subsequent measurements will always output $x$, and the state will always stay $|x\rangle$.

If $Q=Q_0\times Q_1$, we can perform a {\bf partial measurement} in the system $Q_0$ or $Q_1$.  If $|\phi\rangle=\sum_{x\in B_0,y\in B_1}\alpha_{x,y}|x,y\rangle$, partially measuring in $Q_0$ will give $x$ with probability $p_x=\sum_{y\in B_1}|\alpha_{x,y}|^2$.  $|\phi\rangle$ will then collapse to the state $\sum_{y\in B_1}\frac{\alpha_{x,y}}{\sqrt{p_x}}|x,y\rangle$.  In other words, the new state has support only on pairs of the form $(x,y)$ where $x$ was the output of the measurement, and the weight on each pair is proportional to the original weight in $|\phi\rangle$.  Notice that subsequent partial measurements over $Q_0$ will always output $x$, and will leave the state unchanged.

The above corresponds to measurement in the computational basis.  Measurements in other bases are possible to, and defined analogously.  We will generally only consider measurements in the computational basis; measurements in other bases can be implemented by composing unitary operations with measurements in the computational basis.
\paragraph{Efficient Computation.} A quantum computer will be able to perform a fixed, finite set $G$ of unitary transformations, which we will call {\bf gates}.  For concreteness, we will use so-called Hadamard, phase, CNOT and $\pi/8$ gates, but the precise choice is not important for this work, so long as the gate set is ``universal'' for quantum computing.

Let $Q$ be a quantum system on $n$ qubits.  Each gate costs unit time to apply, and each partial measurement also costs unit time.  Therefore, an efficient quantum algorithm will be able to make a polynomial-length sequence of operations, where each operation is either a gate from $G$ or a partial measurement in the computational basis.  Here, ``polynomial'' will generally mean polynomial in $n$.
\paragraph{Examples of Quantum Computations.}
\begin{itemize}\setlength\itemsep{-0em}
	\item {\bf Quantum Fourier Transform.} Let $Q_0$ be a quantum system over $B=\Z_q$ for some integer $q$.  Let $Q=Q_0^{\otimes n}$.  The Quantum Fourier Transform (QFT) performs the following operation efficiently: 
	\[\mathsf{QFT}|x\rangle = \frac{1}{\sqrt{q^n}}\omega_q^{x\cdot y}\sum_{y\in\{0,1\}^n}|y\rangle\]

	where $\omega_q=e^{2\pi i/q}$.
	
	\item {\bf Efficient Classical Computations.} Any function that can be computed efficiently classically can be computed efficiently on a quantum computer.  More specifically, if $f$ is computable by a polynomial-sized circuit, then there is a efficiently computable unitary $U_f$ on the quantum system $Q=Q_{in}\otimes Q_{out}\otimes Q_{work}$ with the property that: $U_f |x,y,0\rangle = |x,y+f(x),0\rangle$.
	
	Here, $Q_{in}$ is a quantum system over the set of possible inputs, $Q_{out}$ is a quantum system over the set of possible outputs, and $Q_{work}$ is another quantum system that is just used for workspace, and is reset after use.

\end{itemize}
\vspace{-1.6\topsep}
\paragraph{Quantum Queries.} If a quantum algorithm makes queries to some function $f$, there are two scenarios we will consider.  In one, oracle accepts a quantum state consisting of input and response registers, creates it's own workspace registers initialized to $|0\rangle$, and applies the unitary $U_f$ as defined above to the joint state.  After the query, it returns the input and response registers and discards its workspace registers.  Following~\cite{ICITS:DFNS13}, we call this a \emph{supplied response} oracle.  In the other scenario, the oracle accepts a quantum state consisting of just input registers, creates it's own response and workspace registers initialized to $|0\rangle$, applies the unitary $U_f$, and returns the input and new response registers.  It discards its workspace registers.  We call this a \emph{created response} oracle.  Unless otherwise stated, we will use the supplied response version of every oracle.
\vspace{-0.6em}
\paragraph{Mixed states.} A quantum system may, for example, be in a pure state $|\phi\rangle$ with probability $1/2$, and a different pure state $|\psi\rangle$ with probability $1/2$.  This can occur, for example, if a partial measurement is performed on a product system.

This probability distribution on pure states cannot be described by a pure state alone.  Instead, we say that the system is in a {\bf mixed state}.  The statistical behavior of a mixed state can be captured by {\bf density matrix}.  If the system is in pure state $|\phi_i\rangle$ with probability $p_i$, then the density matrix for the system is defined as $\rho=\sum_i p_i |\phi_i\rangle\langle\phi_i|$.

The density matrix is therefore a positive semi-definite complex  Hermitian matrix with rows and columns indexed by the elements of $B$.  The density matrix for a pure state $|\phi\rangle$ is given by the rank-1 matrix $|\phi\rangle\langle\phi|$.  Any probability distribution over classical states can also be represented as a density matrix, namely the diagonal matrix where the diagonal entries are the probability values.
\vspace{-0.6em}
\paragraph{Distance.}We define the Euclidean distance $\||\phi\rangle-|\psi\rangle\|$ between two states as the value $\left(\sum_x |\alpha_x-\beta_x|^2\right)^{\frac{1}{2}}$ where $|\phi\rangle=\sum_x \alpha_x|x\rangle$ and $|\psi\rangle=\sum_x\beta_x|x\rangle$. 	

We will be using the following lemma:
\begin{lemma}[\cite{BBBV97}]\label{lemma:distance} Let $|\varphi\rangle$ and $|\psi\rangle$ be quantum states with Euclidean distance at most $\epsilon$. Then, performing the same measurement on $|\varphi\rangle$ and $|\psi\rangle$ yields distributions with statistical distance at most
	$4\epsilon$.\end{lemma}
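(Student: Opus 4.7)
The plan is to prove this by a direct calculation: expand the measurement probabilities in the two states, write their difference in terms of the ``error vector'' $|\delta\rangle := |\varphi\rangle - |\psi\rangle$ (so $\||\delta\rangle\| \leq \epsilon$), and bound the sum of absolute differences using Cauchy--Schwarz. Since every measurement in another basis or every POVM can be reduced to a projective measurement on a larger space by composing with a unitary and possibly a partial computational-basis measurement, and since unitaries preserve Euclidean distance, it suffices to treat the case of a projective measurement $\{P_i\}$ with $\sum_i P_i = \Ident$ on the same Hilbert space.

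First, I would write the outcome probabilities as $p_i = \langle\varphi|P_i|\varphi\rangle = \|P_i|\varphi\rangle\|^2$ and $q_i = \langle\psi|P_i|\psi\rangle = \|P_i|\psi\rangle\|^2$, then substitute $|\varphi\rangle = |\psi\rangle + |\delta\rangle$ and expand:
\[
p_i - q_i = \langle\delta|P_i|\psi\rangle + \langle\psi|P_i|\delta\rangle + \langle\delta|P_i|\delta\rangle.
\]
Taking absolute values and summing over $i$, I would bound each of the three contributions separately. For the cross terms, Cauchy--Schwarz gives $|\langle\delta|P_i|\psi\rangle| \leq \|P_i|\delta\rangle\|\cdot\|P_i|\psi\rangle\|$, and then a second Cauchy--Schwarz on the outer sum yields
\[
\sum_i \|P_i|\delta\rangle\|\cdot\|P_i|\psi\rangle\| \leq \Bigl(\sum_i \|P_i|\delta\rangle\|^2\Bigr)^{1/2}\Bigl(\sum_i \|P_i|\psi\rangle\|^2\Bigr)^{1/2} = \||\delta\rangle\|\cdot\||\psi\rangle\| \leq \epsilon,
\]
using $\sum_i P_i = \Ident$ and $\||\psi\rangle\|=1$. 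For the quadratic term, $\sum_i |\langle\delta|P_i|\delta\rangle| = \sum_i \|P_i|\delta\rangle\|^2 = \||\delta\rangle\|^2 \leq \epsilon^2$.

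Combining, $\sum_i |p_i - q_i| \leq 2\epsilon + \epsilon^2$. Since Euclidean distance between unit vectors is at most $2$, we have $\epsilon^2 \leq 2\epsilon$, so $\sum_i |p_i - q_i| \leq 4\epsilon$. The statistical distance (half the $L_1$ distance) is thus at most $2\epsilon$, certainly at most $4\epsilon$ as claimed. The only mildly delicate point, and the closest thing to an obstacle, is reducing general measurements (POVMs, or projective measurements in non-computational bases) to the projective case on which the Cauchy--Schwarz argument runs; this is handled by the standard dilation trick, noting that unitaries applied before measurement preserve $\||\varphi\rangle - |\psi\rangle\|$ and that measuring a subsystem corresponds to a projective measurement in the composite system.
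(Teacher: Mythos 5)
Your argument is correct: the expansion of $p_i-q_i$ in terms of $|\delta\rangle$, the two applications of Cauchy--Schwarz (using $P_i^2=P_i$ and $\sum_i P_i=\Ident$), and the crude bound $\epsilon^2\le 2\epsilon$ together give $\sum_i|p_i-q_i|\le 2\epsilon+\epsilon^2\le 4\epsilon$, which covers the claim under either convention for statistical distance, and the dilation reduction to projective measurements is handled properly. The paper itself offers no proof --- it imports the lemma from \cite{BBBV97} --- and your calculation is essentially the standard one from that reference (which works directly with amplitudes rather than projectors, but is the same Cauchy--Schwarz argument).
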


\subsection{Public Key Quantum Money}

Here, we define public key quantum money.  We will slightly modify the usual definition~\cite{CCC:Aaronson09}, though the definition will be equivalent to the usual definition under simple transformations.
\begin{itemize}
	\item We only will consider what Aaronson and Christiano~\cite{STOC:AarChr12} call a quantum money \emph{mini-scheme}, where there is just a single valid banknote.  It is straightforward to extend to general quantum money using a signature scheme
	\item We will change the syntax to more closely resemble our eventual quantum lightning definition, in order to clearly compare the two objects.
\end{itemize}

\noindent A quantum money scheme consists of two quantum polynomial time algorithms $\genmoney,\vermoney$.
\begin{itemize}
	\item $\genmoney$ takes as input the security parameter, and samples a quantum banknote $|\$\rangle$
	\item $\vermoney$ verifies a banknote, and if the verification is successful, produces a serial number for the note.
\end{itemize}

\smallskip

For correctness, we require that verification always accepts money produced by $\genmoney$.  We also require that verification does not perturb the money.  Finally, since $\vermoney$ is a quantum algorithm, we must ensure that multiple runs of $\vermoney$ on the same money will always produce the same serial number.  This is captured by the following two of requirements:
\begin{itemize}
	\item For a money state $|\$\rangle$, let \[H_\infty(|\$\rangle)=-\log_2 \min_s \Pr[\vermoney(|\$\rangle)=s]\] be the min-entropy of $\serialnumber$ produced by applying $\vermoney$ to $|\$\rangle$, were we do not count the rejecting output $\bot$ as contributing to the min-entropy.  We insist that $\E[H_\infty(|\$\rangle)]$ is negligible, where the expectation is over $|\$\rangle\gets\genmoney(1^\lambda)$.  This ensures the serial number is essentially a deterministic function of the money. 
	\item For a money state $|\$\rangle$, let $|\psi\rangle$ be the state left over after running $\vermoney(|\$\rangle)$.  We insist that $\E[|\langle\psi|\$\rangle|^2]\geq 1-\negl(\lambda)$, where the expectation is over $|\$\rangle\gets\genmoney(1^\lambda)$, and any affect $\vermoney$ has on $|\psi\rangle$.  This ensures that verification does not perturb the money.
\end{itemize}

\begin{remark} We note that it is sufficient to only consider the first requirement.  Since the serial number is essentially a deterministic function of the money, we can always modify a $\vermoney$ that does not satisfy the second requirement into an algorithm $\vermoney'$ that does.  $\vermoney'$ runs $\vermoney$, and copies the output $\serialnumber$ into a separate register.  Since $\serialnumber$ is almost deterministic, the copying into a separate register only negligibly affects the money.  Therefore, we un-compute $\vermoney$, and the result will be negligibly close to the original state.
\end{remark}

\noindent For security, consider the following game between an adversary $A$ and a challenger
\begin{itemize}
	\item The challenger runs $\genmoney(1^\lambda)$ to get a banknote $|\$\rangle$.  It runs $\vermoney$ on the banknote to extract a serial number $\serialnumber$.
	\item The challenger sends $|\$\rangle$ to $A$.
	\item $A$ produces two candidate quantum money states $|\$_0\rangle,|\$_1\rangle$, which are potentially entangled. 
	\item The challenger runs $\vermoney$ on both states, to get two serial numbers $s_0,s_1$.
	\item The challenger accepts if and only if both runs of $\vermoney$ pass, and the serial numbers satisfy $s_0=s_1=s$.
\end{itemize}

\begin{definition} A quantum money scheme $(\genmoney,\vermoney)$ is secure if, for all quantum polynomial time adversaries $A$, the probability the challenger accepts in the above experiment is negligible.
\end{definition}

\section{Quantum Lightning}

\label{sec:qlightning}

\subsection{Definitions}

The central object in a quantum lightning system is a lightning bolt, a quantum state that we will denote as $\qlightning$.  Bolts are produced by a storm, $\genbolt$, a polynomial time quantum algorithm which takes as input a security parameter $\lambda$ and samples new bolts.  Additionally, there is a quantum  polynomial-time bolt verification procedure, $\verbolt$, which serves two purposes.  First, it verifies that a supposed bolt is actually a valid bolt; if not it rejects and outputs $\bot$.  Second, if the bolt is valid, it extracts a fingerprint/serial number of the bolt, denoted $\serialnumber$.  

Rather than having a single storm $\genbolt$ and single verifier $\verbolt$, we will actually have a family $\Fs_\lambda$ of $(\genbolt,\verbolt)$ pairs for each security parameter.  We will have a setup procedure $\qsetup(1^\lambda)$ which samples a $(\genbolt,\verbolt)$ pair from some distribution over $\Fs_\lambda$.


\smallskip

For correctness, we have essentially the same requirements as quantum money.  We require that verification always accepts bolts produced by $\genbolt$.  We also require that verification does not perturb the bolt.  Finally, since $\verbolt$ is a quantum algorithm, we must ensure that multiple runs of $\verbolt$ on the same bolt will always produce the same fingerprint.  This is captured by the following two of requirements:
\begin{itemize}
	\item For a bolt $\qlightning$, let \[H_\infty(\qlightning,\verbolt)=-\log_2 \min_s \Pr[\verbolt(\qlightning)=s]\] be the min-entropy of $\serialnumber$ produced by applying $\verbolt$ to $\qlightning$, were we do not count the rejecting output $\bot$ as contributing to the min-entropy.  We insist that $\E[H_\infty(\qlightning,\verbolt)]$ is negligible, where the expectation is over $(\genbolt,\verbolt)\gets\qsetup(\lambda)$ and $\qlightning\gets\genbolt$.  This ensures the serial number is essentially a deterministic function of the bolt. 
	\item For a bolt $\qlightning$, let $|\psi\rangle$ be the state left over after running $\verbolt(\qlightning)$.  We insist that $\E[|\langle\psi\qlightning|^2]\geq 1-\negl(\lambda)$, where the expectation is over $(\genbolt,\verbolt)\gets\qsetup(\lambda)$, $\qlightning\gets\genbolt$, and any affect $\verbolt$ has on $|\psi\rangle$.  This ensures that verification does not perturb the bolt.
\end{itemize}

\begin{remark} We note that it is sufficient to only consider the first requirement.  Since the serial number is essentially a deterministic function of the bolt, we can always modify a $\verbolt$ that does not satisfy the second requirement into an algorithm $\verbolt'$ that does.  $\verbolt'$ runs $\verbolt$, and copies the output $\serialnumber$ into a separate register.  Since $\serialnumber$ is almost deterministic, the copying into a separate register only negligibly affects the bolt.  Therefore, we un-compute $\verbolt$, and the result will be negligibly close to the original state.
\end{remark}

\noindent For security, informally, we ask that no adversarial storm $\advlightning$ can produce two bolts with the same serial number.  More precisely, consider the following experiment between a challenger and a malicious bolt generation procedure $\advlightning$:
\begin{itemize}
	\item The challenger runs $(\genbolt,\verbolt)\gets\qsetup(1^\lambda)$, and sends $(\genbolt,\verbolt)$ to $\advlightning$.
	\item $\advlightning$ produces two (potentially entangled) quantum states $\qlightning[0],\qlightning[1]$, which it sends to the challenger
	\item The challenger runs $\verbolt$ on each state, obtaining two fingerprints $\serialnumber_0,\serialnumber_1$.  The challenger accepts if and only if $\serialnumber_0=\serialnumber_1\neq\bot$.
\end{itemize}

\begin{definition} A quantum lightning scheme has \emph{uniqueness} if, for all polynomial time adversarial storms $\advlightning$, the probability the challenger accepts in the game above is negligible in $\lambda$.
\end{definition}

\paragraph{Comparison to Public Key Quantum Money.} We note that our quantum lightning definition is very similar to the quantum money notion, except that the security notion is strengthened, and we allow a family of generation/verification procedures.  The differences are analogous to the various notions of security for hash functions $H$:
\begin{itemize}
	\item Quantum money can be seen as an analog of second-pre-image resistance.  Here, a random input $x$ is sampled, hashed to get $y$, and $(x,y)$ are sent to the adversary.  The adversary has to find a second $x'$ that hashes to $y$.  In quantum money, a random note is created, a serial number is extracted, and the adversary must find a second note with the same serial number.
	\item Quantum lightning can then be seen as the analog of collision-resistance.  Here, the adversary just tries to devise two arbitrary distinct  inputs $x,x'$ that hash to the same value.  In quantum lightning, the adversary tries to construct two bolts with the same serial number.  Just as in the collision resistance setting, there are definitional issues with collision resistance that lead the usual (theoretical) definitions to consist of families of hash functions.  Most theoretical constructions of collision-resistant hash functions are also function families.  For similar reasons, we define quantum lightning as a family of storm/verifier pairs.
	\item One can also consider one-wayness, where the adversary is given a random $y$ (but not the pre-image $x$), and the goal is to find an arbitrary pre-image (potentially $x$ itself).  For quantum lightning/money, this would correspond to giving the adversary a random serial number, and then asking the adversary to find some state belonging to that serial number.  We note that this version is trivial without relying on no-cloning: we define the serial number for a state $|x\rangle$ to simply be the hash of $x$.  Then one-wayness already immediately implies that it is hard to find a state with a given serial number.  Therefore, in the context of this paper, such a notion for quantum money/lightning is uninteresting.
\end{itemize}

\paragraph{Variations.}  We consider several variations of the above notion
\begin{itemize}
	\item {\bf No setup.} Here, the set $\Fs_\lambda$ contains only a single $(\genbolt,\verbolt)$ pair.  This means $\qsetup$ simply needs to output the security parameter, and $\genbolt,\verbolt$ are deterministically derived from the security parameter.  
	\item {\bf Common random string.}  Here, each member of $\Fs_\lambda$ is indexed by a bit string $r$ of length $n(\lambda)$.  In this case, $\qsetup$ simply outputs a random string of length $n(\lambda)$, and $\genbolt,\verbolt$ are deterministically derived from this string.
	
	This is in contrast to general $\Fs_\lambda$, where the generation of $(\genbolt,\verbolt)$ may involve secrets that are subsequently discarded.

	\item {\bf Min-entropy.} Here, we consider a slightly different, but closely related, security notion, which basically says that any for any malicious sampling procedure for bolts, the min-entropy of the serial number must be high.  Consider a malicious bolt generator $\advlightning$.  Define \[H_\infty(\advlightning,\verbolt)=-\log \max_\serialnumber\Pr[\verbolt(\qlightning)=\serialnumber:\qlightning\gets\advlightning]\]
	
	to be the min-entropy of serial numbers among the valid bolts generated by $\advlightning$ (bolts that are rejected by $\verbolt$ do not count toward min-entropy).  Note that this is different that $H_\infty(\qlightning,\verbolt)$, which measures the min-entropy of the serial number for a single bolt.  
	
	We say a quantum lightning scheme has \emph{min-entropy} if, for all efficient quantum $\adv$, which takes as input $\genbolt,\verbolt$ and outputs a classical description of $\advlightning$, and for all polynomials $p$
	
	\[\Pr[H_\infty(\advlightning,\verbolt])\leq\log p(\lambda):\advlightning\gets\adv(\genbolt,\verbolt),(\genbolt,\verbolt)\gets\qsetup(1^\lambda)] < \negl(\lambda)\]
		
	In other words, except with negligible probability, serial numbers produced by $\advlightning$ have super-logarithmic min-entropy.  
	
	\medskip
	
	We can modify the above definition to \emph{$p$-min-entropy}, where we insist on a particular amount of min-entropy: for any efficient quantum $\adv$:
	
	\[\Pr[H_\infty(\advlightning,\verbolt])\leq p(\lambda):\advlightning\gets\adv(\genbolt,\verbolt),(\genbolt,\verbolt)\gets\qsetup(1^\lambda)] < \negl(\lambda)\]
		
	With $p$-min-entropy, we can consider $p$ anywhere from super-logarithmic to slightly less than $n$, the bit-length of serial numbers.  We cannot insist on $n$-min-entropy, since an adversarial storm can run $\genbolt$ several times, until, say, the first bit of the serial number is 0, and then only output this bolt.  The adversarial storm will run $\genbolt$ twice in expectation, is guaranteed to produce a valid bolt, and moreover only outputs serial numbers beginning with a 0.  More generally, an efficient generation procedure can always sample bolts with serial numbers from a distribution of min-entropy $n-O(\log \lambda)$, where $n$ is the length of the serial numbers.  Our requirement will be that this is essentially the only strategy.
	
	We say a quantum lightning scheme has \emph{full min-entropy} if this is essentially the only strategy possible for reducing min-entropy: for all efficient quantum $\adv$, there exists a polynomial $p$ such that
	
	\[\Pr[H_\infty(\advlightning,\verbolt])\leq n(\lambda)-\log p(\lambda):\advlightning\gets\adv(\genbolt,\verbolt),(\genbolt,\verbolt)\gets\qsetup(1^\lambda)] < \negl(\lambda)\]
\end{itemize}

Note that a min-entropy adversary easily gives a uniqueness adversary: simply run the min-entropy storm many times, saving all of the valid bolts that are produced.  Since the min-entropy is logarithmic, after a polynomial number of samples, there will be two with the same serial number; simply output these bolts.  This gives the following theorem:

\begin{theorem} If a quantum lightning scheme has uniqueness, then it also has min-entropy.
\end{theorem}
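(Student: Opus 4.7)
The plan is to prove the contrapositive: from any adversary that defeats min-entropy I build one that defeats uniqueness. Suppose min-entropy fails. Then there exist an efficient quantum algorithm $\mathcal{A}$, a polynomial $p$, and a non-negligible function $\nu$ such that, with probability at least $\nu(\lambda)$ over $(\genbolt,\verbolt)\gets\qsetup(1^\lambda)$, the description $\advlightning\gets\mathcal{A}(\genbolt,\verbolt)$ satisfies $H_\infty(\advlightning,\verbolt)\leq\log p(\lambda)$. Unpacking the definition, this means there is some serial number $s^{\ast}$ (depending on the particular $\advlightning$) with $\Pr[\verbolt(\qlightning)=s^{\ast}:\qlightning\gets\advlightning]\geq 1/p(\lambda)$.

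Next, I would construct a uniqueness adversary $\mathcal{B}$ as follows. On input $(\genbolt,\verbolt)$, $\mathcal{B}$ runs $\mathcal{A}$ to obtain the classical description of $\advlightning$, and then invokes $\advlightning$ independently $T:=\lambda\cdot p(\lambda)^2$ times to obtain fresh bolts $\qlightning[1],\dots,\qlightning[T]$. For each $i$, $\mathcal{B}$ applies the ``copy the serial number into a classical register'' variant of $\verbolt$ described in the paper's remark, recording the measurement outcome $s_i$ while leaving the bolt essentially undisturbed. $\mathcal{B}$ then searches (classically) for two indices $i\neq j$ with $s_i=s_j\neq\bot$ and submits $(\qlightning[i],\qlightning[j])$ to the challenger, aborting if no such pair exists.

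For the analysis, let $E$ be the event that $H_\infty(\advlightning,\verbolt)\leq\log p(\lambda)$, so $\Pr[E]\geq\nu(\lambda)$. Conditioned on $E$ and on the identity of $s^{\ast}$, the measured serial numbers $s_1,\dots,s_T$ are i.i.d.\ samples from the distribution of $\verbolt(\qlightning)$ for $\qlightning\gets\advlightning$, each equal to $s^{\ast}$ with probability at least $1/p(\lambda)$. Letting $N=|\{i:s_i=s^{\ast}\}|$, we get $\E[N\mid E]\geq T/p(\lambda)=\lambda\cdot p(\lambda)$, and a Chernoff bound gives $\Pr[N\geq 2\mid E]\geq 1-e^{-\Omega(\lambda)}$. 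Thus, with probability at least $\nu(\lambda)\bigl(1-e^{-\Omega(\lambda)}\bigr)$, the adversary $\mathcal{B}$ outputs two bolts whose preliminary serial-number measurements both yielded $s^{\ast}$; when the challenger re-verifies, it recovers $s^{\ast}$ on both bolts (see below), contradicting uniqueness.

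The subtlest step, and the one I expect to need the most care, is arguing that the challenger's fresh invocation of $\verbolt$ on $\qlightning[i]$ and $\qlightning[j]$ again outputs $s^{\ast}$, even though these bolts originate from the adversarial $\advlightning$ rather than from the honest $\genbolt$ (the non-perturbation part of correctness was stated only for honestly generated bolts). I would resolve this by appealing to the paper's remark: after the standard compute/copy/uncompute transformation, $\verbolt$ may be taken to act on any input state as the projective measurement that extracts the serial number, so once $\mathcal{B}$'s preliminary verification has projected $\qlightning[i]$ into the $s^{\ast}$-eigenspace of that measurement, re-measurement by the challenger yields $s^{\ast}$ with probability one, up to a negligible slack coming from the ``almost deterministic'' qualifier. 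Combining this with the Chernoff estimate above gives a non-negligible uniqueness-breaking advantage for $\mathcal{B}$, completing the contrapositive.
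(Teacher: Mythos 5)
Your overall strategy --- contrapositive, sample the adversarial storm repeatedly, and exhibit a collision of serial numbers --- is exactly the paper's (the paper only gives this as a one-sentence sketch preceding the theorem statement). But the step you yourself flag as subtlest is where the argument as written breaks. The paper's remark does \emph{not} let you treat $\verbolt$ as a projective measurement on arbitrary states: the compute/copy/uncompute transformation is shown to be non-perturbing only because, \emph{for honestly generated bolts}, the serial number is almost deterministic, so copying it out is almost reversible. On a bolt produced by $\advlightning$ there is no such guarantee. In general $\verbolt$ implements a POVM $\{E_s\}$ on the bolt register (the projective measurement lives on the bolt \emph{plus} $\verbolt$'s ancillas, which are discarded and refreshed between runs), and a non-projective POVM is not repeatable: after your preliminary verification returns $s^\ast$, the challenger's fresh run of $\verbolt$ need not return $s^\ast$ again, even up to negligible slack. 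So the final step ``re-measurement by the challenger yields $s^\ast$ with probability one'' is unjustified, and your Chernoff analysis, which counts only the preliminary outcomes, does not transfer to the challenger's acceptance event.

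The fix is to delete the preliminary measurements entirely, which also simplifies the reduction. Conditioned on your event $E$, let $q_s=\Pr[\verbolt(\qlightning)=s:\qlightning\gets\advlightning]$, so $\max_{s\neq\bot}q_s\geq 1/p(\lambda)$. Have $\mathcal{B}$ invoke $\advlightning$ just twice, producing two independent, unentangled bolts, and submit them untouched. The challenger's two runs of $\verbolt$ are then independent measurements of i.i.d.\ states, so it accepts with probability
\[
\sum_{s\neq\bot}q_s^2\;\geq\;\Bigl(\max_{s\neq\bot}q_s\Bigr)^2\;\geq\;1/p(\lambda)^2 ,
\]
giving $\mathcal{B}$ overall success probability at least $\nu(\lambda)/p(\lambda)^2$, which is non-negligible. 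This uses only the challenger's own measurements and no repeatability assumption on $\verbolt$. (If you want many samples and a union/Chernoff-style boost, you can still do it, but the pair you submit must be chosen without first measuring the bolts you hand over, e.g.\ by submitting a uniformly random pair.)
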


From now on, we will usually only consider the uniqueness security property.  Therefore, when we say that a quantum lightning scheme is ``secure'', we mean that it has uniqueness.  We will only use the other terms when we need to disambiguate the different security notions.

\subsection{Applications}

\paragraph{Quantum Money.} Quantum lightning easily gives quantum money.  To generate a new banknote, simply run $\genbolt$ and output the obtained bolt $\qlightning$ as the quantum money state.  In the case where $\genbolt$ actually comes from a family, first run $\qsetup$ to get $(\genbolt,\verbolt)$, and then run $\genbolt$ to get $\qlightning$.  The quantum money state is $\qlightning,\genbolt,\verbolt$, and it's serial number is $s,\genbolt,\verbolt$, where $s$ is the serial number of $\qlightning$.

Technically, this just gives a quantum money ``mini-scheme'' where there is a single valid banknote.  This can be converted to a full quantum money scheme using signatures~\cite{STOC:AarChr12}.  

\paragraph{Provable Randomness.} Quantum lightning also gives a way to generate a random string, and prove that it has min-entropy.  Assume that $\qsetup$ has already been run in a trusted way (say, by several organizations running $\qsetup$ using an MPC protocol).  

To generate a new random string, simply run $\genbolt$ to get a bolt $\qlightning$.  The random string will be the serial number of the bolt, obtained using $\verbolt$.  $\qlightning$ will be the ``proof'' that the string has min-entropy.

Suppose that an adversarial storm $\advlightning$ can produce strings and valid proofs where the strings have logarithmic min-entropy.  Then, by running $\advlightning$ a polynomial number of times, eventually one will obtain two identical strings, along with two valid proofs.  This violates the security of the quantum lightning scheme.  Hence, we really do obtain a proof of min-entropy.

Notice that $\qsetup$ was run only once, but can then be used to generate arbitrarily many strings along with proofs of min-entropy.

\medskip

Unfortunately, the min-entropy bound obtained is weak; we can only guarantee a super-logarithmic amount of min-entropy.  If we assume sub-exponential hardness of the quantum lightning scheme (that is, even a sub-exponential-time adversary cannot produce two bolts with the same serial number), then we can get a proof of polynomial min-entropy, though the min-entropy may still be much smaller than the overall length of the random string.

We leave obtaining higher min-entropy as an interesting open problem for future work.  One may hope to use randomness extraction, but analyzing appears difficult.  The reason is that randomness extraction usually assumes a random seed that is independent of distribution being extracted.  For quantum lightning, however, the random seed would be part of the description for $\verbolt$, and therefore known to the adversary.  The adversary can potentially craft its distribution on serial numbers so that the extractor fails with the given seed.

\paragraph{Blockchain-less Cryptocurrency.} Finally, we consider using quantum lightning to obtain blockchain-less cryptocurrency.  A coin is simply a bolt, except that the serial number must hash to a value that begins with a certain number of zeros.  To generate a new coin, simply keep generating bolts until the serial number's hash has the prescribed number of zeros.  

Treating the hash function as a random oracle means that the only way to generate a coin is to actually keep generating bolts until the serial number hashes correctly.  The number of zeros is set so that it takes a moderate amount of time to generate new coins.  This ensures scarcity, a crucial feature of any cryptocurrency.

This cryptocurrency is unlikely to be useful in practice due to a very important limitation.  Namely, as technology gets better, it will be easier and easier to create new coins.  Without any modifications, this will lead to an exponentially increasing supply of coins, and hence rampant inflation.  One option is to keep requiring the hashes to contain more and more zeros, but this will render old coins invalid; with our scheme, it is impossible to distinguish coins made today with coins made last week or last year.  In either case, the result is highly undesirable.  

Notice that current cryptocurrency instantiations avoid these problems, essentially, because it is possible to distinguish new coins from old, due to all coins being recorded on a blockchain.  Hence, it is possible to increase the number of 0's required in a hash to combat inflation.

We leave it as an interesting open problem to fix our protocol.  One hope is to combine quantum lightning with some form of time-released cryptography.  The hope is to actually provide a way to prove that a coin was minted some time in the past, so that it can be accepted using the verification procedure from the time it was minted.

\section{Win-win Results}

\label{sec:winwin}

In this section, we give several win-win results for public key quantum money and quantum lightning.  

Recently, Unruh~\cite{EC:Unruh16} and Garg, Yuen, and Zhandry~\cite{C:GarYueZha17} have shown limitations with prior definitions for commitment schemes, hash functions, and signatures\footnote{Technically, Garg et al. only study message authentication codes, but their discussion applies to signatures as well}.  The problem is that the prior definitions implicitly assume that an adversary capable of producing two objects is able to do so simultaneously.  However, in the quantum setting, it may be possible for an adversary to produce one of two objects, but impossible for it to simultaneously produce both.  

For example, consider the case of signatures, which provide integrity over an insecure channel.  Classically, if an adversary intercepts a signed message and modifies it into a different signed message, it then has two signed messages, the one it received and the one it produced.  Inspired by this, Boneh and Zhandry give the first reasonable definition for security when the adversary sees a superposition of signed messages.  Their notion, roughly, says that such an adversary cannot produce two different signed messages.  Unfortunately, this definition allows for undesirable outcomes.  For example, if the original signed message always begins with the bit 0, it would be desirable for any signed message produced by the adversary to also have the first bit be 0.  However, the Boneh-Zhandry definition allows for an adversary to construct a signed message that begins with 1; since the adversary only ever produces a single signed message, this does not contradict Boneh-Zhandry security.  

To combat such situations, Garg, Yuen, and Zhandry define a much stronger notion of security for signatures that rules out such attacks.  Similar situations arise for commitment schemes and hash functions, and Unruh~\cite{EC:Unruh16} similarly gives definitions that rule out these undesirable settings.  

In this section, we show that these undesirable attacks, if they exist for a particular scheme, actually yield quantum money or quantum lightning.  Thus, any scheme that meets the weaker old security notions either (1) actually also meets the stronger security definitions, or (2) can be used to construct quantum money/lightning, in either case leading to a positive outcome.  Given the difficulty of constructing public key quantum money, we interpret our win-win results to suggest that most natural constructions of primitives actually meet the stronger security properties.

\subsection{Infinity-Often Security} 

Before describing our win-win results, we need to introduce a non-standard notion of security.  Typically, a security statement says that no polynomial-time adversary can win some game, except with negligible probability.  A violation of the security statement is a polynomial-time adversary that can win with \emph{non}-negligible probability; that is, some probability $\epsilon$ that is lower bounded by an inverse-polynomial \emph{infinitely often}.  

Our win-win results are of the form ``either (A) is secure or (B) is secure.''  Unfortunately, one of the two security properties needs to be relaxed slightly.  The reason is that we will use a supposed attack for (A) to yield a verifier for (B) that allows us to prove security.  However, if the attack for (A) only succeeds with non-negligible probability, it's winning probability may frequently be too small to be useful for proving (B).  Instead, we will either treat a break for (A) as yielding an attack with an actual inverse polynomial winning probability (so that it will always be useful), or only guarantee security for (B) infinitely often (basically, in the cases where the attack for (A) was useful).

This motivates the notion of \emph{infinitely often security}.  A scheme has infinitely-often security if, roughly, security holds for an infinite number of security parameters, but not necessarily all security parameters.  More precisely, instead of a poly-time adversary's advantage or success probability being upper-bounded by a negligible function, it is only guaranteed to be bounded infinitely often by a negligible function.  If a scheme is not infinitely-often secure, it means that there is an adversary that has an inverse polynomial advantage (as opposed to non-negligible).  It is straightforward to modify all security notions in this work to infinitely-often variants.

Our win-win results will therefore be phrased as:
\begin{itemize}
	\item ``either (A) is secure or (B) is infinitely-often secure'', and 
	\item ``either (A) is infinitely-often secure or (B) is secure.''
\end{itemize}

\subsection{Collision Resistant Hashing}

A hash function is a function $H$ that maps large inputs to small inputs.  We will considered keyed functions, meaning it takes two inputs: a key $k\in\{0,1\}^\lambda$, and the actual input to be compressed, $x\in\{0,1\}^{m(\lambda)}$.  The output of $H$ is $n(\lambda)$ bits.  For the hash function to be useful, we will require $m(\lambda)\gg n(\lambda)$.

The usual security property for a hash function is collision resistance, meaning it is computationally infeasible to find two inputs that map to the same output.
\begin{definition}$H$ is collision resistant if, for any quantum polynomial time adversary $A$,
	\[\Pr[H(x_0)=H(x_1)\wedge x_0\neq x_1:(x_0,x_1)\gets A(k),k\gets\{0,1\}^\lambda]<\negl(\lambda)\]
\end{definition}

Unruh~\cite{EC:Unruh16} points out weaknesses in the usual collision resistance definition, and instead defines a stronger notion called \emph{collapsing}.  Intuitively, it is easy for an adversary to obtain a superposition of pre-images of some output, by running $H$ on a uniform superposition and then measuring the output.  Collapsing requires, however, that this state is computationally indisitnguishable from a random input $x$.  More precisely, for an adversary $A$, consider the following experiment between $A$ and a challenger
\begin{itemize}
	\item The challenger has an input bit $b$.
	\item The challenger chooses a random key $k$, which it gives to $A$.
	\item $A$ creates a superposition $|\psi\rangle=\sum_x \alpha_x |x\rangle$ of elements in $\{0,1\}^{m(\lambda)}$.
	\item In superposition, the challenger evaluates $H(k,\cdot)$ to get the state $|\psi'\rangle=\sum_x \alpha_x |x,H(k,x)\rangle$
	\item Then, the challenger either:
	\begin{itemize}
		\item If $b=0$, measures the $H(k,x)$ register, to get a string $y$.  The state $|\psi'\rangle$ collapses to $|\psi_y\rangle\propto\sum_{x:H(k,x)=y}\alpha_x|x,y\rangle$
		\item If $b=1$, measures the entire state, to get a string $x,H(k,x)$.  The state $|\psi'\rangle$ collapses to $|x,H(k,x)\rangle$
	\end{itemize}
	\item The challenger returns whatever state remains of $|\psi'\rangle$ (namely $|\psi_y\rangle$ or $|x,H(k,x)\rangle$) to $A$.
	\item $A$ outputs a guess $b'$ for $b$.  Define $\collapseexp_b(A,\lambda)$ as the random variable $b'$.
\end{itemize}

\begin{definition}$H$ is collapsing if, for all quantum polynomial time adversaries $A$, \[|\Pr[\collapseexp_0(A,\lambda)=1]-\Pr[\collapseexp_1(A,\lambda)=1]|<\negl(\lambda)\]
\end{definition}

\begin{theorem}\label{thm:collision} Suppose $H$ is collision resistant.  Then both of the following are true:
	\begin{itemize}
		\item Either $H$ is collapsing, or $H$ can be used to build a quantum lightning scheme that is infinitely often secure.
		\item Either $H$ is infinitely often collapsing, or $H$ can be used to build a quantum lightning scheme that is secure.
	\end{itemize}
\end{theorem}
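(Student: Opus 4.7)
The plan is to turn a non-collapsing adversary for $H$ into a quantum lightning scheme whose security reduces directly to collision resistance. Let $A=(A_1,A_2)$ be the supposed non-collapsing adversary: $A_1$ prepares an initial superposition on input a hash key $k$, and $A_2$ is the final distinguisher. If $H$ fails to be collapsing then its distinguishing advantage $\epsilon(\lambda)$ is non-negligible, so $\epsilon(\lambda)\geq 1/p(\lambda)$ for infinitely many $\lambda$; if $H$ fails even to be infinitely-often collapsing, then this holds for all sufficiently large $\lambda$. The two bullets of the theorem correspond exactly to these two regimes.

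Next I would set up the construction. $\qsetup(1^\lambda)$ samples $k\leftarrow\{0,1\}^\lambda$. The bolt generator $\genbolt$ produces $t(\lambda)=\Theta(\lambda/\epsilon^2)$ independent basic states, each prepared by running $A_1(k)$, applying $H(k,\cdot)$ coherently to a fresh register, and measuring that register to obtain $y_i$, leaving a residual state $|\psi_{y_i}\rangle$ supported on the $H(k,\cdot)$-preimages of $y_i$. The bolt is the product $|\psi_{y_1}\rangle\otimes\cdots\otimes|\psi_{y_t}\rangle$ and the serial number is $(y_1,\ldots,y_t)$. Verification $\verbolt$ recomputes each hash coherently into a fresh register and measures to read off the serial number, then applies $A_2$ independently in each of the $t$ coordinates and accepts iff the fraction of coordinates declared ``superposition'' exceeds $1/2+\epsilon/2$. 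A Chernoff bound shows honest bolts pass except with probability $e^{-\Omega(\epsilon^2 t)}=\negl(\lambda)$, while the gentle measurement lemma gives approximate non-disturbance of honest bolts.

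For security, suppose an efficient $\advlightning$ produces two (possibly entangled) bolts $|\Phi_0\rangle,|\Phi_1\rangle$ both accepted with the same serial number $(y_1,\ldots,y_t)$. In strictly more than $t/2$ coordinates, both bolts are accepted by $A_2$; I would argue by combining parallel amplification with gentleness that, conditioned on acceptance, the post-verification joint state on those coordinates is negligibly close to a state supported on pairs of preimages of the appropriate $y_i$. Measuring in the computational basis then yields, in each such coordinate, two preimages $x_i^{(0)},x_i^{(1)}$ of $y_i$; since each $|\psi_{y_i}\rangle$ typically has super-polynomial support, the probability that $x_i^{(0)}=x_i^{(1)}$ in every passing coordinate is negligible, so one obtains a collision for $H(k,\cdot)$, contradicting collision resistance. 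The main obstacle is precisely this last step: because $A_2$ is only a weak distinguisher rather than an exact projector onto the superposition subspace, one must carefully show that majority-of-$A_2$-acceptances nevertheless functions as an approximate projector, so that conditioning on acceptance really does pin the joint state down to the product of preimage subspaces --- this is the part of the argument that replaces the ``perfect distinguisher'' intuition from the introduction with an honest quantitative analysis.
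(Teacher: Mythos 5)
There is a genuine gap in the security argument, and it sits exactly where you flag ``the main obstacle.'' Your verification runs $A_2$ only in one mode --- on the unmeasured bolt --- and accepts when a majority of coordinates is ``declared superposition.'' But $A_2$'s behavior is only guaranteed on the honest distribution of states produced by $A_1$ followed by the challenger's measurement; on adversarially crafted inputs it can behave arbitrarily. In particular, the bolt necessarily carries $A_1$'s auxiliary/private registers (otherwise $A_2$ cannot be run at all), and an adversarial storm controls those registers. It can therefore output, say, $|x\rangle$ for a single fixed preimage $x$ together with auxiliary registers rigged so that $A_2$ always answers ``superposition.'' This classical bolt is trivially clonable, both copies pass your verifier with the same serial number, and measuring both copies yields the same $x$ twice --- no collision. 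So ``majority-of-$A_2$-acceptances functions as an approximate projector onto the preimage superposition'' is not something you can prove; it is false for some non-collapsing adversaries $A$, and no amount of parallel repetition of the same one-sided test fixes it.

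The paper's proof supplies precisely the missing mechanism: the verifier flips a coin $c$ and runs one of two tests --- ${\sf Test}_0$ runs the (amplified) distinguisher on the bolt as-is and rejects on output ``measured,'' while ${\sf Test}_1$ first copies the input register out (a projective measurement in the computational basis), runs the same distinguisher, and rejects on the \emph{opposite} output. If the input register is essentially classical (weight $\geq 1-\alpha$ on a single $x_0$ with $\alpha\leq 1/200$), the measurement in ${\sf Test}_1$ barely disturbs the state (Claim~\ref{claim:measurement} plus Lemma~\ref{lemma:distance}), so the two tests run nearly the same computation with flipped acceptance and the bolt passes the random test with probability at most $3/4$ --- regardless of how the auxiliary registers were crafted. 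Only a bolt whose input register has weight at least $1/200$ off every single value can beat $3/4$, and such a bolt yields distinct colliding preimages upon measurement with probability at least $1/200$. The $\lambda$-fold repetition then drives the acceptance probability of non-colliding bolts to $(3/4)^\lambda$. To repair your proposal you would need to add the measured test with inverted acceptance (and set the majority threshold at $(q_0+q_1)/2$ rather than $1/2+\epsilon/2$, since $q_0,q_1$ need not straddle $1/2$); the rest of your construction --- the parallel amplification of the distinguisher and the Chernoff/gentle-measurement correctness argument --- matches the paper's.
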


\begin{proof} Let $A$ be a collapsing adversary; the only difference between the two cases above are whether $A$'s advantage is non-negligible or actually inverse polynomial.  The two cases are nearly identical, but the inverse polynomial case will simplify notation.  We therefore assume that $H$ is not infinitely-often collapsing, and will design a quantum lightning scheme that is secure.
	
Let $A_0$ be the first phase of $A$: it receives a hash key $k$ as input, and produces a superposition of pre-images, as well as it's own internal state.  Let $A_1$ be the second phase of $A$: it receives the internal state from $A_0$, plus the superposition of input/output pairs returned by the challenger.  It outputs 0 or 1.	

Define $q_b(\lambda)=\Pr[\collapseexp_b(A,\lambda)=1]$.  By assumption, we have that $|q_0(\lambda)-q_1(\lambda)|\geq 1/p(\lambda)$ for some polynomial $p$.  We will assume $q_0<q_1$, the other case handled analogously.  

For an integer $r$, consider the function $H^{\otimes r}(k,\cdot)$ which takes as input a string $x\in(\{0,1\}^{m(\lambda)})^r$, and outputs the vector $(H(k,x_1),\dots,H(k,x_r))$.  The collision resistance of $H$ easily implies the collision resistance of $H^{\otimes r}$, for any polynomial $r$.  Moreover, we will use $A$ to derive a collapsing adversary $A^{\otimes r}$ for $H^{\otimes r}$ which has near-perfect distinguishing advantage.  $A^{\otimes r}$ works as follows.

\begin{itemize}
	\item First, it runs $A_0$ in parallel $r$ times to get $r$ independent states $|\psi_i\rangle$, where each $|\psi_i\rangle$ contains a superposition of internal state values, as well as inputs to the hash function.
	\item It assembles the $r$ superpositions of inputs into a superposition of inputs for $H^{\otimes r}$, which it then sends to the challenger.
	\item The challenger responds with a potential superposition over input/output pairs (through the output value in $(\{0,1\}^{n(\lambda)})^r$ is fixed).  
	\item $A^{\otimes r}$ disassembles the input/output pairs into $r$ input/output pairs for $H$.  
	\item It then runs $A_1$ in parallel $r$ times, on each of the corresponding state/input/output superpositions, to get bits $b_1',\dots,b_r'$.
	\item $A^{\otimes r}$ then computes $f=(\sum_i b_i')/r$, the fraction of $b_i'$ that are 1.
	\item If $f>(q_0+q_1)/2$ (in other words, $f$ is closer to $q_1$ than it is to $q_0$), $A$ outputs 1; otherwise it outputs 0.
\end{itemize}

Notice that if $A^{\otimes r}$'s challenger uses $b=0$ (so it only measures the output registers), this corresponds to each $A$ seeing a challenger with $b=0$.  In this case, each $b_i'$ with be 1 with probability $q_0$.  This means that $f$ will be a (normalized) Binomial distribution with expected value $q_0$.  Analogously, if $b=1$, each $b_i'$ will be 1 with probability $q_1$, so $f$ will be a normalized Binomial distribution with expected value $q_1$.  Since $q_1-q_0\geq 1/p(\lambda)$, we can use Hoeffding's inequality to choose $r$ large enough so that in the $b=0$ case, $f<(q_0+q_1)/2=q_0+1/2p(\lambda)$ except with probability $2^{-\lambda}$.  Similarly, in the $b=1$ case, $f>(q_0+q_1)/2=q_1-1/2p(\lambda)$ except with probably $2^{-\lambda}$.  This means $A^{\otimes r}$ outputs the correct answer except with probability $2^{-\lambda}$.

We now describe a first attempt at a quantum lightning scheme:

\begin{itemize}
	\item $\qsetup_0$ simply samples and outputs a random hash key $k$.  This key will determine $\genbolt_0,\verbolt_0$ as defined below.
	\item $\genbolt_0$ runs $A_0^{\otimes r}(k)$, where $r$ is as chosen above and $A_0^{\otimes r}$ represents the first phase of $A^{\otimes r}$.
	
	When $A^{\otimes r}_0$ produces a superposition $|\psi\rangle$ over inputs $x\in\{0,1\}^{rm}$ for $H^{\otimes r}(k,\cdot)$ as well as some private state, $\genbolt_0$ applies $H^{\otimes r}$ in superposition, and measures the result to get $y\in\{0,1\}^{rn}$.
	
	Finally, $\genbolt_0$ outputs the resulting state $\qlightning=|\psi_y\rangle$.
	
	\item $\verbolt_0$ on input a supposed bolt $\qlightning$, first applies $H^{\otimes r}(k,\cdot)$ in superposition to the input registers to obtain $y$, which it measures.  It saves $y$, which will be the serial number for the bolt.
	
	Next, consider two possible tests ${\sf Test}_0$ and ${\sf Test}_1$.  In ${\sf Test}_0$, run $A_1^{\otimes r}$ --- the second phase of $A^{\otimes r}$ --- on the $\qlightning$ and measure the result.  If the result is 1 (meaning $A^{\otimes r}$ guesses that the challenger measured the entire input/output registers), then abort and reject.  Otherwise if the result is 0 (meaning $A^{\otimes r}$ guess that the challenger only measured the output), then it un-computes $A_1^{\otimes r}$.  Note that since we measured the output of $A_1^{\otimes r}$, un-computing does not necessarily return the bolt to its original state. 
	
	${\sf Test}_1$ is similar to ${\sf Test}_0$, except that the input registers $x$ are measured before running $A_1^{\otimes r}$. This measurement is not a true measurement, but is instead performed by copying $x$ into some private registers.  Moreover, the abort condition is flipped: if the result of applying $A_1^{\otimes r}$ is 0, then abort and reject.  Otherwise un-compute $A_1^{\otimes r}$, and similarly ``un-measure'' $x$ by un-computing $x$ from the private registers.
	
	$\verbolt_0$ chooses a random $c$, and applies  ${\sf Test}_c$.  If the test accepts, then it outputs the serial number $y$, indicated that it accepts the bolt.
\end{itemize}

\paragraph{Correctness.}  For a valid bolt, ${\sf Test}_0$ corresponds to the $b=0$ challenger, in which case we know $A_1^{\otimes r}$ outputs 0 with near certainty.  This means $\verbolt$ continues, and when it un-computes, the result will be negligibly close to the original bolt. Similarly, ${\sf Test}_1$ corresponds to the $b=1$ challenger, in which case $A_1^{\otimes r}$ outputs 1 with near certainty.  Un-computing returns the bolt to (negligibly close to) its original state.  For a valid bolt, the serial number is always the same.  Thus, $\genbolt,\verbolt$ satisfy the necessary correctness requirements.  

\paragraph{Security.}  Security is more tricky.  Suppose instead of applying a random ${\sf Test}_c$, $\verbolt_0$ applied both tests.  The intuition is that if $\verbolt$ accepts, it means that the two possible runs of $A_1^{\otimes r}$ would output different results, which in turn means that $A_1^{\otimes r}$ detected whether or not the input registers were measured.  For such detection to even be possible, it must be the case that the input registers are in superposition.  Then suppose an adversarial storm $\advlightning$ generates two bolts $\qlightning[0],\qlightning[1]$ that are potentially entangled such that both pass verification with the same serial number.  Then we can measure both states, and the result will (with reasonable probability) be two distinct pre-images of the same $y$, representing a collision.  By the assumed collision-resistance of $H$ (and hence $H^{\otimes r}$), this will means a contradiction.

The problem with the above informal argument is that we do not know how $A_1^{\otimes r}$ will behave on non-valid bolts that did not come from $A_0^{\otimes r}$.  In particular, maybe it passes verification with some small, but non-negligible success probability.  It could be that after passing ${\sf Test}_0$, the superposition has changed significantly, and maybe is no longer a superposition over pre-images of $y$, but instead a single pre-image.  Nonetheless, if the auxiliary state registers are not those generated by $A_0^{\otimes r}$, it may be that the second test still accepts  --- for example, it may be that if $A^{\otimes r}$'s private state contains a particular string, it will always accept; normally this string would not be present, but the bolt that remains after performing one of ${\sf Test}_c$ may contain this string.  We have to be careful to show that this case cannot happen, or if it does there is still nonetheless a way to extract a collision.  

Toward that end, we only choose a single test at random.  We will first show a weaker form of security, namely that an adversary cannot produce two bolts that are both accepted with probability close to 1 and have the same serial number.  Then we will show how to modify the scheme so that it is impossible to produce bolts that are even accepted with small probability.

Consider a bolt where, after measuring $H(k,\cdot)$, the inputs registers are \emph{not} in superposition at all.  In this case, the measurement in ${\sf Test}_1$ is redundant, and we therefore know that both runs of ${\sf Test}_c$ are the same, except the acceptance conditions are flipped.  Since the choice of test is random, this means that such a bolt can only pass verification with probability at most $1/2$.  

More generally, suppose the bolt was in superposition, but most of the weight  was on a single input $x_0$.  Precisely, suppose that when measuring the $x$ registers, $x_0$ is obtained with probability $1-\alpha$ for some relatively small $\alpha$.   We prove the following:
\begin{claim}\label{claim:measurement} Consider a quantum state $|\phi\rangle$ and a projective partial measurement on some of the registers.  Let $|\phi_x\rangle$ be the state left after performing the measurement and obtaining $x$.  Suppose that some outcome of the measurement $x_0$ occurs with probability $1-\alpha$.  Then $\||\phi_{x_0}\rangle-|\phi\rangle\| < \sqrt{2\alpha}$
\end{claim}
\begin{proof} First, the $|\phi_x\rangle$ are all orthogonal since the measurement was projective.  Let $\Pr[x]$ be the probability that the partial measurement obtains $x$.  It is straightforward to show that $|\phi\rangle=\sum_y \sqrt{\Pr[x]}\beta_x|\phi_x\rangle$ for some $\beta_x$ of unit norm. The overall phase can be taken to be arbitrary, so we can set $\beta_{x_0}=1$.  Then we have $\langle\phi_{x_0}|\phi\rangle = \sqrt{1-\alpha}$.  
		
	This means $\||\phi_{x_0}\rangle -|\phi\rangle\|^2 = 2-2(\langle\phi_{x_0}|\phi\rangle) = 2-2\sqrt{1-\alpha}\leq 2\alpha$ for $\alpha\in[0,1]$.
\end{proof}

Now, suppose for the bolt that ${\sf Test}_0$ passes with probability $t$.  Suppose $\alpha\leq 1/200$.  Then ${\sf Test}_1$ can only pass with probability at most $3/2-t$.  This is because with probability at least $199/200$, the measurement in ${\sf Test}_1$ yields $x_0$.  Applying Claim~\ref{claim:measurement}, the result in this case is at most a distance $\sqrt{2/200}=\frac{1}{10}$ from the original bolt.  In this case, since the acceptance criteria for ${\sf Test}_1$ is the opposite of ${\sf Test}_0$, the probability ${\sf Test}_1$ passes is at most $1-t+\frac{4}{10}$ by Lemma~\ref{lemma:distance}.  Over all then, ${\sf Test}_1$ passes with probability at most $(199/200)\left(1-t+\frac{4}{10}\right)+(1/200)\leq \frac{3}{2}-t$.  

Therefore, since the test is chosen at random, the probability of passing the test is the average of the two cases, which is at most $\frac{3}{4}$ regardless of $t$.  Therefore, for any candidate pair of bolts $\qlightning[0]\qlightning[1]$, either:
\begin{itemize}
	\item[(1)] If the bolts are measured, two different pre-images of the same $y$, and hence a collision for $H^{\otimes r}$, will be obtained with probability at least $1/200$
	\item[(2)] The probability that both bolts accept and have the same serial number is at most $\frac{3}{4}$.  
\end{itemize}

Notice that if $\qlightning[0],\qlightning[1]$ are produced by an adversarial storm $\advlightning$, then event (1) can only happen with negligible probability, else we obtain a collision-finding adversary.  Therefore, we have that for any efficient $\advlightning$, except with negligible probability, the probability that both bolts produced by $\advlightning$ accept and have the same serial number is at most $\frac{3}{4}$.  

In the full scheme, a bolt is simply a tuple of $\lambda$ bolts produced by $\genbolt_0$, and the serial number is the concatenation of the serial numbers from each constituent bolt.  The above analysis show that for any efficient adversarial storm $\advlightning$ that produces two bolt sequences $\qlightning[b]=(\qlightning[b,1],\dots,\qlightning[b,\lambda])$, the probability that both sequences completely accept and agree on the serial numbers is, except with negligible probability, at most $\left(\frac{3}{4}\right)^\lambda$, which is negligible.  Thus we obtain a valid quantum lightning scheme.

\end{proof}

\subsection{One-time Signatures}

A signature scheme consists of three polynomial time classical algorithms $\gen,\sign,\ver$.  $\gen$ is a randomized procedure that takes as input the security parameter and produces a secret key and public key pair $(\pk,\sk)\gets\gen(1^\lambda)$.  $\sign$ takes as input the secret key and a message $m$, and produces a signature $\sigma\gets\sign(\sk,m)$.  Finally, $\ver$ takes as input the public key, a message $m$, and a supposed signature $\sigma$ on $m$, and either accepts or rejects.

A signature scheme is correct if $\ver$ accepts signatures outputted by $\sign$: for \[\Pr[\ver(\pk,m,\sigma)=1:\sigma\gets\sign(\sk,m),(\sk,\pk)\gets\gen(1^\lambda)]\geq 1-\negl(\lambda)\]

For security, we will for simplicity only consider one-time signature schemes where the adversary only receives a single superposition of messages.  Also, following Garg, Yuen, and Zhandry~\cite{C:GarYueZha17}, for this subsection only, we will consider the created response model of a quantum query, where the oracle supplies the response register.  Modeling security in the more common supplied response setting is a more complicated task.  Finally, again for simplicity we assume that the signing function is a deterministic function of the secret key and message; this can be made without loss of generality by using a pseudorandom function to generate the randomness.

\paragraph{Boneh-Zhandry security.}  Boneh and Zhandry~\cite{C:BonZha13} give the following definition of security for signatures in the presence of quantum adversaries.  Let $A$ be a quantum adversary, and consider the following experiment between $A$ and a challenger:
\begin{itemize}
	\item The challenger runs $(\sk,\pk)\gets\gen(\lambda)$, and gives $\pk$ to $A$
	\item $A$ makes a quantum superpositions query to the function $m\mapsto \sign(\sk,m)$
	\item $A$ outputs two classical message/signature pairs $((m_0,\sigma_0),(m_1,\sigma_1))$.  
	\item The challenger accepts and outputs 1 if and only if (1) $m_0\neq m_1$, and (2) $\ver(\pk,m_b,\sigma_b)$ for both $b\in\{0,1\}$.  Denote this output by $\wbzexp(A,\lambda)$.  
\end{itemize}

\begin{definition}[Boneh-Zhandry~\cite{C:BonZha13}] A signature scheme is one-time weakly BZ-secure if, for any quantum polynomial time adversary $A$, $\wbzexp(A,\lambda)$ is negligible.
\end{definition}

We can also consider a stronger variant, where the challenger accepts if $(m_0,\sigma_0)\neq(m_1,\sigma_1)$, ruling out the possibility of producing two signatures on a single message.  Denote the output of this modified experiment by $\sbzexp(A,\lambda)$.

\begin{definition}[Boneh-Zhandry~\cite{C:BonZha13}] A signature scheme is one-time strongly BZ-secure if, for any quantum polynomial time adversary $A$, $\sbzexp(A,\lambda)$ is negligible.
\end{definition}

\paragraph{Garg-Yuen-Zhandry security.} Garg, Yuen, and Zhandry~\cite{C:GarYueZha17} recently give a strengthening of Boneh-Zhandry security, which rules out the types of attacks discussed above that are possible under Boneh-Zhandry security.

Let $A$ be an adversary, and consider the following experiment:
\begin{itemize}
	\item The challenger runs $(\sk,\pk)\gets\gen(\lambda)$, and gives $\pk$ to $A$
	\item $A$ makes a quantum superposition query to the function $m\mapsto \sign(\sk,m)$.  
	\item $A$ outputs a superposition $|\psi\rangle=\sum_{m,\sigma,\aux}\alpha_{m,\sigma,\aux}|m,\sigma,\aux\rangle$ of message/tag pairs as well as auxiliary information.
	\item The challenger runs $\ver$ on the $m,\sigma$ registers using the public key.  This is done in superposition.  The challenger then measures the output of $\ver$.
	\begin{itemize}
		\item If the output is 0, the challenger outputs $\bot$.
		\item Otherwise, the challenger outputs what remains in the registers $m,\sigma,\aux$.  This is the state \[|\psi'\rangle\propto \sum_{\substack{m,\sigma,\aux:}{\ver(\pk,m,\sigma)=1)}}\alpha_{m,\sigma,\aux}|m,\sigma,\aux\rangle\]
		Call this output $\gyzexp(A,\lambda)$.
	\end{itemize}
\end{itemize}

We call $A$ \emph{$m$-respecting} if, after the signing query, $A$ cannot directly modify the $m$ registers, but is allowed to operate on the remaining registers, potentially based on the contents of $m$.  Alternatively, $A$ can replace all registers with a special symbol $\bot$.  This captures the ability of an adversary who intercepts a superposition of signed messages to measure the message, copy the message to some other register, and potentially operate on it's own private space.  Such operations will not affect verification.  Moreover, such an adversary can also throw away the entire superposition, and replace it with arbitrary junk.  However, in this case, verification will reject, so we might as well just have $A$ produce $\bot$.

Analogously, we call $A$ \emph{$(m,\sigma)$-respecting} if we do not allow $A$ to directly modify the $m$ or $\sigma$ registers.  $A$ is still allowed to operate on its remaining registers, potentially based on the contents of $m,\sigma$.  Alternatively, $A$ can replace all registers with $\bot$.

\begin{definition}[Garg-Yuen-Zhandry~\cite{C:GarYueZha17}] A signature scheme is one-time weakly GYZ-secure if, for any quantum polynomial time adversaries $A$, there exists an $m$-respecting quantum polynomial time $S$ such that the following two distributions on states are quantum polynomial-time indistinguishable:
	\[\gyzexp(A,\lambda)\;\;\;\;\text{and}\;\;\;\;\gyzexp(S,\lambda)\]
	
\noindent The scheme is one-time strongly GYZ-secure if $S$ can be taken to be $(m,\sigma)$-respecting.  
\end{definition}

\begin{theorem} Suppose $(\gen,\sign,\ver)$ is one-time weakly (resp. strongly) BZ-secure.   Then both of the following are true:
	\begin{itemize}
		\item The scheme is either one-time weakly (resp. strongly) GYZ-secure, or can be used to build an \emph{infinitely-often} secure  public key quantum money scheme.
		\item The scheme is either \emph{infinitely often} one-time weakly (resp. storngly) GYZ-secure, or can be used to build a secure public key quantum money scheme.
	\end{itemize}
\end{theorem}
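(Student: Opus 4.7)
The plan is to mimic the hash-function win-win (Theorem~\ref{thm:collision}) almost line for line, with Unruh's collapsing distinguisher replaced by the GYZ simulator-distinguisher. Assume the scheme fails (infinitely-often) weak GYZ security: there is a polynomial-time adversary $A$ and polynomial-time distinguisher $D$ such that, for every $m$-respecting simulator $S$, $D$ tells $\gyzexp(A,\lambda)$ from $\gyzexp(S,\lambda)$ with inverse-polynomial bias. I would instantiate $S$ as the canonical $m$-respecting simulator $S^*$ that runs $A$'s pre-query circuit $A_0$, performs the signing query, measures the $m$-register, and then applies to the remaining registers the same operations $A$'s post-query circuit $A_1$ would apply, with any direct action on $m$ replaced by identity. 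Against this $S^*$, $D$ effectively detects whether the $m$-register has been collapsed or left coherent, and I amplify it to a distinguisher $D^{\otimes r}$ of near-perfect advantage by the same Hoeffding argument used in the proof of Theorem~\ref{thm:collision}.

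The money construction mirrors $(\genbolt_0,\verbolt_0)$ from that proof. $\genmoney(1^\lambda)$ samples $(\pk,\sk)\gets\gen(1^\lambda)$, runs $r$ parallel copies of $A$ using $\sk$ to answer the signing queries, enforces via in-superposition $\ver$ that every $(m_i,\sigma_i)$ is valid, and outputs $\pk$ together with the resulting joint state, with serial number $\pk$. $\vermoney$ rechecks $\ver$ on every $(m_i,\sigma_i)$ in superposition, then flips a random bit $c$: if $c=0$, apply $D^{\otimes r}$ and accept iff it outputs ``$A$''; if $c=1$, copy each $m_i$ register into a private ancilla, apply $D^{\otimes r}$, accept iff it outputs ``$S^*$'', and then uncompute. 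Correctness follows as in Theorem~\ref{thm:collision}: genuine money passes both branches except with probability $2^{-\lambda}$, and the nearly-deterministic measurement in ${\sf Test}_1$ lets the uncomputation restore the state up to negligible error. I would then repeat the entire scheme $\lambda$-fold in parallel with bundled serial $(\pk_1,\dots,\pk_\lambda)$, to drive residual acceptance probabilities to negligible.

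For security, suppose a duplicator produces two (possibly entangled) banknotes both passing $\vermoney$ on the same serial. Applying Claim~\ref{claim:measurement} to the marginal of each copy, exactly as in the proof of Theorem~\ref{thm:collision}, shows that if any one copy's $m$-register is $(1-\alpha)$-concentrated on a single value for some $\alpha<1/200$, then ${\sf Test}_0$ and ${\sf Test}_1$ are nearly contradictory on that copy and its acceptance probability is at most $3/4$, yielding $(3/4)^\lambda$ after the outer repetition. In the complementary regime every $m$-register is spread out, so measuring both banknotes on the $(m,\sigma)$-registers returns distinct $m_0\neq m_1$ with $\ver$-valid signatures under a common $\pk_i$ with inverse-polynomial probability: a weak-BZ forgery, contradicting the hypothesis. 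The strong-BZ/strong-GYZ variant uses the identical construction but with $S^*$ measuring the joint $(m,\sigma)$-register (making $S^*$ genuinely $(m,\sigma)$-respecting), and the same endgame produces two distinct $(m,\sigma)$ pairs. The main obstacle, as in the hash function proof, is not the honestly generated banknote but the analysis of adversarial money states that pass $\vermoney$ with small probability while carrying an almost-classical $m$-register; the random-$c$ test together with Claim~\ref{claim:measurement} and the outer $\lambda$-fold repetition are what rule this out, and I expect the entire piece of reasoning to port to the signature setting with only notational changes.
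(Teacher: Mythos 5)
Your overall architecture (random choice between a ``coherent'' test and a ``measure-then-test'' test, the $3/4$ bound via Claim~\ref{claim:measurement}, extraction of a BZ forgery in the spread-out case, and $\lambda$-fold parallel repetition) matches the paper. But there is a genuine gap at the very first step: converting a GYZ failure into a distinguisher between \emph{measured} and \emph{unmeasured} runs of the \emph{same} circuit. Your canonical simulator $S^*$ measures the $m$-register and then runs a \emph{modified} post-query circuit ($A_1$ with its actions on $m$ ``replaced by identity''). The distinguisher $D$ that GYZ failure hands you only separates $\gyzexp(A,\lambda)$ from $\gyzexp(S^*,\lambda)$, and its advantage could come entirely from the circuit modification rather than from the measurement; in that case $D$ does not ``detect whether the $m$-register has been collapsed'' at all, and the whole verification procedure collapses: ${\sf Test}_0$ and ${\sf Test}_1$ would run genuinely different circuits, so the argument that a concentrated $m$-register makes the two tests contradictory (acceptance probability $\le 3/4$) no longer applies. (There is also the secondary problem that ``replace direct actions on $m$ by identity'' is not a well-defined surgery on an arbitrary quantum circuit in which gates straddle $m$ and other registers.)

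The paper closes this gap with an extra reduction that your proposal omits. It introduces an intermediate algorithm $B$, which is exactly $A$ with the $(m,\sigma)$ registers measured after the signing query and \emph{no other change}, alongside a genuinely respecting simulator $S$ that runs $A_1$ unmodified but post-selects at the end on the output $(m,\sigma)$ equalling the measured $(m_0,\sigma_0)$ (outputting $\bot$ otherwise). It then proves, \emph{using one-time BZ security}, that $|\Pr[D(\gyzexp(B,\lambda))=1]-\Pr[D(\gyzexp(S,\lambda))=1]|$ is negligible: any noticeable difference means $B$'s final state has noticeable weight on a valid $(m,\sigma)\neq(m_0,\sigma_0)$, which is itself a BZ forgery. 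Only then does the triangle inequality give an inverse-polynomial gap between $\gyzexp(A,\lambda)$ and $\gyzexp(B,\lambda)$ --- i.e.\ a measured-vs-unmeasured distinguisher for the unmodified circuit --- which is what ${\sf Test}_0$ versus ${\sf Test}_1$ actually needs. So BZ security is used twice in the real proof (once here, once in the final forgery extraction), whereas your proposal only invokes it at the end; you need to add this $B$-versus-$S$ step for the argument to go through.
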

\begin{proof} We prove the strong setting, assuming the scheme is not infinitely-often GYZ-secure.  The weak setting is nearly identical, as is the case where the scheme is not (always) GYZ-secure.  Consider an adversary $A$ for one-time strong infinitely-often GYZ security.  Consider the following two algorithms derived from $A$:

\begin{itemize}
	\item Let $B$ be $A$, except that after receiving the result of the signing query, $B$ measures the $(m,\sigma)$ registers before continuing.
	\item Let $S$ be the same as $A$, except for the following two changes.  First, $S$ also measures the $(m,\sigma)$ registers, obtaining $(m_0,\sigma_0)$, which it copies into its own private registers.  Second, 	once $A$ produces its final output, a superposition $|\psi\rangle$ over $m,\sigma,\aux$, $S$ tests if $(m,\sigma)=(m_0,\sigma_0)$.  If so, it outputs whatever state remains in $|\psi\rangle$.  Otherwise, it outputs $\bot$.
\end{itemize}
	
Now, $S$ is a $(m,\sigma)$-respecting adversary.  Since we know $A$ is a one-time strong infinitely-often GYZ adversary, this means there is an efficient distinguisher $D$ such that \[|\Pr[D(\gyzexp(A,\lambda))=1]-\Pr[D(\gyzexp(S,\lambda))=1]|\]
is inverse polynomial.  Define:
\begin{itemize}
	\item $q_0(\lambda)=	\Pr[D(\gyzexp(A,\lambda))=1]$
	\item $q_1(\lambda)=	\Pr[D(\gyzexp(B,\lambda))=1]$
	\item $q_2(\lambda)=	\Pr[D(\gyzexp(S,\lambda))=1]$
\end{itemize}

\begin{claim}$|q_1(\lambda)-q_2(\lambda)|$ is negligible
\end{claim}
\begin{proof} Suppose not, that $|q_1(\lambda)-q_2(\lambda)|\geq \delta(\lambda)$ for a non-negligible $\delta$.  Notice that the only instance in which the challenger outputs anything but $\bot$ is when $(m,\sigma,\aux)\neq\bot$ and if $(m,\sigma)$ is valid.  Moreover, if the output is not $\bot$, then the only different between $B$ and $S$ comes from $|m,\sigma,\aux\rangle$ where $(m,\sigma)\neq (m_0,\sigma_0)$.  Therefore, the final superposition produced by $B$ must have weight at least $\delta$ on $(m,\sigma,\aux)$ where $(m,\sigma)$ is valid and not equal to $(m_0,\sigma_0)$.  Thus, we obtain a one-time strong BZ adversary: simply run $B$, copying the post-signing message/signature pair $(m_0,\sigma_0)$ (which was measured by $B$) into a private register.  Then at the end, measure the state, to obtain $(m_1,\sigma_1)$.  With probability at least $\delta$, $(m_1,\sigma_1)$ is valid an not equal to $(m_0,\sigma_0)$.  Thus output $((m_0,\sigma_0),(m_1,\sigma_1))$ as the forgery.  This adversary has non-negligible probability $\delta$ of succeeding.
\end{proof}

Therefore, we have that $|q_0(\lambda)-q_1(\lambda)|$ is an inverse polynomial quantity $1/p(\lambda)$.  As in Theorem~\ref{thm:collision}, we can repeat the scheme many times in parallel to obtain a new signature scheme and adversary where $q_0(\lambda)\leq 2^{-\lambda}$ and $q_1(\lambda)\geq 1-2^{-\lambda}$.  We will abuse notation, and write $A,B,S$ as the algorithms corresponding to this new obtained adversary with almost perfect distinguishing advantage.  Let $A_0,A_1$ be the two phases of $A$, and similarly $B_0,B_1,S_0,S_1$.  Notice that $A_0=B_0=S_0$.  

We now describe our basic  quantum money scheme $(\genmoney,\vermoney)$, assuming $q_0(\lambda)\leq 2^{-\lambda}$ and $q_1(\lambda)\geq 1-2^{-\lambda}$ as above:
\begin{itemize}
	\item $\genmoney_0(1^\lambda)$ samples $(\sk,\pk)\gets\gen(1^\lambda)$.  It then runs $A_0$ on $\pk$.  When $A_0$ outputs a superposition over internal state values and a signing query, $\genmoney_0$ signs the query with $\sk$, placing the output in newly created registers.  $\genmoney_0$ outputs the entire state of the adversary and result of the signing query, along with the public key, as the banknote $|\$\rangle$.
	\item $\vermoney_0(|\$\rangle,\pk)$ first measures the $|\pk\rangle$ register to obtain $\pk'$.  If $\pk'\neq \pk$, $\vermoney$ rejects.

	Then it runs $\ver$ on the $(m,\sigma)$ registers in superposition, and measures the result.  If $\ver$ rejects, then $\vermoney_0$ rejects.
	
	Next, consider two possible tests.  ${\sf Test}_0$ is the following.  Finish running $A$ by running $A_1$ on on the banknote to get a superposition over $(m,\sigma,\aux)$.  Then simulate the challenger by running $\ver(\pk_i,m,\sigma)$ in superposition; if the result is $0$, abort and reject.  If the result is 1, produce the whatever state $|\psi_i'\rangle$ remains.  Feed the result of the previous step into the distinguisher $D$, to get an output $b$. If $b=1$, then abort and reject.  Otherwise, un-compute all of the preceding steps. 
	
	${\sf Test}_1$ is similar, but modified analogously to the proof of Theorem~\ref{thm:collision}.  Instead of running $A_1$, run $B_1$, for which the only difference is that the $(m,\sigma)$ registers are measured at the beginning.  We also change the acceptance condition: if $b=0$, then abort and reject.  Otherwise, un-compute all of the preceding steps. 
	
	$\vermoney_0$ simply chooses a random $c$, and applies ${\sf Test}_c$.  If the test passes, then $\vermoney_0$ outputs the serial number $\pk$.
\end{itemize}

\paragraph{Correctness.} For a valid banknote, the serial number $\pk$ is clearly a deterministic function of the note.  Moreover, the step of $\vermoney$ where $\ver$ is run will always accept without modifying the quantum money state.  Finally, we claim that either test ${\sf Test}_c$ always accepts and negligibly affects the state.  This is true simply because ${\sf Test}_c$ corresponds to running the challenger on input $b=c$, and the the test accepts exactly if $A,D$ behave as guaranteed.

\paragraph{Security.} Analogous to the proof of Theorem~\ref{thm:collision}, the scheme above is not secure.  However, the same arguments can be made to show that for any for any candidate pair of quantum money states $\qlightning[0]\qlightning[1]$, either:
\begin{itemize}
	\item[(1)] If the banknotes are measured, two different valid message/signature pairs for $\pk$ will be produced with probability at least $1/200$.
	\item[(2)] The probability that both banknotes accept is at most $\frac{3}{4}$.  
\end{itemize}

Notice that case (1) can be used to obtain a BZ-forger.  The adversary gets a public key $\pk$ from the BZ challenger.  Then it constructs a quantum money state with serial number $\pk$; the only step it cannot perform for itself is the signing, which it accomplishes using the BZ signing oracle.  Then it runs the adversary to get two banknotes, which it measures.  Since the signature scheme is assumed to be BZ secure, the probability of (1) occurring must be negligible.  Therefore, for any efficient quantum money adversary, it must be the case that (2) happens, except with negligible probability.

Just as in the proof of Theorem~\ref{thm:collision}, we can shrink the probability both banknotes accept to negligible by running multiple instances of the scheme in parallel.  This completes the proof.\end{proof}

\subsection{Commitment Schemes}

Next, we turn to commitment schemes.  An interactive commitment scheme consists of four interactive classical polynomial-time algorithms $\comm_S,\comm_R,\reveal_R$:
\begin{itemize}
	\item In the commit phase, the sender has a message $m$ and security parameter $\lambda$, and the receiver has no input (except the security parameter).  The sender runs $\comm_S(1^\lambda,m)$ and the receiver runs $\comm_R(1^\lambda)$, which may send multiple messages bank and forth.  At the end of the interaction, $\comm_S$ and $\comm_R$ produce some state $\state_S,\state_R$, respectively, which are the saved state for the next round of communication. 
	\item In the reveal phase, the receiver is given the message $m$.  Then the sender sends an ``opening'' to the message $m$, which is just $\state_S$.  The receiver runs $\reveal_R(m,\state_R,\state_S)$, and either accepts or rejects.
\end{itemize}

We call a commitment scheme \emph{publicly verifiable} if, after the commit phase but before the reveal phase, the receiver publishes $\state_R$; in this case we will still require all of the security properties discussed below to hold even if the sender sees $\state_R$ at this point.  

The classical definition of computational-binding for a commitment scheme is the following, adapted to the quantum setting.  consider an adversary, consisting of an algorithm $\comm_S'$.  Consider the following experiment between this adversary and a challenger:

\begin{itemize}
	\item The adversary runs $\comm_S'$ and challenger runs $\comm_R(1^\lambda)$; the two algorithms interact.  The challenger ensures that every message received is measured before responding, guaranteeing that $\comm_R$ is run classically. 
	\item $\comm_S'$ produces two openings $\state_{S,0}',\state_{S,1}'$ and two messages $m_0,m_1$.  $\comm_R$ produces a state $\state_R$.
	\item The challenger receives $(\state_{S,b}',m_b)$ for $b=\{0,1\}$.  For each $b$, it runs $\reveal_R(m_0,\state_R,\state_{S,b}')$.
	\item The challenger outputs 1 if and only if both runs of $\reveal_R$ for $b=0,1$ accept.
\end{itemize}

\begin{definition} $(\comm_S,\comm_R,\reveal_R)$ is computationally binding if, for all quantum polynomial-time adversaries $\comm_S'$, the probability the challenger accepts in the above game is negligible.
\end{definition}

Recently, Unruh~\cite{EC:Unruh16} offers a stronger definition, called collapse-binding.  Here, consider the following experiment between an binding adversary $\comm_S'$ and a challenger.  Note that Unruh only considers non-interactive schemes, whereas we consider interactive schemes.  Therefore, our definition appears different than his.  However, in the case of non-interactive schemes, our definitions coincide.
\begin{itemize}
	\item The challenger is given an input $c\in\{0,1\}$.
	\item The adversary runs $\comm_S'(1^\lambda)$ and the challenger runs $\comm_R(1^\lambda)$; the two algorithms interact.  The challenger ensures that every message received is measured before responding, guaranteeing that $\comm_R$ is run classically. 
	\item $\comm_S'$ produces a superposition $\sum \alpha_{\state_S',m}|\state_S',m\rangle$ which it sends to the challenger.  It may also produce a private state that is entangled with this superposition.  $\comm_R$ produces a classical state $\state_R$.
	\item The challenger then runs $\reveal_R$ \emph{in superposition} on $\state_R$ and the superposition produced by $\comm_S'$.  It measures the result of the computation.  If $\reveal_R$ rejects, the challenger aborts and rejects.
	
	\item Next, if $c=0$, the challenger does nothing.  If $c=1$, the challenger measures the $m$ register of the adversary's state.
	\item Finally, the challenger outputs everything.  This includes the (potentially collapsed) superposition produced by $\comm_S'$, the adversary's private state (if any), and $\state_R$.  Denote this output by $\collapseexp_c(\comm_S',\lambda)$
\end{itemize}

\begin{definition} A commitment scheme $\comm_S,\comm_R,\reveal_R$ is collapse-binding if, for all polynomial time quantum adversaries $\comm_S'$, $\collapseexp_0(\comm_S',\lambda)$ is computationally indistinguishable from $\collapseexp_1(\comm_S',\lambda)$.
\end{definition}

\begin{theorem}\label{thm:comm} Suppose $(\comm_S,\comm_R,\reveal_R)$ is a \emph{publicly verifiable} computationally binding commitment scheme.  Then both of the following are true:
	\begin{itemize}
		\item The scheme is either collapse-binding, or can be used to build an \emph{infinitely-often} secure  public key quantum money scheme.
		\item The scheme is either \emph{infinitely often} collapse-binding, or can be used to build a secure public key quantum money scheme.
	\end{itemize}
\end{theorem}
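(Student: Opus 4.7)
The plan is to mirror the structure of Theorem~\ref{thm:collision}, with the key twist flagged in the introduction: because the commitment is interactive, the mint, not the adversary, will play the receiver during banknote generation, and will sign the receiver-side state using a standard signature scheme so that an honest receiver transcript is bound into every valid banknote. First I would assume the scheme is not (infinitely often) collapse-binding and extract a non-collapsing sender $\comm_S'$ together with a distinguisher $D$ achieving at least an inverse-polynomial advantage between $\collapseexp_0$ and $\collapseexp_1$. I would then amplify by parallel repetition exactly as in Theorem~\ref{thm:collision}, obtaining versions of $\comm_S'$ and $D$ with near-perfect distinguishing. Since any computationally binding commitment yields a one-way function, I may freely pick a standard classical signature scheme $(\gen,\sign,\ver)$ secure against quantum adversaries.

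The scheme $(\genmoney,\vermoney)$ would then look as follows. $\genmoney$ samples $(\sk,\pk)$, runs the amplified $\comm_S'$ against an honest $\comm_R$, and records both the classical receiver state $\state_R$ at the end of the commit phase (which is safe to publish since the scheme is publicly verifiable) and the residual sender-side quantum state $|\psi\rangle$, which is in superposition over $(\state_S', m)$. It then signs $\state_R$ to get $\sigma$ and outputs the banknote $|\$\rangle=(|\psi\rangle,\state_R,\sigma,\pk)$ with serial number $(\state_R,\sigma,\pk)$. $\vermoney$ measures the classical registers, checks $\sigma$ against $\pk$, coherently applies $\reveal_R(\cdot,\state_R,\cdot)$ on the $(\state_S',m)$ registers and measures its accept/reject bit, and then picks a random $c\in\{0,1\}$ and applies ${\sf Test}_c$ in the style of Theorem~\ref{thm:collision}: both tests finish running $\comm_S'$'s final phase and feed the result into $D$, with ${\sf Test}_1$ additionally coherently measuring the $m$ register into a private register beforehand, and the two tests using opposite accept conditions on $D$'s measured guess, followed by un-computing.

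Correctness should follow because honest generation matches $D$'s $c$-indexed view, which the amplified $D$ wins with overwhelming probability, and un-computing therefore disturbs the state only negligibly. For security I plan to follow Theorem~\ref{thm:collision} line-by-line: a cloning adversary produces two banknotes sharing a serial number; unforgeability of $\sigma$ forces both to be verified against the same honestly-generated $\state_R$; and by Claim~\ref{claim:measurement} together with the random choice of $c$, either (i) both banknotes' $m$-registers carry non-negligible superposition weight off their dominant value, in which case measuring and comparing yields two distinct $(m_0,\state_{S,0}')\neq(m_1,\state_{S,1}')$ with $\reveal_R$ accepting both against the same $\state_R$ with probability at least $1/200$, directly violating computational binding, or (ii) the joint probability that both banknotes pass ${\sf Test}_c$ for random $c$ is at most $3/4$, which I then drive to negligible by $\lambda$-fold parallel repetition of the whole scheme.

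The main obstacle I anticipate is the same delicate issue as in Theorem~\ref{thm:collision}: an adversarial banknote may never have come from $\comm_S'$ at all, so there is no a priori control over how the internal $D$ behaves, and one must use the random-$c$ trick together with Claim~\ref{claim:measurement} to still extract a useful measurement or force rejection. On top of that, I must be careful that binding really applies in the reduction: the scheme's binding definition requires the receiver to be run classically and honestly, which is precisely why the mint must generate $\state_R$ itself and sign it, and why public verifiability is the right hypothesis here so that $\state_R$ can appear in the serial number without undermining binding.
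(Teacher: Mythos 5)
Your proposal matches the paper's proof essentially step for step: assume the scheme is not (infinitely-often) collapse-binding, amplify the distinguisher by parallel repetition as in Theorem~\ref{thm:collision}, let the mint run the honest receiver and publish $\state_R$ (this is exactly where public verifiability is used), put the sender's residual superposition into the banknote with $\state_R$ as the serial number, verify via a random ${\sf Test}_c$ built from $D$, and argue that two same-serial banknotes either yield two distinct openings against the same honest $\state_R$ (breaking binding) or jointly pass with probability at most $3/4$, amplified by $\lambda$-fold repetition. The one cosmetic difference is your signature on $\state_R$: the paper works with the quantum money mini-scheme, whose security game already forces the adversary's serial numbers to equal the honestly generated $\state_R$, so the signature is redundant there and is deferred to the standard mini-scheme-to-full-scheme transformation.
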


\begin{proof}  The proof is analogous to the proofs for hash functions and signatures, and we only sketch the proof here.  
Suppose the commitment scheme is computationally binding, but is not infinitely-often collapse-binding.  Then there is an adversary $\comm_S'$ and a distinguisher $D$ such that \[|\Pr[D(\collapseexp_0(\comm_S',\lambda))=1]-\Pr[D(\collapseexp_1(\comm_S',\lambda))=1]|\]
is greater than an inverse polynomial.  For simplicity, assume that $\Pr[D(\collapseexp_0(\comm_S',\lambda))=1]\leq 2^{-\lambda}$ and $\Pr[D(\collapseexp_1(\comm_S',\lambda))=1]\geq 1-2^{-\lambda}$; the more general case can be handled analogously to the hash function/signature case by repeating many instances in parallel.

To generate a quantum money state, run the experiment $\collapseexp$ with $\comm_S'$ until $\comm_S'$ outputs a superposition $\sum \alpha_{\state_S',m}|\state_S',m\rangle$, plus potentially a private state that is entangled with this superposition.  Output this superposition, the adversary's private state, if any, and $\state_R$ produced by $\comm_R$ (recall that the scheme is publicly verifiable, so binding will hold even if $\comm_R$ is public).

To verify a banknote, choose a random $c\in\{0,1\}$.  Finish running $\collapseexp_c$ by running $\reveal_R$ in superposition on $\state_R$ and the superposition of $(\state_S',m)$ pairs; if $c=0$, this is all that happens, while if $c=1$, measure the message registers afterward.  Then take the output of the experiment, and feed it to $D$.  If the output of $D$ is not $c$, reject.  Otherwise, accept, and output $\state_R$ as the serial number.

Similar to the proofs in the case of signatures and hash functions, if an adversary is able to produce two banknotes that have the same serial number $\state_R$, then one of two things happen:
\begin{itemize}
	\item[(1)] The superposition of messages in one of the banknotes has a noticeable weight on at least two messages.  In this case, measuring the message registers will give different answers with noticeable probability.
	\item[(2)] The banknotes will fail verification with noticeable probability.
\end{itemize}

In the (1) case, we can use the two banknotes to produce two openings $\state_{S,0}',m_0,\state_{S,1}',m_1$ that can be used to reveal to two different messages simultaneously.  This gives a violation of computational binding.  Therefore, under the assumption that the scheme is computationally binding, it must be that (2) happens with overwhelming probability.

This does not give us a full quantum money scheme, but by repeating $\lambda$ times in parallel, the probability a banknote accepts in the (2) case becomes exponentially small, giving a full quantum money scheme.\end{proof}

\subsection{Non-interactive Commitments}

A commitment scheme is non-interactive if the commit phase consists of a single message from the sender to receiver.  In this setting, we usually allow a setup phase before the commit phase, where a common random string $crs$ is chosen.  In such a scheme, there is no $\comm_R$, and $\state_R$ is just the sender's commit message together with the $crs$.  Notice here that a non-interactive scheme is automatically publicly verifiable.

\begin{theorem} Suppose $(\comm_S,\reveal_R)$ is a computationally binding non-interactive commitment scheme.  Then both of the following are true:
	\begin{itemize}
		\item The scheme is either collapse-binding, or can be used to build an \emph{infinitely-often} secure quantum lightning scheme
		\item The scheme is either \emph{infinitely often} collapse-binding, or can be used to build a secure quantum lightning scheme
	\end{itemize}
\end{theorem}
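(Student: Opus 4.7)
The plan is to adapt the proof strategy of Theorem~\ref{thm:comm}, exploiting the crucial fact that in a non-interactive scheme the entire ``receiver state'' $\state_R$ (the commit message together with the $crs$) is produced without any input from an honest receiver, so an adversarial storm controlling bolt generation cannot cheat on anything that would normally be guaranteed only against an honest receiver. I would handle the more interesting case that the scheme is computationally binding but not infinitely-often collapse-binding. This yields a collapse adversary $\comm_S'$ and distinguisher $D$ with inverse-polynomial distinguishing advantage; by parallel-repetition amplification identical to that in the proof of Theorem~\ref{thm:collision}, I may assume $\Pr[D(\collapseexp_0(\comm_S',\lambda))=1]\leq 2^{-\lambda}$ and $\Pr[D(\collapseexp_1(\comm_S',\lambda))=1]\geq 1-2^{-\lambda}$.

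Next I would define the quantum lightning family: $\qsetup$ samples $crs$, and $(\genbolt,\verbolt)$ are determined by it. $\genbolt$ runs $\comm_S'(crs)$ through the commit phase, records the classical commit message $c$, and outputs the tuple $(c,\text{opening superposition},\text{private registers})$ as the bolt $\qlightning$. $\verbolt$ first measures the commit-message register, taking the result as the serial number, then coherently applies $\reveal_R$ and measures its output, rejecting on $0$. Finally it picks a random $b\in\{0,1\}$: for $b=0$ it runs $D$ directly on the residual state, while for $b=1$ it first measures the message register and then runs $D$, accepting iff $D$ outputs $b$, and uncomputing exactly as in Theorem~\ref{thm:comm}. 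Correctness is immediate from the near-perfect behavior of $D$ on honestly-prepared states.

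For security, suppose an adversarial storm $\advlightning$ produces two (possibly entangled) bolts $\qlightning[0],\qlightning[1]$ with the same serial number $c$ that both pass verification. Applying Claim~\ref{claim:measurement} and Lemma~\ref{lemma:distance}, I obtain the same dichotomy as in Theorem~\ref{thm:comm}: either the message register of at least one bolt is within $\sqrt{2\alpha}$ of being a single-message state for small $\alpha$ (say $\alpha\leq 1/200$), in which case the $b=0$ and $b=1$ branches of verification are nearly indistinguishable and the random choice of $b$ forces rejection with probability bounded below by a constant; or measuring the message registers of both bolts yields, with noticeable probability, two distinct messages $m_0\neq m_1$ together with valid openings $\state_{S,0}',\state_{S,1}'$ all consistent with the \emph{same} $(crs,c)$. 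The latter immediately breaks computational binding, since the non-interactive scheme requires no honest-receiver interaction: $c$ is a single classical message produced by the adversary itself, and $\reveal_R$ accepts both openings under the same $\state_R=(crs,c)$. A $\lambda$-fold parallel repetition of the whole construction then drives the acceptance probability in the first branch from a constant down to $2^{-\Omega(\lambda)}$, yielding a full quantum lightning scheme.

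The main obstacle, as in the earlier proofs, is controlling $D$'s behavior on adversarial states not drawn from the honest collapse experiment: the only guarantees on $D$ apply to inputs distributed as $\collapseexp_b(\comm_S',\lambda)$. Claim~\ref{claim:measurement} provides the required $L_2$-bound on how much a nearly single-message state is perturbed by the message measurement, and Lemma~\ref{lemma:distance} converts this into a bound on the total-variation gap in $D$'s output between the two branches. Unlike Theorem~\ref{thm:comm}, no signature wrapper is needed here, because in the non-interactive setting two distinct openings to the same commit are themselves a binding violation, which is precisely what promotes the conclusion from quantum money to quantum lightning.
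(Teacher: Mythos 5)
Your proposal is correct and follows essentially the same route as the paper, which itself proves this theorem by pointing to the proofs of Theorems~\ref{thm:comm} and~\ref{thm:collision}: $\qsetup$ outputs the $crs$, the bolt is the adversary's commit message plus its opening superposition, and the fact that no honest receiver participates in a non-interactive commitment is exactly what lets an adversarially generated pair of same-serial-number bolts be converted into two valid openings, upgrading the conclusion from quantum money to quantum lightning. The dichotomy via Claim~\ref{claim:measurement} and Lemma~\ref{lemma:distance}, followed by $\lambda$-fold parallel repetition, matches the paper's argument.
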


The proof is essentially the same as Theorem~\ref{thm:comm}, and very similar to Theorem~\ref{thm:collision}.  The main difference is that $\qsetup$ generates the $crs$ for the commitment scheme.  This is then the common random string which is used to select $\genbolt,\verbolt$.

We note that interactive commitments do not give bolts, since an adversarial bolt generator can generate bolts that deviate from how the  honest receiver would act.  This would potentially allow the bolt generator to set up the bolt in such a way that it can open the commitment to multiple values and hence create multiple valid bolts.  However, for a non-interactive commitment, the receiver plays no role in generating the bolt, so an adversarial storm has no chance of cheating in this way.  Two bolts with the same serial number in this case can be used to open the commitment to two values, breaking computational binding.

\section{Constructing Quantum Lightning}

\label{sec:constr}

\subsection{Hardness Assumption}

Consider a sequence of upper-triangular matrices $\Am_i\in\{0,1\}^{m\times m}$ where $m$ for $i=1,\dots,n$.  Here, $n<m$.  Let $\As=\{\Am_i\}_i$.  Define the function $f_\As:\{0,1\}^m\rightarrow\{0,1\}^n$ defined as $f_\As(x)=(x^T\cdot\Am_i\cdot x)_i$, where operations are taken mod 2.  Since $x^2=x\mod 2$, this captures general degree 2 functions over $\F_2$, with the terms coming from the diagonal being the linear terms.

As shown by~\cite{DingYang08,ITCS:AHIKV17}, the function $f_\As$ is \emph{not} collision resistant when the matrices $\Am_i$ are random upper triangular, with reasonable probability.  Here, we recreate the proof, and also discuss the multi-collision resistance of the function.

To find a collision for $f_\As$, choose a random $\Delta\in\{0,1\}^m$.  We will find a collision of the form $x,x-\Delta$.  The condition that $x,x-\Delta$ collide means \[x^T\cdot\Am_i x^T = (x-\Delta)\cdot \Am_i\cdot (x-\Delta)\]
for all $i$.  Expanding out the right hand side and rearranging, this gives
\[\Delta^T\cdot (\Am_i+\Am_i^T)\cdot x = \Delta^T\cdot\Am_i\cdot\Delta\]

This forms a system of $n$ linear equations in $m$ unknowns for $x$.  Let $B_\Delta$ be the $n\times m$ matrix whose rows are $\Delta^T\cdot(\Am_i+\Am_i^T)$ for $i\in[n]$.  Then as long as $B_\Delta$ has rank $n$, a solution for $x$ is guaranteed.  For random (upper triangular) $\Am_i$ and random $\Delta$, this matrix will be rank $n$ with constant probability.

This attack can be generalized to find multiple colliding inputs.  We consider two variants:
\begin{itemize}
	\item Notice that $k+1$ points will always lie in a $k$-dimensional affine space.  Suppose our goal is to $k+1$ colliding inputs, subject to the requirement that they do not lie in a $(k-1)$-dimensional affine space.  Choose random $\Delta_1,\dots,\Delta_k$.  We will compute and $x$ such that $x,x-\Delta_1,\dots,x-\Delta_k$ form $k+1$ colliding points.  Each $\Delta_j$ generates a system of $n$ equations for $x$ as described above.  Let $B=B_{\Delta_1,\dots,\Delta_k}$ be the matrix consisting of all the rows of $B_{\Delta_j}$ as $j$ varies.  As long as $B$ is full rank, a solution for $x$ is guaranteed.  Again, $B$ will be full rank with constant probability, provided $m\geq kn$, and with overwhelming probability if $m\geq kn+\omega(\log\lambda)$.
	
	Notice that because the $\Delta_j$ were chosen at random, with high probability the $k+1$ colliding inputs will form a $k$-dimensional affine space.  We call such a multi-collision a \emph{non-affine} multi-collision.

	\item Many more colliding inputs are possible if we specifically search for affine spaces full of colliding inputs.  We will choose some $\Delta_1,\dots,\Delta_r$; the precise computation of these will be described later.  Our colliding inputs will have the form $x+\sum_s \alpha_s \Delta_s$ for arbitrary $\alpha_i\in\{0,1\}$.  Our goal is to compute an $x$ such that this holds.  We want that for each $i$,
	
	\[(x-\sum_s \alpha_s \Delta_s)^T\cdot\Am_i\cdot(x-\sum_s \alpha_s \Delta_s)\]

	is constant, independent of the $\alpha_s$.  We therefore expand out the product, and group by monimials in the $\alpha_s$.  Notice that $\alpha_s^2=\alpha_s$.  Therefore, we have three kinds of monomials:
	
	\begin{itemize}
		\item $\alpha_s \alpha_{s'}$.  The coefficient of such a monomial is $\Delta_s^T\cdot(\Am_i+\Am_i^T)\cdot\Delta_{s'}$.  
		\item $\alpha_s$.  These occur in two ways, either by multiplying $\alpha_s$ by an $x$ term, or by squaring $\alpha_s$, since $\alpha_s^2=1$.  Therefore, we have that $\Delta_s^T\cdot\Am_i\cdot\Delta_s-\Delta_s\cdot (\Am_i+\Am_i^T)\cdot x$.
		\item Constant monomial.  This is just $x^T\cdot\Am_i\cdot x$.
	\end{itemize}
	
	For all points of the form $x-\sum_s \alpha_s \Delta_s$ to collide, we need all the coefficients of the monomials $\alpha_s\alpha_{s'}$ and $\alpha_s$ to be zero.  To make the coefficients of $\alpha_s\alpha_{s'}$ zero, we will choose the $\Delta_s$ as follows.  Choose a random $\Delta_1$.  Then choose a random $\Delta_2$ such that $\Delta_1^T\cdot(\Am_i+\Am_i^T)\cdot\Delta_{2}=0$ and $\Delta_2$ is linearly independent of $\Delta_1$.  The solution to the linear system has dimension at least $m-n$, and a solution that is independent of $\Delta_1$ exists if $m-n>1$ (equivalently, $m\geq n+2$).  
	
	Next, choose $\Delta_3$ such that $\Delta_1^T\cdot(\Am_i+\Am_i^T)\cdot\Delta_{3}=\Delta_2^T\cdot(\Am_i+\Am_i^T)\cdot\Delta_{3}=0$ and $\Delta_3$ is linearly independent of $\Delta_1,\Delta_2$.  Such a $\Delta_3$ can be found as long as $m\geq 2n+3$.  Repeat in this way until $\Delta_r$ has been computed.  $\Delta_r$ can be found as long as $m\geq (r-1)n+ r$.  
	
	Now we force the $\alpha_s$ coefficients to be 0 by solving for $x$.  These equations are linear in $x$.  There are $rn$ equations and $m$ unknowns, so if  $m\geq rn$ then the system is full rank with constant probability (and if $m\geq rn+\omega(\log\lambda)$ then the system if full rank with overwhelming probability) and a solution can be found.
	
	\medskip
	
	We can therefore have $2^r$ colliding inputs provided $m\geq r n + \max(0,r-n)$, many more than in the first attack.  However, these points will have many affine relationships.
\end{itemize}

\paragraph{Our Assumption.}  We therefore make the following hardness assumption.  

We say a hash function $f$ is $(k+1)$-non-affine multi-collision resistant ($(k+1)$-NAMCR) if it is computationally infeasible to find $k+1$ non-affine colliding inputs.

\begin{assumption}\label{assump:degreetwo} Let $k=\poly(n)$ and let $m<(k+1/2)n$.  Choose random upper triangular $\Am_i\in\{0,1\}^{m\times m}$ for $i=1,\dots,n$, and let $\As=(\Am_i)_i$.  Then the function $f_\As$ is $2(k+1)$-NAMCR.  More precisely, for any quantum adversary $A$, 
	\[\Pr[(x_1,\dots,x_{2k+2})\text{ collide in $f_{\As}$ and are non-affine}:(x_1,\dots,x_{2k+2})\gets A(\As)]\]
is negligible, where $\As$ is a sequence of random upper triangular matrices.
\end{assumption}

Note that it could even be possible to make a stronger assumption that $f_\As$ is $(k+2)$-NAMCR.  However, is is harder to compute a $2(k+1)$ non-affine multi-collision, so our assumption is weaker.  Nonetheless, it will be sufficient for our purposes.

\subsection{Quantum Lightning}

We now describe our quantum lightning construction.

\paragraph{Parameters.} Our scheme will be parameterized by integers $n,k,m$, do be chosen later.

\paragraph{Setup.}  To set up the quantum lightning scheme, simply choose $n$ random upper-diagonal matrices $\Am_i\in\{0,1\}^{m\times m}$, and set $\As=(\Am_i)_i$.  Output $\As$ as the public key.

\paragraph{Bolt Generation.}  We generate a bolt in the following steps.

\begin{itemize}
	\item Generate the uniform superposition
	\[|\phi_0\rangle=\frac{1}{2^{kn/2}}\sum_{\Delta_1,\dots,\Delta_k}|\Delta_1,\dots,\Delta_k\rangle\]
	\item Write $\mathbf{\Delta}=(\Delta_1,\dots,\Delta_k)$ In superposition, run the computation above that maps $\mathbf{\Delta}$ to the affine space $S_\mathbf{\Delta}$ such that, for all $x\in S$, $f_\As(x)=f_\As(x+\Delta_j)$ for all $j$.  This will be an affine space of dimension at least $m-nk$.  Assuming $m-nk$ is super-logarithmic in $\lambda$, then with high probability over the choice of $\As$, the dimension will be exactly $m-nk$ for all but a negligible fraction of $\mathbf{\Delta}$.  Then construct a uniform superposition of elements in $S_\mathbf{\Delta}$.  The resulting state is then:
	\[|\phi_1\rangle=\sum_{\mathbf{\Delta}}\sum_{x\in S_\mathbf{\Delta}}\frac{1}{2^{kn/2}\sqrt{|S_\mathbf{\Delta}|}}|\Delta,x\rangle\]
		
	\item Next, in superposition, compute $f_\As(x)$, and measure the result to get a string $y$.  The resulting state is
	
	\[|\phi_y\rangle\propto \sum_{\mathbf{\Delta},x\in S_\mathbf{\Delta}:f_\As(x)=y}\frac{1}{|S_\mathbf{\Delta}|}|x,\mathbf{\Delta}\rangle\]
	
	\item Finally, in superposition, compute the maps $(x,\Delta_1,\dots,\Delta_k)$ to $(x,x-\Delta_1,\dots,x-\Delta_k)$.  The resulting state is
	
	\[\qlightning[y]\propto\sum_{\mathbf{\Delta},x\in S_\mathbf{\Delta}:f_\As(x)=y}\frac{1}{|S_\mathbf{\Delta}|}|x,x-\Delta_1,\dots,x-\Delta_k\rangle\]
	
	Output this state as the bolt.
\end{itemize}

\begin{remark}\label{rem:1}We note that the support of this state is \emph{all} vectors $(x_0,\dots,x_k)$ such that $f_\As(x_i)=y$ for all $i\in[0,k]$.  Moreover, for all but a negligible fraction, the weight $|S_\mathbf{\Delta}|$ is the same, and so the weights for these components are the same.  Even more, the total weight of the other points is negligible.  Therefore, the bolt $\qlightning[y]$ is negligibly close to the state

\[\sum_{x_0,\dots,x_k:f_\As(x_i)=y\forall i}|x_0,\dots,x_k\rangle=\left(\sum_{x:f_\As(x)=y}|x\rangle\right)^{\otimes(k+1)}\propto\qlightningp[y]^{\otimes (k+1)}\]
 \[\text{where }\qlightningp[y]\propto\sum_{x:f_\As(x)=y}|x\rangle \]
\end{remark}

\paragraph{Verifying a bolt.}  Full verification of a bolt will run a mini verification on each of the $k+1$ sets of $m$ registers.  Each mini verification will output an element in $\{0,1\}^n\cup\{\bot\}$.  Full verification will accept and output $y$ if and only if each mini verification accepts and outputs the same string $y$.  We now describe the mini verification.

Given Remark~\ref{rem:1}, we will assume that the mini verification, when run on a valid bolt, will be given the state $\qlightningp[y]$ for some $y$.  Out goal is to output $y$ in this case, and for any other state, reject.

Mini verification on a state $|\phi\rangle$ will proceed in two steps.  In the first step, we verify that the $|\phi\rangle$ is in the space spanned by $\qlightningp[z]$ as $z$ varies.  We describe the procedure shortly.  

Now notice that the $\qlightningp[z]$ vectors are all orthogonal.  Therefore, in the second step we determine which $\qlightningp[z]$ vector we have by evaluating $f_\As$ in superposition, and measuring the result to obtain $y$.  Notice that for $\qlightningp[y]$, this does not perturb the state.  Then output $y$ as the serial number

We now turn to projecting onto the span of $\{\qlightning[z]\}_z$.  For $r\in\{0,1\}^n$, consider the state \[|\phi_r\rangle=\frac{1}{2^{m/2}}\sum_x (-1)^{r\cdot f_\As(x)}|x\rangle\]

\begin{claim}\label{claim:equiv} The states $\qlightningp[z], z\in\{0,1\}^n$ and $|\phi_r\rangle, r\in\{0,1\}^n$ span the same subspace of states.
\end{claim}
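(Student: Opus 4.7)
The plan is to exhibit an explicit change-of-basis (essentially a Hadamard/Fourier transform on the $n$-bit output register) that converts each family into a linear combination of the other, which establishes mutual containment and hence equality of the spans.

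First I would rewrite $|\phi_r\rangle$ by grouping the sum over $x\in\{0,1\}^m$ according to the value of $f_\As(x)$. Letting $S_z=\{x:f_\As(x)=z\}$ and noting that the normalized preimage superposition is $\qlightningp[z]=|S_z|^{-1/2}\sum_{x\in S_z}|x\rangle$ when $S_z\neq\emptyset$, this rewriting gives
\[
|\phi_r\rangle \;=\; \frac{1}{2^{m/2}}\sum_{z\in\{0,1\}^n}(-1)^{r\cdot z}\sqrt{|S_z|}\;\qlightningp[z],
\]
where $\qlightningp[z]$ is taken to be the zero vector whenever $S_z=\emptyset$. This immediately shows that every $|\phi_r\rangle$ lies in the span of $\{\qlightningp[z]\}_z$.

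For the reverse inclusion I would invert this relation using the character orthogonality $\sum_{r\in\{0,1\}^n}(-1)^{r\cdot(z-z')}=2^n\,\delta_{z,z'}$. Computing
\[
\sum_{r\in\{0,1\}^n}(-1)^{r\cdot z}\,|\phi_r\rangle \;=\; \frac{2^n}{2^{m/2}}\sqrt{|S_z|}\;\qlightningp[z]
\]
shows that for every $z$ with $S_z\neq\emptyset$, $\qlightningp[z]$ is a scalar multiple of $\sum_r(-1)^{r\cdot z}|\phi_r\rangle$ (the scalar is nonzero because $|S_z|\geq 1$), and hence lies in the span of $\{|\phi_r\rangle\}_r$. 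For $z\notin f_\As(\{0,1\}^m)$, $\qlightningp[z]$ is the zero vector, which is trivially in any subspace.

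There is no real obstacle here — the argument is just the Fourier duality between the computational basis on $n$-bit labels $z$ and the phase basis $r\mapsto(-1)^{r\cdot z}$, combined with the fact that the preimage sets $S_z$ partition $\{0,1\}^m$. The only minor point to be careful about is the treatment of empty fibers $S_z=\emptyset$: the first direction still holds since those terms contribute zero, and the second direction requires noting that $\qlightningp[z]=0$ in that case so it is in the span vacuously. In particular, the common span is the $|f_\As(\{0,1\}^m)|$-dimensional subspace spanned by the nonempty-fiber preimage states, even though both indexing sets have size $2^n$.
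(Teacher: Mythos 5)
Your proof is correct and follows essentially the same route as the paper: both directions rest on the Fourier/character duality between the label $z$ and the phase vector $r$, and your forward inclusion is identical to the paper's. The only difference is in the reverse inclusion, where you invert directly via the orthogonality relation $\sum_r(-1)^{r\cdot(z\oplus z')}=2^n\delta_{z,z'}$ to get $\qlightningp[z]\propto\sum_r(-1)^{r\cdot z}|\phi_r\rangle$, whereas the paper reaches the same conclusion in two steps (first $\qlightningp[0]\propto\sum_r|\phi_r\rangle$, then $C_z\qlightningp[z]+C_0\qlightningp[0]\propto\sum_{r:r\cdot z=0}|\phi_r\rangle$); your version is marginally cleaner and, unlike the paper, explicitly handles the empty-fiber case $S_z=\emptyset$.
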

\begin{proof} First, we show that each $|\phi_r\rangle$ lies in the span of the $\qlightningp[z]$.  Indeed,\[|\phi_r\rangle=\sum_z\frac{C_z}{2^{m/2}} (-1)^{r\cdot z}\qlightningp[z]\]
	where $C_z$ is the normalization factor for $\qlightningp[z]$, namely $\sqrt{|\{x:f_\As(x)=z\}|}$.  This is true since the superposition places equal weight on each $x$, and moreover, the phase for each $x$ is exactly $r\cdot f_\As(x)$.
	
Next, we prove that each $\qlightningp[z]$ lies in the span of the $|\phi_r\rangle$.  First, we notice that \[\qlightning[0]\propto\sum_r |\phi_r\rangle\]
since for $x$ where $f_\As(x)\neq 0$, we have that an equal number of $r$ place weight $1$ and $(-1)$ on $|x\rangle$, meaning the overall coefficient for $|x\rangle$ is 0.  Meanwhile, all $x$ where $f_\As(x)=0$ have equal positive weight.

Next, we claim that \[C_z\qlightningp[z]+C_0\qlightning[0]\propto \sum_{r:r\cdot z=0}|\phi_r\rangle\]
The sum on the left is a uniform superposition of $x$ such that $f_\As(x)\in\{0,z\}$.  For elements outside this set, the sum on the right places weight 0, since an equal number of $r$ contribute weight $1$ and $-1$.  Meanwhile, for $x$ in the set, the sum on the right places the same positive weight.  This completes the proof of the claim.
\end{proof}

Given Claim~\ref{claim:equiv}, it suffices to check that the state is in the span on the $|\phi_r\rangle$.  Consider the process of creating $|\phi_r\rangle$ given $r$.  Start with the state $|0\rangle$, and then perform the Hadamard gate qubit-by-qubit to obtain the uniform superposition of all $x\in\{0,1\}^m$.  Next, introduce a phase $(-1)^{r\cdot f_\As(x)}$ to each element $|x\rangle$ to arrive at $|\phi_r\rangle$.  Given a superposition $\sum_r \alpha_r |r\rangle|0\rangle$, this process can be used to construct the state $\sum_r \alpha_r |r\rangle|\phi_r\rangle$.

Now, suppose we can, given $|\phi_r\rangle$ but not $r$, compute $r$.  This means in particular we can \emph{un}compute $r$.  Thus, the state above can be transformed into $\sum_r \alpha_r |\phi_r\rangle$, which represents an arbitrary state in the span of the $|\phi_r\rangle$.  

Therefore, our verification procedure works as follows.  Given a state, it \emph{uncomputes} the computations above.  For a state in the span, the result is a state of the form $\sum_r \alpha_r |r\rangle|0\rangle$.  For a state not in the span, the last register will have come components that are non-zero.  Therefore, verification measures the last register, and accepts if and only if the result is 0.  If accepting, it recomputes the process above.  If the original state was in the span, the new state will be identical to the original state.

\medskip

Therefore, it suffices to compute $r$ from $|\phi_r\rangle$.  Let $R_i(x) = x^T\cdot\Am_i\cdot x$, which is a degree-2 polynomial in the components of $x$.  It is moreover multilinear since $x_i^2=x_i$.  With this notation, we have that \[|\phi_r\rangle=\sum_x (-1)^{\sum_i r_i R_i(x)}|x\rangle\]

Write $x$ as $b,x'$ for $b\in\{0,1\},x'\in\{0,1\}^{m-1}$.  The phase $r\cdot f_\As(x)$ is a degree-2 polynomial in the components of $x$.  Therefore, we can write the phase as \[\left(\sum_i r_i P_i(x')\right)+\left(\sum_i r_i Q_i(x')\right)b\]
where the $P_i(x')$ are multilinear degree-2 polynomials, and the $Q_i$ are \emph{linear} polynomials.  

We therefore apply the Hadamard gate to the first qubit.  The resulting state is
\[\sum_{x'\in\{0,1\}^{m-1}}(-1)^{\sum_i r_i P_i(x')}|\left(\sum_i r_i Q_i(x')\right),x'\rangle\]

If we were to measure the entire state, we would obtain a random $x'$ (from which we could compute $Q_i(x')$), and the sum $\sum_i r_i Q_i(x')$.  This would give us a known linear combination of the $r_i$.  This of course is not enough to compute the entire $r$ vector.  

Instead, we only measure the very first qubit, obtaining a bit $c_1$.  Without any additional measurements, $c_1$ is an unknown linear combination of the $r_i$.  We therefore, in superposition, compute and measure $Q_i(x')$ for each $i$, obtaining $\ell_i$.  Thus we know that $\sum_{i} \ell_i r_i = c_1$.

The state collapses to 
\[\sum_{x'\in\{0,1\}^{m-1}:Q_i(x')=\ell_i\forall i}(-1)^{\sum_i r_i P_i(x')}|x'\rangle\]

This has almost the form of our original state $|\phi_r\rangle$, as it is a superposition over $x$ where the phase is a degree-2 polynomial in $x$.  We would therefore hope to repeat the process above to generate more constrains on $r$.  However, the support of this superposition is not full since $Q_i(x')=\ell_i$, so the above approach will not work.

Instead, we notice that the support is an \emph{affine} subspace $S$ of $\{0,1\}^{m-1}$, since the equations $Q_i(x')=\ell_i$ are linear.  Therefore, any $x$ in $S$ can be written as $x_0+\sum_j a_j v_j$ for scalars $a_j$, and fixed vectors $v_j$, where $j=1,\dots d$ where $d$ is the dimension of the space.  Note that the vectors $v_j$ and $x_0$ can be computed from the $Q_i$ polynomials and the $\ell_i$, which are all known without any additional measurements.  Given that the equations $Q_i(x')=\ell_i$ are chosen at random, if we assume $m\geq n+\omega(\log\lambda)$, then with overwhelming probability the equations are are independent and therefore $d=m-1-n$.  

We therefore perform the map $x\mapsto a$.  Now the superposition is over all strings of length $m-1-n$, and the phase for $a$ can be written as $(-1)^{\sum_i r_i P_i(x_0+\sum_j a_j v_j)}$.  Let $R_i'(a) = P_i(x_0+\sum_j a_j v_j)$, which is a degree-2 polynomial in $a$.  The resulting state has the form
\[\sum_a (-1)^{\sum_i r_i R_i'(a)}|a\rangle\]

This state does have the desired form to keep generating new linear constraints on $r$.  We therefore repeat the above process $u=n+\omega(\log \lambda)$ times.  Each time, we generate a random linear constrain on $r$ (which has dimension $n$), so with overwhelming probability in $n$, the $u$ constraints will be full rank, allowing us to determine $r$.

Each constraint uses up $n+1$ qubits of the state.  Therefore, in order to carry out the above procedure, we need $m\geq u(n+1)$ (Actually, we need $m\geq u(n+1)+\omega(\log\lambda)$ to ensure success in the last repetition).  We therefore set, for example, $n=\lambda,k=2n$ and $m=kn=2n^2$, which allows for $u=\lfloor 2n^2/(n+1)\rfloor=2(n-1)$ for $n\geq 2$.  

\medskip

Putting everything together, mini verification projects onto the span of the $\qlightningp[z]$ (equivalently the span of the $|\phi_r\rangle$), and then measures $f_\As(x)$ in superposition, obtaining $y$.  Supposing the projection accepted, the state is in the span of $\qlightning[z]$, and after measuring $y$, the state at the end of verification must be exactly $\qlightningp[y]$. Full verification then does this for each of the $k+1$ registers, obtaining a list $y_1,\dots,y_{k+1}$.   Full verification, which operates on a state over $k+1$ registers, accepts if and only if none of the $y_i$ are $\bot$, and moreover if they are all equal to the same $y$.  Then this $y$ is the serial number for the overall bolt.  If full verification accepts and outputs serial number $y$, then the state that remains is the state $\qlightning[y]=\qlightningp[y]^{\otimes (k+1)}$.

\paragraph{Security.} We now prove security.  Consider a quantum adversary $A$ that is given $\As$ and tries to construct two (possibly entangled) bolts $\qlightning[0],\qlightning[1]$.  Assume toward contradiction that with non-negligible probability, verification accepts on both bolts, and outputs the same serial number $y$.  

By our above analysis, if acceptance happens, the resulting state is just $\qlightningp[y]^{\otimes 2(k+1)}$.  We therefore measure the state, obtaining $2(k+1)$ random pre-images of $y$.  Since these pre-images live in a space much larger than $2(k+1)$, we have that with overwhelming probability they are non-affine.  Therefore, if acceptance happens, we obtain a $2(k+1)$ non-affine multi-collision.  Since by assumption verification accepts with non-negligible probability, we are therefore able to obtain a $2(k+1)$ non-affine multi-collision with non-negligible probability.  This violates our hardness assumption.

\begin{theorem} If Assumption~\ref{assump:degreetwo} holds, then the scheme above is a secure quantum lightning scheme.
\end{theorem}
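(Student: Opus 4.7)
The plan is to reduce a uniqueness adversary directly to an adversary for Assumption~\ref{assump:degreetwo}. Let $A$ be an efficient storm that, given $\As$, outputs (possibly entangled) bolts $\qlightning[0],\qlightning[1]$ such that with non-negligible probability $\epsilon$, full verification accepts both bolts with a common serial number $y$. My reduction runs $A$, runs full verification on both bolts, and if acceptance occurs measures all $2(k+1)$ input registers in the computational basis to obtain $(x_1,\dots,x_{2(k+1)})$. It outputs this tuple as a candidate multi-collision for $f_\As$.

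The heart of the argument is to show that, conditioned on both bolts being accepted with serial number $y$, the joint state immediately \emph{after} verification is exactly $\qlightningp[y]^{\otimes 2(k+1)}$ on the relevant registers. For this I will lean on Claim~\ref{claim:equiv} and the projection-plus-measurement structure of mini verification: the first stage is a projective measurement onto the subspace $V = \mathrm{span}\{\qlightningp[z]\}_z = \mathrm{span}\{|\phi_r\rangle\}_r$, and the second stage measures $f_\As$ in superposition. Since the $\qlightningp[z]$ are mutually orthogonal and each is a $+1$-eigenstate of the projector onto $V$ and simultaneously an $f_\As$-eigenstate with eigenvalue $z$, the combined operation is a projective measurement with outcomes indexed by $\{z,\bot\}$, and the only post-measurement state compatible with outcome $z$ is $\qlightningp[z]$ itself. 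Applying this to each of the $2(k+1)$ mini-verifications in turn, and conditioning on all outputs equalling $y$, the joint state must collapse to $\qlightningp[y]^{\otimes 2(k+1)}$, regardless of the adversarial input.

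Next I measure all $2(k+1)$ registers. Because $\qlightningp[y]$ is the uniform superposition over the preimage set $f_\As^{-1}(y)$, this produces $2(k+1)$ independent uniform samples from $f_\As^{-1}(y)$. I then argue that with overwhelming probability these samples are non-affine, i.e., do not lie in any $(2k)$-dimensional affine subspace of $\F_2^m$. By a union bound, the probability that $2(k+1)$ uniform samples from a set $T \subseteq \F_2^m$ lie in a fixed affine subspace of dimension $2k$ is at most $(|T \cap U|/|T|)^{2(k+1)}$; using a standard argument about the expected preimage size of a random degree-$2$ map (with $m \gg kn$ the typical preimage class is exponentially large) together with a union bound over the polynomially many affine subspaces determined by the first $2k+1$ samples shows that the non-affine event occurs with probability $1 - \negl(\lambda)$. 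Combining with the $\epsilon$ lower bound on acceptance, the reduction outputs a $2(k+1)$ non-affine multi-collision with non-negligible probability, contradicting Assumption~\ref{assump:degreetwo}.

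The main obstacle I expect is the rigorous justification that verification, viewed as a process on \emph{arbitrary} adversarial input, really does output the \emph{exact} state $\qlightningp[y]^{\otimes 2(k+1)}$ rather than something only close to it; this requires treating each mini verification as a genuine two-outcome projective measurement $(\Pi_V, I-\Pi_V)$ followed by the projective measurement of $f_\As$, and carefully tracking that entanglement between the two bolts does not interfere with this conclusion (it doesn't, because projective measurements on disjoint registers commute and the joint projector onto the $(y,y)$-outcome equals the tensor product of the single-bolt projectors onto $\qlightningp[y]$). A secondary obstacle is quantifying the non-affine probability tightly enough so that the reduction succeeds; this should follow from the parameter choice $m \approx kn$ and the $k = \poly(n)$ regime of the assumption, but needs a careful counting argument.
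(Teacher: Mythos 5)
Your proposal is correct and follows essentially the same route as the paper: run verification on both bolts, observe that conditioning on acceptance with common serial number $y$ collapses the joint state to exactly $\qlightningp[y]^{\otimes 2(k+1)}$ (since each mini verification is a projection onto the span of the mutually orthogonal $\qlightningp[z]$ followed by measuring $f_\As$), measure to obtain $2(k+1)$ uniform preimages of $y$, and argue they are non-affine with overwhelming probability, contradicting Assumption~\ref{assump:degreetwo}. In fact you supply more detail than the paper does on the two points it glosses over — the exact-collapse argument for entangled adversarial inputs and the counting bound for non-affineness — and both of your elaborations are sound.
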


\subsection{Collapse-non-binding Hash Functions}

If $k=0$, our construction above simply has a single copy of $\qlightningp[y]$.  This says that the function $f_\As$ is collapse-non-binding, except that the function is not collision resistant.

We can instead view our function as a collapse-non-binding hash function as follows.  Start with the function $f_\As^{\otimes(k+1)}$.  We now restrict the domain to $k+1$ non-affine collisions for $f_\As$.  On this domain, the function's output will always have the form $(y,y,\dots,y)$, so we can just take the output to be $y$.  Call this function $g$.

On this domain, $g$ is almost collision resistant, except that the $k+1$ elements of the multi-collision input can be permuted to obtain a collision for $g$.  We therefore restrict to multi-collisions in sorted order.

Now our input generation will generate the superposition of $k+1$ colliding inputs as before, and then sort.  Of course, soring is non-reversible, so we need to be careful.  Notice that the superposition of colliding inputs is symmetric, in the sense than any permutation on the $k+1$ inputs will result in the same superposition.  We will therefore actually apply the map $\sum_{\sigma} |x_{\sigma(1)},\dots,x_{\sigma(k+1)}\rangle \mapsto |x_1,\dots,x_{k+1}\rangle$, which is a unitary transformation, assuming the $x_1,\dots,x_{k+1}$ are sorted.  This gives us a superposition of inputs to $g$.  To verify that we are still in superpostion, we undo the map to make a symmetric state again, and then apply the bolt verification above.

This gives a collapse-non-binding hash function $g$, albeit on a restricted domain.  We would like a function $h$ that is collapse-non-binding on $\{0,1\}^o$ for some $o$.  Toward that end, we first consider our inputs as the tuples $(\Delta_1,\dots,\Delta_k,x)$.  This is still a restricted domain, so we think of $x=x_0+\sum_i a_i v_i$, where $v_i$ span the space $S_\mathbf{\Delta}$.  The inputs will actually be $(\Delta_1,\dots,\Delta_k,a)$.  Now the domain is unrestricted.  Note that some care is needed, since the dimension of $S_\mathbf{\Delta}$ varies for some $\mathbf{\Delta}$.  This means the domain length varies somewhat.  However, this can be handled with some care; we omit the details.

\section{Quantum Money from Obfuscation}

\label{sec:qmoney}

In this section, we show that, assuming injective one-way functions exist, applying indisitnguishability obfuscation to Aaronson and Christiano's abstract scheme~\cite{STOC:AarChr12} yields a secure quantum money scheme.  

\subsection{Obfuscation}

The following formulation of indistinguishability obfuscation is due to Garg et al.~\cite{FOCS:GGHRSW13}:

\begin{definition}(Indistinguishability Obfuscation) An \emph{indistinguiability obfuscator} $\iO$ for a circuit class $\{\Cs_\lambda\}$ is a PPT uniform algorithm satisfying the following conditions:
	\begin{itemize}
		\item $\iO(\lambda,C)$ preserves the functionality of $C$.  That is, for any $C\in\Cs_\lambda$, if we compute $C'=\iO(\lambda,C)$, then $C'(x)=C(x)$ for all inputs $x$.
		\item For any $\lambda$ and any two circuits $C_0,C_1\in\Cs_\lambda$ with the same functionality, the circuits $\iO(\lambda,C)$ and $\iO(\lambda,C')$ are indistinguishable.  More precisely, for all pairs of PPT adversaries $(\Samp,D)$, if there exists a negligible function $\alpha$ such that
		\[ \Pr[\forall x, C_0(x)=C_1(x):(C_0,C_1,\sigma)\gets \Samp(\lambda)]>1-\alpha(\lambda) \]
		then there exists a negligible function $\beta$ such that
		\[  \big|\Pr[D(\sigma,\iO(\lambda,C_0))=1]-\Pr[D(\sigma,\iO(\lambda,C_1))=1]\big|<\beta(\lambda) \]
	\end{itemize}
\end{definition}

The circuit classes we are interested in are polynomial-size circuits  ---  that is, when $\Cs_\lambda$ is the collection of all circuits of size at most $\lambda$.  We call an obfuscator for this class an \emph{indistinguishability obfuscator for $P/poly$}.  The first candidate construction of such obfuscators is due to Garg et al.~\cite{FOCS:GGHRSW13}.

When clear from context, we will often drop $\lambda$ as an input to $\iO$ and as a subscript for $\Cs$.

\begin{definition} A \emph{subspace hiding obfuscator} (shO) for a field $\F$ and dimensions $d_0,d_1$ is a PPT algorithm $\shO$ such that:
	\begin{itemize}
		\item {\bf Input.} $\shO$ takes as input the description of a linear subspace $S\subseteq\F^n$ of dimension $d\in\{d_0,d_1\}$.  For concreteness, we will assume $S$ is given as a matrix whose rows form a basis for $S$.
		\item {\bf Output.} $\shO$ outputs a circuit $\hat{S}$ that computes membership in $S$.  Precisely, let $S(x)$ be the function that decides membership in $S$.  Then \[\Pr[\hat{S}(x)=S(x)\forall x:\hat{S}\gets\shO(S)]\geq 1-\negl(n)\]
		\item {\bf Security.} For security, consider the following game between an adversary and a challenger, indexed by a bit $b$.
		\begin{itemize}
			\item The adversary submits to the challenger a subspace $S_0$ of dimension $d_0$
			\item The challenger chooses a random subspace $S_1\subseteq \F^n$ of dimension $d_1$ such that $S_0\subseteq S_1$.  It then runs $\hat{S}\gets\shO(S_b)$, and gives $\hat{S}$ to the adversary
			\item The adversary makes a guess $b'$ for $b$.
		\end{itemize}
		The adversary's advantage is the the probability $b'=b$, minus $1/2$.  $\shO$ is secure if, all PPT adversaries have negligible advantage.
	\end{itemize}
\end{definition}

\begin{theorem}\label{thm:sho}If injective one-way functions exist, then any indistinguishability obfuscator, appropriately padded, is also a subspace hiding obfuscator for field $\F$ and dimensions $d_0,d_1$, as long as $|\F|^{n-d_1}$ is exponential.
\end{theorem}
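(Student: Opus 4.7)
The main obstacle is that $C_{S_0}$ and $C_{S_1}$, the membership-testing circuits for the two subspaces, are not functionally equivalent---they disagree on every point of $S_1\setminus S_0$---so the $\iO$ guarantee cannot be invoked directly. The plan is to mediate between them via a chain of hybrids, some related by exact functional equivalence (for which $\iO$ applies cleanly) and others related by the combination of sparsity of the differing-input set and one-wayness of the injective OWF $f$.

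First I would pad both $C_{S_0}$ and $C_{S_1}$ to a common circuit size and reduce to single-dimension extensions: insert a chain of intermediate subspaces $S_0=T_0\subsetneq T_1\subsetneq\cdots\subsetneq T_k=S_1$ with $\dim T_{i+1}=\dim T_i+1$, and by a $k=d_1-d_0$-term hybrid it suffices to prove $\iO(C_{T_i})\approx\iO(C_{T_{i+1}})$ for each $i$. At each such step the two circuits differ only on the coset $v_i+T_i$, whose density in $\F^n$ is at most $|\F|^{d_1-1}/|\F|^n = 1/|\F|^{n-d_1+1}$, which is exponentially small by the hypothesis that $|\F|^{n-d_1}$ is exponential.

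Next I would rewrite each $C_{T_i}$ as a functionally-equivalent circuit $D_{T_i}$ that first applies $f$ to its input and then checks the result against a succinct algebraic description of $T_i$ (for instance, a hard-coded parity-check matrix $M_i$ whose null space is $T_i$, together with hard-coded values $f(t)$ for $t$ in a small generating set used to normalize). Since $f$ is injective, $D_{T_i}$ and $C_{T_i}$ compute identical functions, so $\iO(C_{T_i})\approx\iO(D_{T_i})$ by the $\iO$ guarantee. The reason for rewriting is that the obfuscated code now consults its hard-coded description of $T_i$ only through $f$-images, so swapping the hard-coded data from a description of $T_i$ to one of $T_{i+1}$ should be undetectable: any distinguisher would have to exhibit a discrepancy on the sparse coset $v_i+T_i$, but one-wayness of $f$ prevents it from producing such an input when given only $f$-images of the subspace generators.

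The main technical obstacle will be making this last indistinguishability rigorous, since $\iO$ alone is not known to hide circuit changes on any non-empty input set. I expect to exploit the algebraic structure---the change from one description to the next is a single global modification of the stored matrix, not a per-input patch---so the reduction can rely on a single OWF-hardness instance rather than an enumeration over the (exponentially large) coset, using the bound on $|\F|^{n-d_1}$ both to make the coset of differing inputs negligibly sparse and to ensure random queries made by an efficient adversary avoid it. Uniform padding across all hybrids and the exponential density bound are the two quantitative ingredients that will underlie the analysis.
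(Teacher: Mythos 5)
Your high-level skeleton matches the paper's: pad to a common size, reduce to single-dimension extensions $\dim T_{i+1}=\dim T_i+1$ via a hybrid over a chain of intermediate subspaces, and use the injective OWF to hide the change at each step. But the core of your argument has a genuine gap. At each single-dimension step the two circuits disagree on all of $T_{i+1}\setminus T_i$, which is roughly $|\F|^{d_0+i}$ points --- exponentially \emph{many} inputs, even though they are sparse as a fraction of $\F^n$. The only tool available for arguing that $\iO$ hides a functional change is the Boyle--Chung--Pass result that $\iO$ plus (injective) one-way functions yields differing-inputs obfuscation for circuit pairs differing on \emph{polynomially many} hard-to-find inputs; no such implication is known (and general differing-inputs obfuscation is believed implausible) when the differing set is exponentially large. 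Sparsity of the coset and the inability of the adversary to ``produce'' a point of it are not enough: the adversary holds the full circuit, and $\iO$ gives no guarantee whatsoever for functionally inequivalent circuits unless you can invoke such a DIO-style lemma. Your proposed rewriting --- applying $f$ to the input and comparing against hard-coded $f$-images of a generating set --- also does not compute subspace membership (an arbitrary injective OWF is not a homomorphism, so $f(x)$ cannot be checked against $f$-images of generators), so the functionally-equivalent intermediate hybrid you need is not actually constructed.

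The paper's fix is precisely the idea your proposal is missing: \emph{quotient by the smaller space first}. Writing $\Bm$ for a parity-check matrix of $T_i$ (rows a basis of $T_i^\bot$), membership in $T_i$ is the test $\Bm\cdot x=0$, and the entire coset of differing inputs collapses under $x\mapsto \Bm\cdot x$ to a \emph{single} point $x^*\in\F^{n-d_i}$ in the quotient. The circuit is restructured as ``accept if $\Bm\cdot x=0$ or if $\hat P(\Bm\cdot x)=1$,'' where $\hat P$ is an inner obfuscation that is switched from the all-zeros circuit to the circuit accepting exactly the preimage of $y=\owf(x^*)$. These two inner circuits differ on one input, hidden by the injective OWF, so the single-differing-input case of Boyle--Chung--Pass applies; the outer switches are exact functional equivalences handled by plain $\iO$. (Over $\F\neq\F_2$ one additionally needs the normalization step so that $\{0,x^*\}$ becomes the span of $x^*$.) Without this quotienting step, the step-to-step indistinguishability you defer as ``the main technical obstacle'' is not merely unproven --- it is exactly the statement the whole theorem turns on, and the route you sketch for it does not go through.
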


\begin{proof} We first prove the case where $\F=\F_2$, the finite field on two elements, and where $d_1=d_0+1$.  Consider an adversary $\adv$.  Consider the following hybrid experiments:
	\begin{itemize}
		\item $H_0$: in this hybrid, $\adv$ receives $\iO(S_0)$ from the challenger, corresponding to $b=0$.  $S_0$ is appropriately padded before obfuscating so that all the programs received by $\adv$ in the following hybrids have the same length.
		\item $H_1$: in this hybrid, $\adv$ receives an obfuscation of the following function.  Let $\hat{P}$ be an obfuscation under $\iO$ of the simple program $Z$ that always outputs 0 on inputs in $\F^{n-d_0}$.  Let $\Bm$ be a $(n-d_0)\times n$ matrix whose rows are a basis for $S^\bot$, the space orthogonal to $S$.  This basis can be computed by Gaussian elimination.  Then $\hat{S}$ is the obfuscation under $\iO$ of the function 
		\[Q(x)=\begin{cases}1&\text{if }\Bm\cdot x = 0\\1&\text{if }\hat{P}(\Bm\cdot x)=1\\0&\text{Otherwise}\end{cases}\]
		Since $\hat{P}$ always outputs 0, this program still accepts if and only if the input is in $S$.  Therefore, $H_0$ and $H_1$ are indistinguishable by the security of the outer $\iO$ invocation.
		\item $H_2$: this hybrid is the same as $H_1$, except that $\hat{P}$ is the obfuscation under $\iO$ of the function \[P_y(x)=\begin{cases}1&\text{if }\owf(x)=y\\0&\text{Otherwise}\end{cases}\]
		Here, $\owf$ is an injective one-way function, and $y=\owf(x^*)$ for a random $x^*\in\F^{n-d_0}$.  By our assumption that $|\F|^{n-d_1}$ is exponential, and that $d_0=d_1-1$, we have that the bit-length of $x^*$, namely $n-d_0$, is linear in the security parameter.  Therefore, we can invoke the security of $\owf$.  
		
		Notice that the only point on which $Z$ and $P_y$ differ is $x^*$, and finding $x^*$ requires inverting $\owf$.  Therefore, if $\iO$ was a \emph{differing inputs obfuscator}, the obfuscations of $Z$ and $P_y$ would be indistinguishable.  Since $Z$ and $P_y$ differ in only a single input, the results of~\cite{TCC:BoyChuPas14} show that $\iO$ \emph{is} a differing inputs obfuscator for these circuits.  
		
		Therefore, $H_1$ and $H_2$ are computationally indistinguishable.
		
		Notice now, since $\F=\F_2=\{0,1\}$, that $Q(x)$ decides membership in the subspace $S_1$ of vectors $x$ such that $\Bm\cdot x$ is in the span of $x^*$ (which is just $\{0,x^*\}$).  Except with negligible probability, $x^*\neq 0$, and so $S_1$ has dimension $d_0+1=d_1$ and contains $S_0$.  Moreover the set of dimension-$d_1$ spaces containing $S_0$ is in bijection with the set of non-zero $x^*$.
		
		\item $H_3$.  In this hybrid, a random $x^*$ is chosen, $S_1$ is constructed as above, and then obfuscated.  Since $Q(x)$ decides membership in $S_1$, the programs being obfuscated in $H_2$ and $H_3$ are the same, so these two hybrids are indistinguishable by $\iO$.

		\item $H_4$.  Here, we choose $x^*$ at random, except not equal to 0.  Since $x^*$ comes from a set of size $|\F|^{n-d_0}$ which by assumption is exponential, the two distributions are negligibly close.  Now, the set $S_1$ is a random dimension-$d_1$ space containing $S_0$, so $H_4$ corresponds to the case $b=1$.
	\end{itemize}

	Over larger fields, we have to change the proof.  The reason is that $\{0,x^*\}$ is no longer the same as the span of $x^*$.  This means that the function obfuscated in $H_2$ is not a linear subspace, but the union of two parallel affine spaces.  Instead, assume for simplicity that the first digit of $x^*$ is non-zero.  Over large fields, this happens with overwhelming probability; over small fields, we still know that \emph{some} digit is non-zero with overwhelming probability.  The discussion below can be modified easily to work with any other bit.

	For an input $y=(b,y')$ for a bit $y$, let ${\rm Reduce}(y)=y'/b$ if $b\neq 0$ and ${\rm Reduce}(0,y')=\infty$.  

	In the hybrids above, we modify $Q(x)$ to be 

	\[Q(x)=\begin{cases}1&\text{if }\Bm\cdot x = 0\\1&\text{if }\hat{P}({\rm Normalize}(\Bm\cdot x))=1\\0&\text{Otherwise}\end{cases}\]

	In $H_2$, we will choose a random $x'$ such that the first bit is non-zero.  Then we set $x^*={\rm Reduce}(x')$.  Notice that ${\rm Reduce}(x^*)$ is a random string, so we can still invoke the security of $\owf$.  Moreover, now in $H_2$, $Q(x)$ accepts if and only if ${\rm Reduce}(\Bm\cdot x)\in\{0,x^*\}$, which is the same condition as $\Bm\cdot x\in {\rm Span}(x')$.  
	
	If $x'$ was chosen truly randomly, this would correspond to obfuscating a random space of dimension $d_1$ containing $S_0$.  Unfortunately, we did not choose $x'$ uniformly at random, but conditioned on the first digit being non-zero.  To fix this, we choose $i$ from a geometric distribution with probability $1-\frac{1}{|\F|}$, and then choose a random $x'$ such that the first $i-1$ digits are 0, and the $i$th digit is non-zero.   With overwhelming probability, $i$ will be in $[n]$, and the distribution on $x'$ is therefore statistically close to uniform.  We then modify ${\rm Reduce}$ to divide out by the $i$th digit instead of the first.  The same analysis as above applies in the case of more general $i$; now $S_1$ is (statistically close) to a random subspace containing $S_0$.
	
	\medskip
	
	Finally, to handle more general $d_0,d_1$, we perform a sequence of hybrids, first invoking the above on dimensions $(d_0,d_0+1)$, then $(d_0+1,d_0+2)$, etc.  This completes the proof.\end{proof}

\subsection{New No-Conversion and No-Cloning Theorems}

\paragraph{No-Conversion.} Here, we consider the following general task.  Fix a dimension $d$, two sequences $\Ss_1,\Ss_2$ of states $|\psi_i\rangle$ and $|\phi_i\rangle$ of dimension $d$ for $i=[n]$.  Finally, fix a probability distribution $\Ds$ over $[n]$.  

The goal in the $(\Ss_1,\Ss_2,\Ds)$-Conversion problem is to, given a state $|\psi_i\rangle$ for $i$ sampled from $\Ds$, produce the state $|\phi_i\rangle$.  Consider a mechanism $\Ms$ for this task.  We will use the following measure for how well $\Ms$ solves the conversion problem: let $F^2_{\Ss_1,\Ss_2,\Ds}(\Ms)$ denote the expectation of $|\langle \phi_i | \xi \rangle |^2$ where $i\gets \Ds$, and $\xi\gets \Ms(|\psi_i\rangle)$.  In other words, $F_2$ measures the expected fidelity (squared) between the state $\Ms$ produces and the desired output state.

For a set of vectors $\Ss$, let $\Am_{\Ss}$ be the matrix of inner products between the vectors.  For a probability distribution $\Ds$ over $[n]$, let $\Bm_{\Ds}$ be the $n\times n$ matrix where the $(i,j)$ entry is $\sqrt{p_i p_j}$ where $p_i$ is the probability of $i$.  Then let $\Cm_{\Ss_1,\Ss_2,\Ds}=\Am_{\Ss_1}\circ \Am_{\Ss_2}\circ \Bm_\Ds$, where $\circ$ denotes the Hadamard (point-wise) product.  In other words, the $(i,j)$ entry of $\Cm_{\Ss_1,\Ss_2,\Ds}$ is $\sqrt{p_i p_j}\langle \psi_i | \psi_j \rangle \langle \phi_i | \phi_j \rangle$.  For a positive semi-definite Hermitian matrix $\Cm$, let $\lambda_1(\Cm)$ be the spectral radius (that is, maximum eigenvalue) of $\Cm$.

\begin{theorem}[No-Conversion]\label{thm:noconv} For any CPTP operator $\Ms$, we have that $F^2_{\Ss_1,\Ss_2,\Ds}(\Ms)\leq d\times \lambda_1(\Cm_{\Ss_1,\Ss_2,\Ds})$. \end{theorem}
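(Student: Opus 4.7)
The plan is to apply the Choi--Jamiolkowski isomorphism together with a trace-norm vs.\ operator-norm H\"older inequality, and then identify the resulting spectral quantity with $\lambda_1(\Cm_{\Ss_1,\Ss_2,\Ds})$ via a Gram-matrix calculation.

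First, I represent the CPTP map $\Ms$ on $B(\C^d)$ by its Choi operator $J_\Ms := (\mathrm{id}\otimes\Ms)(|\eta\rangle\langle\eta|)$, where $|\eta\rangle := \sum_a |a,a\rangle$. Complete positivity yields $J_\Ms \succeq 0$; trace preservation yields $\Tr_2 J_\Ms = I_d$ and hence $\Tr J_\Ms = d$. From the standard recovery formula $\Ms(\rho) = \Tr_1[(\rho^T \otimes I)\,J_\Ms]$ together with $(|\psi\rangle\langle\psi|)^T = |\overline\psi\rangle\langle\overline\psi|$, a short calculation gives
\[\langle\phi_i|\Ms(|\psi_i\rangle\langle\psi_i|)|\phi_i\rangle \;=\; \langle\overline{\psi_i},\phi_i|\,J_\Ms\,|\overline{\psi_i},\phi_i\rangle.\]
Averaging against $\Ds$ and setting $\Pi := \sum_i p_i\,|\overline{\psi_i},\phi_i\rangle\langle\overline{\psi_i},\phi_i|$ rewrites the objective as $F^2_{\Ss_1,\Ss_2,\Ds}(\Ms) = \Tr(J_\Ms\,\Pi)$.

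Since $J_\Ms,\Pi\succeq 0$, H\"older's inequality for Schatten norms yields
\[\Tr(J_\Ms\,\Pi) \;\le\; \|J_\Ms\|_1\,\|\Pi\|_\infty \;=\; (\Tr J_\Ms)\,\lambda_1(\Pi) \;=\; d\,\lambda_1(\Pi).\]
Since $\Pi$ is a positive combination of rank-one projectors, its nonzero spectrum equals that of the Gram matrix $G$ of the weighted vectors $\sqrt{p_i}\,|\overline{\psi_i},\phi_i\rangle$, whose $(i,j)$ entry is $\sqrt{p_i p_j}\,\overline{\langle\psi_i|\psi_j\rangle}\,\langle\phi_i|\phi_j\rangle$. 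To finish, I will reconcile $G$ with $\Cm_{\Ss_1,\Ss_2,\Ds}$: replacing $\overline{\Am_{\Ss_1}}$ by $\Am_{\Ss_1}$ at the Gram level corresponds, at the operator level, to conjugating $\Pi$ by an anti-unitary acting on the first tensor factor of $\C^d\otimes\C^d$; since conjugation of a Hermitian operator by an anti-unitary preserves its spectrum, $\lambda_1(\Pi) = \lambda_1(\Cm_{\Ss_1,\Ss_2,\Ds})$, and the theorem follows.

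The main obstacle I anticipate is precisely this last step: the careful bookkeeping of complex conjugates and tensor-factor orderings introduced by the transpose in the Choi recovery formula, so that the spectral quantity ends up as $\lambda_1(\Cm_{\Ss_1,\Ss_2,\Ds})$ exactly rather than as the slightly differently conjugated Gram matrix $G$ that falls out of the Hilbert--Schmidt calculation. The factor of $d$, by contrast, appears for a robust reason, namely that CPTP forces $\|J_\Ms\|_1 = \Tr J_\Ms = d$, so any further improvement would require replacing the trace-norm/operator-norm pairing with a tighter inequality adapted to the structure of $\Pi$.
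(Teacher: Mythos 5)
Your proposal is correct and is essentially the paper's own proof in slightly more standard packaging: the paper parametrizes $\Ms$ by its Choi matrix $\Mm$ (positive with $\Tr\Mm=d$), uses linearity of the objective to reduce to the extreme points $d\,\vv\vv^\dagger$, and identifies the resulting bound $d\,\lambda_1(\Em^\dagger\Em)=d\,\lambda_1(\Em\Em^\dagger)$ with $d\,\lambda_1(\Cm_{\Ss_1,\Ss_2,\Ds})$ --- which is exactly your H\"older step $\Tr(J_\Ms\Pi)\leq\|J_\Ms\|_1\|\Pi\|_\infty=d\,\lambda_1(\Pi)$ followed by the Gram-matrix identification. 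The conjugation bookkeeping you flag is resolved the same way in the paper (the matrix that appears is the entrywise conjugate of $\Cm$, which is Hermitian and hence has the same spectrum), so no genuinely new idea is needed there.
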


\begin{proof} Pick a basis $\{|x\rangle\}$ for the system, and write 
\[	|\psi_i\rangle=\sum_x a_{i,x}|x\rangle \;\;\;\;\;\;\;\;\;\;\; |\phi_i\rangle=\sum_x b_{i,x}|x\rangle\]

We will think of $\Ms$ as outputting a mixed state $\rho$.  Then for a fixed $i$, we can write the expectation of $|\langle \phi_i | \xi \rangle |^2$ as $\langle \phi_i|\rho|\phi_i\rangle$.  $\Ms$ is linear, so we write $\Ms(|x\rangle\langle x'|)=\sum_{y,y'}\Mm_{(x,y),(x',y')} |y\rangle\langle y'|$ for coefficients $\Mm_{(x,y),(x',y')}$.  By Choi's theorem, the requirement that $\Ms$ is completely positive is equivalent to $\Mm$ being a positive matrix.  Since $\Ms$ is trace preserving, $\Tr(\Ms(|x\rangle\langle x|))=1$, and therefore $\Tr(\Mm)=d$.  

Thus we have that $F^2_{\Ss_1,\Ss_2,\Ds}(\Ms)$ can be written as 
\[F^2_{\Ss_1,\Ss_2,\Ds}(\Ms)=\sum_i p_i \langle \phi_i |\Ms(|\psi_i\rangle\langle\psi_i|)|\phi_i\rangle=\sum_{i,x,x',y,y'}p_i \Mm_{(x,y),(x',y')}a_{i,x}a_{i,x'}^* b_{i,y}^* b_{i,y'}\]

Notice that this expression is linear in $\Mm$.  Consider the space of positive $\Mm$ with trace $d$, which is a convex space containing the set of valid $\Mm$.  This space is the convex hull of the set of $\Mm$ of the form $\Mm_{(x,y),(x',y')}=d v_{x,y'}v^*_{x',y'}$ such that $\sum_{x,y}|v_{x,y}|^2=1$.  Therefore, it suffices to bound the value of $F^2$ for such matrices:

\[\sum_i p_i \langle \phi_i |\Ms(|\psi_i\rangle\langle\psi_i|)|\phi_i\rangle=d\sum_{i,x,x',y,y'}p_i v_{(x,y')}v^*_{(x,y')}a_{i,x}a_{i,x'}^* b_{i,y}^* b_{i,y'}=d\vv^\dagger\cdot\Em^\dagger\Em\cdot\vv\]

Where $\vv$ is the vector of the $v_{(x,y')}$, and $\Em$ is the matrix \[\Em_{i,(x,y')}=\sqrt{p_i}a_{i,x}b_{i,y'} \]

This expression is bounded by $\lambda_1(d\Em^\dagger\cdot\Em)=d \lambda_1(\Em^\dagger\cdot\Em)$.  Notice that the set of eigenvalues for $\Em^\dagger\cdot\Em$ is the same as $\Em\cdot\Em^\dagger=\Cm_{\Ss_1,\Ss_2,\Ds}$, so $F^2$ is bounded by $d\lambda_1(\Cm_{\Ss_1,\Ss_2,\Ds})$ as desired.  \end{proof}

\paragraph{No-Cloning.} Cloning is the special case of conversion where $|\phi_i\rangle=|\psi_i\rangle\otimes|\psi_i\rangle$.  Thus $\Am_{S_2}=\Am_{S_1}\circ\Am_{S_1}$.  This means $\Cm_{\Ss_1,\Ss_2,\Ds}=\Am_{\Ss_1}^{\circ 3}\circ \Bm_\Ds$, where $\Am^{\circ 3}$ means the 3-times Hadamard product of $\Am$.  Assume the probabilities of each state are equal, so that $\Bm_{\Ds}$ is just $1/n$ in every coordinate.  Then $F^2$ is bounded by $\frac{d}{n}\lambda_1(\Am^{\circ 3})$.  More generally, for the process of duplicating a given state $k$ times, $F^2$ will be bounded by $\frac{d}{n}\lambda_1(\Am^{\circ (k+1)})$.  If we assume $n>d$ and that no two states in the set are the same, then $\Am$ will be a matrix with $1$'s on the diagonal, and entries with norm less than 1 off diagonal.  This means, as we increase $k$, $\Am^{\circ (k+1)}$ will asymptotically approach the $n\times n$ identity matrix.  Thus, as $k$ goes to infinity, $F^2$ approaches $\frac{d}{n}$, which is less than 1.  Note if perfect cloning is possible, then it is possible to make $k$ perfect copies, meaning $F_2$ should be 1.  Thus, we re-cast the traditional no-cloning theorem using our theorem.  Moreover, for any set of states, by analyzing $\Am^{\circ 3}$, it is possible to give concrete bounds on the success probability of cloning.

\paragraph{Example.} Consider the following task.  Let $\F$ be a field.  A random subspace $S\subseteq \F^n$ of dimension $n/2$ is chosen, for an even integer $n$.  Let $|\psi_S\rangle=\frac{1}{|\F|^{n/4}}\sum_{x\in S}|x\rangle$ be the uniform superposition over $S$.  The goal is, given $|\psi_S\rangle$ for a random subset $S$, to copy the state.  We would like to upper-bound $F^2$ for this problem.  

Let $N_{a,b}$ count the number of subspaces of dimension $a$ there are inside $\F^b$.  The matrix $\Bm$ is just the matrix that has $1/N_{n/2,n}$ in ever position.  Meanwhile, $\Am_{\Ss_1}$ is the matrix with rows and columns indexed by subspaces $S$, where \[(\Am_{\Ss_1})_{S,T}=\langle\psi_S|\psi_T\rangle=|\F|^{\dim(S\cap T)-n/2}\]

Then the matrix $\Cm$ is:
\[(\Cm)_{S,T}=\frac{1}{N_{n/2,n}} |\F|^{3\dim(S\cap T)-3n/2}\]

We now seek to upper-bound the maximum eigenvalue $\lambda_1$ of this matrix.  It is not hard to see that the maximum eigenvalue corresponds to the unit vector $v = \frac{1}{\sqrt{N_{n/2,n}}}(1\;\;1\;\;\dots\;\;1)$ that places an equal positive weight on each subspace.  Then $\lambda_1(\Cm)$ is just 
\[\lambda_1(\Cm)=\frac{1}{N_{n/2,n}^2}\sum_{S,T}|\F|^{3\dim(S\cap T)-3n/2}=\frac{1}{N_{n/2,n}}\sum_T |\F|^{3\dim(S_0\cap T)-3n/2}\]
Where $S_0$ is any fixed $n/2$-dimensional subspace $S$.  This last inequality follows due to symmetry: the sum over $T$ for any two different subspaces $S_0,S_1$ is the same.  

Now, the number of $T$ such that $\dim(S_0,T)=k$ is upper bounded by $N_{k,n/2}N_{n/2-k,n}$; this is because any such $T$ is the direct sum of a $T_0\subset S$ of dimension $k$, and a $T_1\subset |\F|^n$ of dimension $n/2-k$.  

Thus, we can upper bound $\lambda_1$ as 

\[\lambda_1(\Cm)\leq \sum_{k=0}^{n/2} |\F|^{3k-3n/2} N_{k,n/2}N_{n/2-k,n}/N_{n/2,n}\]

Now we observe that \[N_{a,b}=(|\F|^b-1)(|\F|^b-|\F|)\cdots(|\F|^b-|\F|^{a-1})= |\F|^{ab}\prod_{i=b-a+1}^b(1-|\F|^{-i})\]

Therefore, \begin{align*}N_{k,n/2}N_{n/2-k,n}/N_{n/2,n}&=|\F|^{-kn/2}\frac{\prod_{i=n/2-k+1}^{n/2}(1-|\F|^{-i})\prod_{i=n/2+k+1}^{n}(1-|\F|^{-i})}{\prod_{i=n/2+1}^{n}(1-|\F|^{-i})}\\
&=|\F|^{-kn/2}\frac{\prod_{i=n/2-k+1}^{n/2}(1-|\F|^{-i})}{\prod_{i=n/2+1}^{n/2+k}(1-|\F|^{-i})}\leq |\F|^{-kn/2}
\end{align*}

Therefore, \[\lambda_1(\Cm)\leq \sum_{k=0}^{n/2} |\F|^{3k-3n/2-kn/2}=|\F|^{-3n/2}\sum_{\ell=0}^{n/2}|\F|^{-\ell (n+1)}\leq 2|\F|^{-3n/2}\]

Now, the dimension $d$ that the state $|\psi_S\rangle$ lives in is $|\F|^n$.  Therefore, applying Theorem~\ref{thm:noconv}, we have that $F^2$ is at most $|\F|^n\times \lambda_1(\Cm)\leq 2\times |\F|^{-n/2}$.

\subsection{Quantum Money from Obfuscation}

Here, we recall Aaronson and Christiano's~\cite{STOC:AarChr12} construction, when instantiated with a subspace-hiding obfuscator.  

\paragraph{Generating Banknotes.} Let $\F=\Z_q$ for some prime $q$.  Let $\lambda$ be the security parameter.  To generate a banknote, choose $n$ a random even integer that is sufficiently large; we will choose $n$ later, but it will depend on $q$ and $\lambda$.  Choose a random subspace $S\subseteq \F^n$ of dimension $n/2$.  Let $S^\bot=\{x:x\cdot y=0\forall y\in S\}$ be the dual space to $S$.  

Let $|\$_S\rangle=\frac{1}{|\F|^{n/4}}\sum_{x\in S}|x\rangle$.  Let $P_0=\shO(S)$ and $P_1=\shO(S^\bot)$.  Output $|\$_S\rangle,P_0,P_1$ as the quantum money state.

\paragraph{Verifying banknotes.} Given a banknote state, first measure the program registers, obtaining $P_0,P_1$.  These will be the serial number.  Let $|\$\rangle$ be the remaining registers.  First run $P_0$ in superposition, and measure the output.  If $P_0$ outputs 0, reject.  Otherwise continue.  Notice that if $|\$\rangle$ is the honest banknote state $|\$_S\rangle$ and $P_0$ is the obfuscation of $S$, then $P_0$ will output 1 with certainty.  

Next, perform the quantum Fourier transform (QFT) to $|\$\rangle$.  Notice that if $|\$\rangle=|\$_S\rangle$, now the state is $|\$_{S^\bot}\rangle$.  

Next, apply $P_1$ in superposition and measure the result.  In the case of an honest banknote, the result is 1 with certainty.  

Finally, perform the inverse QFT to return the state.  In the case of an honest banknote, the state goes back to being exactly $|\$_S\rangle$.

\medskip

The above shows that the scheme is correct.  Next, we argue security:

\begin{theorem}\label{thm:qmoney} If $\shO$ is a secure subspace-hiding obfuscator for $d_0=n/2$ and some $d_1$ such that both $|\F|^{n-d_1}$ and $|\F|^{d_1-n/2}$ are exponentially-large, then the construction above is a secure quantum money scheme.
\end{theorem}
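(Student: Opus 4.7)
The plan is a two-step reduction combining subspace-hiding obfuscation with the quantitative no-cloning bound of Theorem~\ref{thm:noconv}. First I would introduce a hybrid game in which the banknote is $(|\$_S\rangle, \shO(T_0), \shO(T_1))$, where $T_0$ is a uniformly random $d_1$-dimensional superspace of $S$ and $T_1$ is a uniformly random $d_1$-dimensional superspace of $S^\bot$. Passing from the real game to this hybrid is accomplished by two invocations of subspace-hiding security (one per obfuscated program); in each, the reduction samples $S$ itself, honestly prepares $|\$_S\rangle$ together with the un-swapped program, feeds the challenger's obfuscation to the forger, and outputs $1$ iff the forger wins. Since the hypothesis that $|\F|^{n-d_1}$ is exponentially large is exactly what Theorem~\ref{thm:sho} requires, the forger's success probability in the hybrid lies within a negligible amount of that in the real game.

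Second, I would analyze the hybrid directly. Conditioned on $(T_0,T_1)$, the subspace $S$ is uniform over $(n/2)$-dimensional subspaces satisfying $T_1^\bot \subseteq S \subseteq T_0$. Verification defined by $\shO(T_0),\shO(T_1)$ projects onto the subspace $\Pi$ of states supported on $T_0$ whose QFT is supported on $T_1$; in the coset-state basis, $\Pi$ is isomorphic to $\F^{n'}$ with $n':=2d_1-n$, and $|\$_S\rangle$ corresponds to the uniform superposition over a uniformly random $(n'/2)$-dimensional subspace of this quotient. Redoing the spectral-radius calculation of the example after Theorem~\ref{thm:noconv} with $n'$ in place of $n$ yields the cloning bound
\[ F^2 \;\leq\; 2\,|\F|^{-n'/2} \;=\; 2\,|\F|^{-(d_1 - n/2)}, \]
which is negligible by the hypothesis that $|\F|^{d_1 - n/2}$ is exponential.

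The main obstacle is translating this cloning-fidelity bound into a bound on the hybrid winning probability, because in the hybrid the serial number $(\shO(T_0),\shO(T_1))$ is the same for every admissible $S$, so a forger need not produce $|\$_S\rangle^{\otimes 2}$ exactly; it suffices that both output registers project into $\Pi$. I would close this gap in two parts. First, producing any state in $\Pi$ without using the input $|\$_S\rangle$ reduces to unstructured search against the obfuscated membership tests, which succeeds with probability at most $|\F|^{d_1-n}$ per sample and requires $\sqrt{|\F|^{n-d_1}}$ Grover queries, both super-polynomially small under our hypotheses on $d_1$. Second, expanding $\Pi \otimes \Pi$ in the overcomplete frame $\{|\$_{S'}\rangle \otimes |\$_{S''}\rangle\}$ over admissible pairs $(S',S'')$ re-expresses the winning probability as a sum of matrix elements of $\Ms(|\$_S\rangle\langle\$_S|)$, each controlled by Theorem~\ref{thm:noconv} up to frame-normalization factors. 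Combining the two parts bounds the hybrid success, and hence by Step~1 also the real-game success, by a negligible quantity.
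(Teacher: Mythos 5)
Your overall architecture matches the paper's: hybrid to obfuscations of random $d_1$-dimensional superspaces $T_0\supseteq S$, $T_1\supseteq S^\bot$ via subspace-hiding security, then invoke the quantitative no-cloning bound, and your reduction of the admissible states to the quotient $\F^{n'}$ with $n'=2d_1-n$ is exactly the computation the paper does. The genuine gap is the one you yourself flag, and your two proposed patches do not close it. The paper avoids the gap entirely by a move you are missing: \emph{before} any hybrids, it uses the fact that Aaronson--Christiano verification is projective to replace the winning condition by ``the challenger projects onto $|\$_S\rangle\otimes|\$_S\rangle$ and accepts iff the projection succeeds,'' and it then holds this predicate \emph{fixed} through every subsequent hybrid. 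This is legitimate because in each $\shO$ reduction the distinguisher samples $S$ itself, so it can evaluate the projection onto $|\$_S\rangle^{\otimes 2}$ without consulting the (now modified) programs; only the adversary's \emph{input} changes across hybrids, never the success predicate. In the final hybrid the winning probability is therefore literally the cloning fidelity $F^2$ for the ensemble $\{|\$_S\rangle : T_1^\bot\subseteq S\subseteq T_0\}$, and Theorem~\ref{thm:noconv} applies directly with no residual argument.

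By contrast, you let the verification predicate track the new programs, so winning your hybrid only requires landing in $\Pi\otimes\Pi$, which is strictly weaker than cloning $|\$_S\rangle$. Your Grover-search patch addresses producing a state of $\Pi$ from scratch, but that is not the threat: the adversary holds $|\$_S\rangle$ and can keep it (already in $\Pi$) in one register while trying to manufacture \emph{any other} element of $\Pi$ in the second register, possibly consuming partial information from $|\$_S\rangle$ without cloning it; nothing in your sketch rules this out. Your frame-expansion patch also does not go through as stated: Theorem~\ref{thm:noconv} bounds only the expected overlap of $\Ms(|\psi_i\rangle\langle\psi_i|)$ with the \emph{designated} target $|\phi_i\rangle$, not matrix elements against other frame vectors $|\$_{S'}\rangle\otimes|\$_{S''}\rangle$, and the frame $\{|\$_{S'}\rangle\otimes|\$_{S''}\rangle\}$ is overcomplete and far from tight, so the ``frame-normalization factors'' you wave at can be large. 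The repair is not to strengthen these patches but to adopt the paper's formulation: fix the success predicate as the rank-one projection onto $|\$_S\rangle^{\otimes 2}$ from the outset, at which point your Step~2 spectral computation already finishes the proof.
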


\begin{corollary} If injective one-way functions and iO exist, then quantum money exists
\end{corollary}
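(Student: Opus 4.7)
The plan is to follow the two-phase outline sketched in the introduction. Suppose toward contradiction that some quantum polynomial-time adversary $A$ wins the quantum money game with non-negligible probability $\epsilon(\lambda)$. The first phase is a hybrid switch performed in two sub-steps: I replace $(\shO(S), \shO(S^\bot))$ by $(\shO(T_0), \shO(T_1))$, where $T_0$ is a uniformly random $d_1$-dimensional subspace of $\F^n$ containing $S$ and $T_1$ is a uniformly random $d_1$-dimensional subspace containing $S^\bot$. Each sub-step reduces to subspace-hiding security (Theorem~\ref{thm:sho}): the reduction samples $S$ itself (so it can construct $|\$_S\rangle$ and run the entire game), receives the challenge obfuscation from the $\shO$ challenger, runs $A$, and outputs $1$ iff $A$ wins. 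Indistinguishability preserves $A$'s winning probability up to a negligible term, so $A$ still wins non-negligibly in the final hybrid.

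In the final hybrid, I condition on the challenger's $(T_0, T_1)$. Then $S$ is uniformly distributed among the $n/2$-dimensional subspaces $H$ with $T_1^\bot \subseteq H \subseteq T_0$; there are $N' \approx |\F|^{(d_1 - n/2)^2}$ such subspaces, and a short Fourier-duality computation (using $T_0^\bot \subseteq T_1$) shows that the subspace $V$ of states passing verification against $(\shO(T_0), \shO(T_1))$ is spanned by $\{|\$_H\rangle\}$ for $H$ in the same range, with $\dim V = |\F|^{2d_1 - n}$. Since $d_1 - n/2$ is superlogarithmic by hypothesis, $N' \gg \dim V$, placing us in the regime where Theorem~\ref{thm:noconv} yields nontrivial no-cloning.

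I then apply Theorem~\ref{thm:noconv} with $\Ss_1 = \{|\$_S\rangle\}_S$, $\Ss_2 = \{|\$_S\rangle \otimes |\$_S\rangle\}_S$, and $\Ds$ uniform over valid $S$. Since any valid $S, T$ both contain $T_1^\bot$, writing $\dim(S \cap T) = (n - d_1) + \dim(S' \cap T')$ for the quotient subspaces $S', T' \subseteq T_0/T_1^\bot$ reduces $\Cm_{\Ss_1, \Ss_2, \Ds}$, up to a uniform factor $|\F|^{3n/2 - 3d_1}$, to the balanced cloning matrix of the worked example following Theorem~\ref{thm:noconv}, now on the quotient of dimension $n' = 2d_1 - n$ with target dimension $n'/2 = d_1 - n/2$. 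Copying that geometric-series bound gives $\lambda_1(\Cm) \leq 2|\F|^{3n/2 - 3d_1}$; combined with $d = \dim V = |\F|^{2d_1 - n}$ this yields $F^2 \leq 2|\F|^{-(d_1 - n/2)}$, which is negligible by the hypothesis that $|\F|^{d_1 - n/2}$ is exponential.

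The main technical obstacle is connecting $A$'s winning probability in the final hybrid to the cloning fidelity $F^2$ computed above. Winning only requires each output register to project onto $V$, whereas $F^2$ measures overlap with the specific product state $|\$_S\rangle \otimes |\$_S\rangle$; in the one-dimensional $V$ case (the original scheme) these coincide, but in the hybrid $V$ is large and a priori the adversary could succeed by producing any state in $V \otimes V$. The argument will exploit the symmetry of the $S$-ensemble inside $(T_0,T_1)$ together with the fact that $A$ realizes effectively a fixed polynomial-time channel on $|\$_S\rangle$ given the obfuscations: averaging the winning probability over the $N' \gg \dim V$ valid $S$'s should collapse it onto the diagonal cloning target, so that no-cloning via Theorem~\ref{thm:noconv} applies. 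Formalizing this averaging argument, together with a small padding adjustment so that Theorem~\ref{thm:noconv} literally applies to a channel from $V$ to $V\otimes V$, is the key technical step.
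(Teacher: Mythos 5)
Your high-level ingredients match the paper's proof of Theorem~\ref{thm:qmoney} (the $\shO$ hybrid replacing $S,S^\bot$ by random superspaces $T_0,T_1$, then the quantitative no-cloning bound on the ensemble of $S$ with $T_1^\bot\subseteq S\subseteq T_0$, with the same numerics $F^2\leq 2|\F|^{-(d_1-n/2)}$), but the order in which you perform the two steps creates a genuine gap, and it is exactly the step you yourself flag as ``the key technical step'' left unformalized. In your final hybrid the winning condition is that both output registers pass verification against $\shO(T_0),\shO(T_1)$, i.e.\ project onto $V\otimes V$, where $V$ is the $|\F|^{2d_1-n}$-dimensional span of the $|\$_H\rangle$. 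That condition is information-theoretically satisfiable without any cloning: a channel that discards its input and outputs $|v\rangle\otimes|v\rangle$ for some fixed $v\in V$ wins with probability $1$. Theorem~\ref{thm:noconv} is a statement about arbitrary CPTP maps, so no amount of averaging over the valid $S$'s can convert a bound on fidelity with the diagonal target $|\$_S\rangle\otimes|\$_S\rangle$ into a bound on the probability of landing in $V\otimes V$; the proposed ``collapse onto the diagonal'' fails for general channels, and rescuing it would require an additional computational argument that the proposal does not supply.

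The paper avoids this by fixing the acceptance test \emph{before} touching the obfuscations. Since Aaronson--Christiano verification with the honest programs is a rank-one projection onto $|\$_S\rangle$, hybrid $H_1$ replaces the two verifications by the challenger's explicit projective measurement onto $|\$_S\rangle\otimes|\$_S\rangle$; the challenger, and later the reduction to $\shO$ security, samples $S$ itself and so can perform this measurement. This test is then held fixed through the hybrids that swap the programs handed to the adversary from $S,S^\bot$ to $T_0,T_1$: only the adversary's input changes, never the success predicate. Consequently the success probability in the final hybrid is literally the cloning fidelity $F^2$ for the ensemble $\{|\$_S\rangle\}_{T_1^\bot\subseteq S\subseteq T_0}$, and Theorem~\ref{thm:noconv} applies directly, with no averaging argument needed. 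Reordering your proof in this way closes the gap; as written, it does not go through.
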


\noindent We now prove Theorem~\ref{thm:qmoney}

\begin{proof} We prove security through a sequence of hybrids
\begin{itemize}
	\item $H_0$ is the normal security experiment for quantum money.  Suppose the adversary, given a valid banknote, is able to produce two banknotes that pass verification with probability $\epsilon$.
	\item $H_1$: here, we recall that Aaronson and Christiano's scheme is \emph{projective}, so verification is equivalent to projecting onto the valid banknote state.  Verifying two states is equivalent to projecting onto the product of two banknote states.  Therefore, in $H_1$, instead of running verification, the challenger measures in the basis containing $|\$_S\rangle\times|\$_S\rangle$, and accepts if and only if the output is $|\$_S\rangle\times|\$_S\rangle$.  The adversary's success probability is still $\epsilon$.
	\item $H_2$: Here we invoke the security of $\shO$ to move $P_0$ to a higher-dimensional space.  $P_0$ is moved to a random $d_1$ dimensional space containing $S_0$.
	
	We prove that the adversary's success probability in $H_2$ is negligibly close to $\epsilon$.  Suppose not.  Then we construct an adversary $B$ that does the following.  $B$ chooses a random $d_0=n/2$-dimensional space $S_0$.  It queries the challenger on $S_0$, to obtain a program $P_0$.  It then obfuscates $S_0^\bot$ to obtain $P_1$.  $B$ constructs the quantum state $|\$_{S_0}\rangle$, and gives $P_0,P_1,|\$_{S_0}\rangle$ to $A$.  $A$ produces two (potentially entangled) quantum states $|\$_0\rangle|\$_1\rangle$.  $B$ measures in a basis containing $|\$_{S_0}\rangle\otimes|\$_{S_0}\rangle$, and outputs 1 if and only if $|\$_{S_0}\rangle\otimes|\$_{S_0}\rangle$.
	
	If $B$ is given $P_0$ which obfuscates $S_0$, then $A$ outputs 1 with probability $\epsilon$, since it perfectly simulates $A$'s view in $H_1$.  If $P_0$ obfuscates a random space containing $S_0$, then $B$ simulates $H_2$.  By the security of $\shO$, we must have that $B$ outputs 1 with probability at least $\epsilon-\negl$.  Therefore, in $H_2$, $A$ succeeds with probability $\epsilon-\negl$.
	
	\item $H_3$: Here we invoke security of $\shO$ to move $P_1$ to a random $d_1$-dimensional space containing $S_0^\bot$.  By an almost identical analysis to he above, we have that $A$ still succeeds with probability at least $\epsilon-\negl$.
	
	\item $H_4$.  Here, we change how the subspaces are constructed.  First, a random space $T_0$ of dimension $d_1$ is constructed.  Then a random space $T_1$ of dimension $d_1$ is constructed, subject to $T_0^\bot\subseteq T_1$.  These spaces are obfuscated using $\shO$ to get programs $P_0,P_1$.  Next, a random $n/2$-dimensional space $S_0$ is chosen such that $T_1^\bot\subseteq S_0\subseteq T_0$.  $S_0$ is used to construct the state $|\$_{S_0}\rangle$, which is given to $A$ along with $P_0,P_1$.  Then during verification, the space $S_0$ is used again.
	
	The distribution on spaces is identical to that in $H_3$, to $A$ succeeds in $H_4$ with probability $\epsilon-\negl$.  
\end{itemize}

Since on average over $T_0,T_1$, $A$ succeeds with probability $\epsilon-\negl$, there exist fixed $T_0,T_1, T_0^\bot\subseteq T_1$, such that the adversary succeeds for these $T_0,T_1$ with probability at least $\epsilon-\negl$.

We now construct a no-cloning adversary $C$.  $C$ is given a state $|\$_{S_0}\rangle$ for a random $S_0$ such that $T_1^\bot\subseteq S_0\subseteq T_0$, and it tries to clone $|\$_{S_0}\rangle$.  To do so, it constructs obfuscations $P_0,P_1$ of $T_0,T_1$ using $\shO$, and gives them along with $|\$_{S_0}\rangle$ to $A$.  $C$ then outputs whatever $A$ outputs.  $C$'s probability of cloning, $F^2$, is exactly the probability $A$ succeeds in $H_4$, which is $\epsilon-\negl$.

We therefore seek to bound $F^2$ for this instance of the cloning problem.  For simplicity, we will assume $T_0$ is the space of vectors where the last $n-d_1$ components are 0, and $T_1$ is the space where the first $n-d_1$ components are 0.  The other cases are handled analogously.  $T_1^\bot$ is the space where the final $d_1$ components are 0.  Therefore, $S$ is a random space of dimension $n/2$, subject to the last $n-d_1$ components being zero, and the first $n-d_1$ components are free.  We can therefore ignore the first and last $n-d_1$ components (since the final components are always 0, and the initial components will always be uniform superpositions).  Define $n'=2d_1-n$.  Therefore, we can think of choosing a random subset $W$ of dimension $d_1-n/2=n'/2$ out of a space of dimension $2d_1-n=n'$.   

Our states will be indexed by $W$, and we overload notation and write \[|\$_{W}\rangle=\frac{1}{|\F|^{n'/4}}\sum_{x\in W}|x\rangle\]

This is exactly the example worked out above, so the probability that $A$ succeeds is at most $2|\F|^{-n'/2}=2|\F|^{d_1-n/2}$.  By the assumptions of the theorem, this is exponentially small.  Hence $\epsilon-\negl$ is exponentially small, and therefore $\epsilon$ is negligible.\end{proof}

\subsection{A Signature Scheme}

We also note that the construction above gives rise to a particular signature scheme.  On input a message $m$, choose a random subspace $S$ containing $m$.  Construct the programs $P_0,P_1$ obfuscating $S,S^\bot$, using randomness derived from $S$ (say using a PRF).  Then sign the obfuscated programs using an arbitrary signature scheme to obtain $\sigma$.  Output $P_0,P_1,\sigma$ as the signature on $m$.  

If the underlying signature scheme satisfies the Boneh-Zhandry definition of security, then so does the derived signature scheme.  However, consider the following ``attack.''  Create a uniform superposition of messages, ask the signing oracle for a signature, and then measure the signature.  The result is a tuple $P_0,P_1,\sigma$ representing a subspace $S$, along with a (statistically close to) uniform superposition over all messages $m\in S$.  Using the quantum Fourier transform, it is possible to verify this state $S$ using $P_0,P_1$ as in the quantum money scheme.  In particular, it is possible to distinguish this state from the state obtained by measuring the entire message/tag pair.  This violates the Garg-Yuen-Zhandry security notion.  Thus, we give the first example separating the two definitions for signatures.

\subsection{A simplified Proof in the Black Box Setting}

Aaronson and Christiano~\cite{STOC:AarChr12} prove their scheme secure if the subspaces are given as quantum-accessible oracles.  In order to prove security they had to develop a new adversary method, called the inner-product adversary method.

Here, we show that our proof above can be adapted to give a much simpler proof of security for their scheme in the black-box setting.  

First, we going through the hybrids in Theorem~\ref{thm:sho}, we see that the transitions invoking $\iO$ (namely $H_0$ to $H_1$ and $H_2$ to $H_3$), we just use the fact that identical oracles are indistinguishable.  The only difference between $H_1$ and $H_2$ is that the function $\hat{P}$  goes from being all-zeros to accepting a single random input.  Thus, by the lower bound for Grover search~\cite{BBBV97}, we have that $H_1$ and $H_2$ are indistinguishable, except with probability at most $O(q^2/|\F|^{n-d_0})$, where $q$ is the number of quantum queries.  Finally, $H_3$ and $H_4$ are indistinguishable except with probability $1/|\F|^{n-d_0}$.

This shows that for a $d_0$-dimensional subspace $S$, an oracle for $S$ is indisitnguishable from an oracle for $T$ where $T$ is a random $d_0+1$-dimension space containing $S$.  By indistinguishable, we mean that a $q$ query algorithm has at most a probability $O(q^2 |\F|^{d_0-n})$ of distinguishing.  Now, the constant in the Big-Oh is independent of $d_0$.  Therefore, by performing a simple hybrid, we have that $S$ is indistinguishable from a random $T$ of dimension $d_1$ containing $S$, except with probability \[O\left(\sum_{i=d_0}^{d_1-1}q^2 |\F|^{i-n}\right)\leq O(q^2 |\F|^{d_1-1-n})\]

Next, we plug this result into the proof of Theorem~\ref{thm:qmoney}.  Suppose the adversary copies a quantum state with probability $\epsilon$.  The proof of Theorem~\ref{thm:qmoney} shows that $\epsilon$ is at most $O(q^2|\F|^{-(n+1-d_1)}+|\F|^{-(d_1-n/2)}$.  
	
For $|\F|=2$, this gives $\epsilon\leq O(q^2 2^{n-d_1}+2^{d_1-n/2})$.  The quantity on the right can be minimized by choosing the right value for $d_1$, namely $d_1=\frac{3n}{4}+\log_2(q)$.  This gives $\epsilon\leq O(2^{n/4}q)$.  In other words, we have that $q\geq \Omega(\epsilon 2^{-n/4})$.  Compare this to Aaronson and Christiano's result, which says that $q\geq \Omega(\sqrt{\epsilon}2^{-n/4})$.  
 Thus, our bound matches their bound for constant $\epsilon$, though is slightly worse for small $\epsilon$.  However, the advantage of our proof is its simplicity, only requiring the lower bound for Grover search, and a quantitative version of the no-cloning theorem.

\bibliographystyle{alpha}
\bibliography{abbrev0,crypto,bib}

\appendix

\end{document}